\tikzstyle{thickline} = [line width=1.8pt]
\tikzstyle{gl} = [draw, gray]
\tikzstyle{pl} = [draw, orange]
\tikzstyle{ml} = [draw, blue]
\tikzstyle{sett} = [draw, thick, purple]
\tikzstyle{ele} = [draw, thick, green!70!black]
\tikzset{room/.style={inner sep=2pt,minimum size=3pt, draw=black,fill=white,rectangle}}
\tikzset{agent/.style={inner sep=1pt,minimum size=2pt, draw=black,fill=white,circle}}
\newtheorem{theorem}{Theorem}[section]
\newtheorem{proposition}[theorem]{Proposition}
\newtheorem{lemma}[theorem]{Lemma}
\newtheorem{corollary}[theorem]{Corollary}
\newtheorem{definition}{Definition}
\newtheorem{observation}{Observation}[section]
\newtheorem{remark}[observation]{Remark}
\newtheorem*{example}{Example}
\DeclareMathOperator*{\argmax}{arg\,max}
\newcommand{\R}{\mathcal{R}}
\newcommand{\N}{\mathcal{N}}
\newcommand{\enn}{\hat{n}}
\newcommand{\emm}{\hat{m}}
\renewcommand{\emptyset}{\varnothing}
\newcommand{\opt}{\ensuremath{\mu^*}\xspace}
\newcommand{\xmark}{\ding{55}}
\newcommand{\USW}[1]{\ifstrempty{#1}{\textrm{\textup{SW}}}{#1\textrm{\textup{-SW{}}}}}
\newcommand{\HH}[1]{{\color{magenta}{HH: }{#1} \color{red}}}
\newcommand{\SR}[1]{{\color{blue!70!black}{SR:#1}}}
\newcommand{\SN}[1]{{\color{pink!50!red}{SN: }{#1} }}
\title{Strategyproof Matching of Roommates and Rooms}
\author[1]{Hadi Hosseini}
\author[2]{Shivika Narang}
\author[3]{Sanjukta Roy
}
\affil[1]{Pennsylvania State University}
\affil[2]{University of New South Wales}
\affil[3]{ISI Kolkata}
\affil[ ]{\texttt {hadi@psu.edu, s.narang@unsw.edu.au, sanjukta@isical.ac.in}}
\date{}
\begin{document}
\maketitle

\begin{abstract}
\noindent We initiate the study of matching roommates and rooms wherein the preferences of agents over other agents and rooms are complementary and represented by Leontief utilities. In this setting, $2n$ agents must be paired up and assigned to $n$ rooms. Each agent has cardinal valuations over the rooms as well as compatibility values over all other agents. Under Leontief preferences, an agent’s utility for a matching is the minimum of the two values. We focus on the tradeoff between maximizing utilitarian social welfare and strategyproofness.
Our main result shows that---in a stark contrast to the additive case---under binary Leontief utilities, there exist strategyproof mechanisms that maximize the social welfare. 
We further devise a strategyproof mechanism that implements such a welfare maximizing algorithm and is parameterized by the number of agents.
Along the way, we highlight several possibility and impossibility results, and give upper bounds and lower bounds for welfare with or without strategyproofness.
\end{abstract}

\section{Introduction}
The roommates matching problem is concerned with pairing up $2n$ agents (roommates) and assigning them to $n$ rooms.
This problem originates from the seminal paper by Gale and Shapley---and discussed by Knuth---as a generalization of the Stable Marriage problem \citep{gale1962college}. 
The goal is finding a stable disjoint pairing of agents such that no pair of agents would prefer each other to their current match.
It has inspired a large body of work on roommate matching \citep{IRVING1985577,knuth1976mariages, teo1998geometry, aziz2013stable},
three dimensional matching \citep{mckay2021three,BreCheFinNie2020-spscSM-jaamas,lam2019existence}, and coalition formation \citep{knuth1976mariages,DG1980Hedonic}.

Even though the problem was mainly inspired by the college dormitory assignment, the majority of the works on roommate matchings study stability where agents have preferences over other agents while being agnostic to their rooms. 
The 3-Dimensional extensions of this problem \citep{mckay2021three} primarily focus on finding stable coalitions of size three and do not consider preferences over different types of (often) complementary entities (e.g. roommates and rooms).
This is particularly an important issue in common online marketplaces for short and long-term stays such as Airbnb, Roomi\footnote{Roomi: https://roomiapp.com/}, and Roomsurf\footnote{Roomsurf: https://www.roomsurf.com/}, where agents can specify their preferences over rooms as well as roommates.

\paragraph{Preference Models.} 
Within three-dimensional matchings (as well as hedonic games), additivity of preferences is a common assumption. Additive separable preferences are derived from the summation of cardinal values that each agent assigns to every other agent \citep{mckay2021three,aziz2011optimal}.
Similarly, some recent work on matching roommates and rooms considers additive utilities over rooms and roommates \citep{chan2016assignment, huzhang2017online, li2019room, gan2020fair}.
Additive preferences model choice substitutes, i.e. receiving a less desirable item (e.g. a room) can be compensated by receiving a more desirable option (e.g. a roommate). 

While this is plausible when partitioning agents into coalitions, it does not always provide a sufficiently meaningful way to compare choices, particularly when outcomes involve \textit{complementary} alternatives. 
When matching roommates and rooms, for a person who values proximity to their classes, the utility derived from living with a close friend cannot necessarily \textit{substitute} for being assigned a room in a remote corner of the campus. 

\paragraph{Complementary Preferences via Leontief.} A compelling alternative model for capturing complementarity is the Leontief utility model, which is common in production economics \citep{leontief1965structure} as well as resource allocation and scheduling \citep{ghodsi2011dominant,parkes2015beyond,branzei2015characterization,dolev2012no}. Under Leontief utilities, different types of items or alternatives may be complements such as burgers and buns, pen and paper, someone's room and their roommate, etc. More broadly, in group projects, students are only able to produce good quality work if they get along with their partners and have some interest in their topic.  
In cloud computing, the allocation of computational resources (e.g. CPU and RAM), only one unit of a task requiring 2 CPUs and 3 GB of RAM can be completed with access to 4 CPUs and 4 GB of RAM (assuming indivisibility). 

In this paper, we consider the problem of matching roommates and rooms where the preferences of agents over other agents and rooms are given by cardinal values. 
The two fundamental notions in game theory and social choice are economic efficiency---measured by social welfare---and strategyproofness---no agent can obtain a more preferred outcome by manipulating its preferences.
%
%
Prior works in this space largely explored the objective of maximizing social welfare---as a desirable notion for economic efficiency---under additive utilities without considering strategyproofness.

We depart from prior works by (i) assuming that preferences are represented by Leontief utilities, and (ii) primarily focusing on designing strategyproof mechanisms that maximize the social welfare. 
\citet{gibbard1973manipulation} and 
\citet{satterthwaite1975strategy}'s seminal works showed that every strategyproof social choice function is either dictatorial or imposing (with more than two alternatives). 
Similarly, the incompatibility between strategyproofness and economic efficiency in several object/resource allocation settings \cite{zhou1990conjecture,hylland1979efficient} persisted, which motivated a variety of approaches to bypass the negative results by, for example, restricting the domain (e.g. preferences or outcomes) or sacrificing/weakening the efficiency requirements.
We study the interplay between strategyproofness and social welfare in the roommate-room matching setting.





\subsection{Our Contribution}
We focus on the problem of matching roommates and rooms when agents have arbitrary general valuations over each other and rooms, as well as when those valuations are restricted to binary values ($0/1$).
Our main results involve binary valuations, which give rise to complex preferences in this domains with multi-dimensional choices.
From the conceptual perspective, binary valuations represent practical applications in object or resource allocation (e.g. dichotomous preferences \cite{bogomolnaia2004random,bogomolnaia2005collective}) or voting theory (e.g. approval preferences \cite{elkind2023justifying,lu2024approval}). 
For instance, a student may approve/disapprove a potential roommate based on factors such as smoking habits; or she may have a value of zero for room options without dedicated bathrooms. 

From the theoretical perspective, binary valuations pose many technical challenges in achieving axiomatic results when dealing with strategyproofness  (e.g. stable matchings \cite{liu2023strategyproof}) and in developing computationally tractable algorithms (e.g. fair division \cite{halpern2020fair}).
%
Table~\ref{tab:results} summarizes our main results under Leontief utilities with general or binary valuations.

\begin{table}[t]
\crefname{lemma}{Lem.}{Lem.}
\crefname{theorem}{Thm.}{Thm.}
\crefname{proposition}{Prop.}{Prop.}
\crefname{corollary}{Cor.}{Cor.}
\centering
{
\begin{tabular}{@{}llllll@{}}
\toprule  
                & \multicolumn{2}{c}{\textbf{Leontief}}& \multicolumn{2}{c}{\textbf{Additive}}\\ 
                &  \textbf{General}            & \textbf{Binary}        & \textbf{General}              & \textbf{Binary}                        \\ \midrule
 \textbf{\USW{}}  &  APX-h$^\dagger$ (\cref{thm:SW-hardL})    & APX-h$^\dagger$ (\cref{thm:SW-hardL}) & NP-h$^{*}$ & NP-h$^{*}$  \\  
\textbf{SP Upper Bound} & \xmark{} (\cref{thm:no-alpha-approx}) & \textbf{1  (\cref{thm:spmech2})} &  \xmark{} (\cref{thm:no-alpha-approx}) & $\sfrac{2}{3}$ (\cref{thm:addsp23}) \\
\textbf{SP (Polytime)}  & \xmark{} (\cref{thm:no-alpha-approx}) & $\sfrac{1}{3}$ (\cref{thm:greedy}) & \xmark{} (\cref{thm:no-alpha-approx}) & $\sfrac{1}{7}$  (\cref{thm:sdadd}) \\
\bottomrule
\end{tabular}%
}

\caption{Summary of the tradeoff between social welfare (\USW{}) and strategyproofness (SP). Here \xmark{} denotes that no strategyproof  mechanism can give any non-zero approximation of \USW{}. APX-h  and NP-h denote APX and NP-hardness, resp. $^*$ denotes a result by \cite{chan2016assignment}  and $^{\dagger}$ indicates bounds that hold even for binary and symmetric valuations. 
}
\label{tab:results}
\end{table}

\paragraph{Social Welfare.}
We investigate various approaches for finding maximal matchings in the roommate matching problems, and analyze their welfare bounds. We develop a greedy algorithm for finding a maximal matching and show that its satisfies \USW{\sfrac{1}{3}}, resembling the same bound for maximal 3-dimensional matching.
We then show that under Leontief utilities, finding a social welfare (\USW{}) maximizing matching is APX-hard even when the valuations are binary and symmetric (\cref{thm:SW-hardL}).\footnote{Preferences are symmetric if Alice likes Bob if and only if that Bob likes Alice.}  
We then show that there exists an $0.559$-approximation algorithm for any type of utility function (\cref{prop:setpackbound}).

\paragraph{Strategyproof Mechanisms.}
When requiring strategyproofness, we show that no strategyproof mechanism can satisfy any approximation of social welfare, with arbitrary valuations over rooms or roommates. This negative result holds for both Leontief and additive utilities (\cref{thm:no-alpha-approx}), and raises the question of whether it is possible to escape the impossibility by restricting the valuation domain. 
For additive utilities with binary valuations, we prove an upper bound on the social welfare of any strategyproof mechanism. In particular, in \cref{thm:addsp23} and \cref{cor:addspsymm}, we show that no strategyproof mechanism can satisfy \USW{$\alpha$} for any $\alpha > 2/3$ for binary but not symmetric preferences  or $\alpha > 3/4$ for binary and symmetric preferences (\cref{thm:addsp23}).

Our main result shows that---in stark contrast with its additive counterpart---under Leontief utilities with binary valuations, there exists a strategyproof mechanism that maximizes social welfare (\cref{prop:spmech1}). 
We develop a novel mechanism that repeatedly reduces the welfare set by `shrinking' the potential strategy space. We show that this mechanism runs in time $O^*(n^{2n})$.\footnote{The $O^*$ notation hides factors polynomial in input size.}

\paragraph{Computation and Algorithms.}
Our welfare maximizing strategyproof mechanism runs in exponential time; given our intractability result in \cref{thm:SW-hardL}, no polynomial-time mechanism can be expected, unless P=NP.
Nonetheless, we develop a Fixed Parameter Tractable (FPT) algorithm that implements our strategyproof mechanism while maximizing the social welfare. This algorithm is parameterized by the number of agents which runs in time $O^*(c^n)$ where $c>0$ is a constant.\footnote{FPT parameterized by $k$ implies it runs in time $f(k)n^{O(1)}$ for some computable function $f$.}

Finally, we explore the welfare bound of various strategyproof mechanisms that run in polynomial time.
In particular, we show that $\sfrac{1}{3}$-approximation on welfare can be achieved by a strategyproof mechanism for binary Leontief utilities (\cref{thm:greedy}). We also give a polytime strategyproof mechanism for binary additive utilities which is \USW{$\sfrac{1}{7}$} in general (\cref{thm:sdadd}) and \USW{$\sfrac{1}{6}$} when agent preferences are symmetric (\cref{cor:sdstarsymm}).

\subsection{Related Work}


Matchings have been studied in a variety of models and with several different objectives. We now give a brief overview of some basics of matching models and defer a comprehensive discussion to \cref{app:relwork}.
%
In the house allocation problem each agent is matched to a house and agents have preferences over the houses. Economic efficiency and strategyproofness has been studied separately \citep{green1979incentives,hylland1979efficient} and together~\cite{krysta2019size}. Unlike our problem, a deterministic serial dictatorship mechanism easily gives better than $2$-approximation for maximum size house allocation. Thus, randomized variations of serial dictatorship have been instrumental in achieving truthful behavior for pareto optimality and envy-freeness \citep{krysta2019size,filos2014social}.
When matching pairs of agents and each side has preference over the other, no strategyproof mechanism exists that produces a stable solution~\citep{roth1982economics}, although finding one stable matching is easy~\citep{gale1962college}. 

The same results hold for
the classical roommate matching problem where pairs of agents are assigned to a rooms, agents have preference over other agents, but are indifferent between rooms \citep{knuth1976mariages, teo1998geometry, aziz2013stable,IRVING1985577}. 
Settings where triples of agents, instead of pairs, need to be matched have also been well studied  \citep{mckay2021three,lam2019existence,BreCheFinNie2020-spscSM-jaamas,DBLP:conf/esa/0001R22}. Maximum welfare and even achieving stability is NP-hard in these settings~\citep{DBLP:conf/esa/0001R22} when each agent has preference over at least a subset of agents (if not all of them). In contrast, in our setting, rooms have no preferences/priorities over the agents. 

Our setting where pairs of agents have to be matched to rooms has been  studied only very recently. Amongst this work, \citet{li2019room, chan2016assignment, huzhang2017online} all consider maximizing social welfare under different settings. None of the work in this space considers any incentives to manipulate. Moreover, they consider settings with only additive utilities.
Leontief utilities have been widely studied by economists and game theorists, particularly in the context of markets and auctions. 
These utilities are useful in capturing complementary preferences. Beyond auctions, they have also been studied in settings like resource allocation \citep{parkes2015beyond,ghodsi2011dominant,brandt2024coordinating} and fair division \citep{branzei2015characterization,bogomolnaia2023guarantees}.
\section{Model}
\label{sec:prelim}
We study a roommate matching model where agents have preferences over  both  their roommate and the room. Our model involves a set  $\N$ of $2n$ agents and a set $\R$ of $n$ rooms. Agent $i\in \N$ has \emph{valuation functions} $v_i$ and $\widehat{v}_i$, where  $v_i:\N \setminus \{i\} \rightarrow \mathbb{R}_{\geq0}$ denotes the ``compatibility values" for the other agents and $\widehat{v}_i: \R \rightarrow \mathbb{R}_{\geq0}$ denotes the values for rooms. Under \emph{binary valuations}, $v_i$ and $\widehat{v}_i$ values are either $0$ or $1$ for each agent $i$. \emph{Symmetric valuations} are where $v_i(j)=v_j(i)$ for each pair of agents $i$ and $j$. 

We denote an instance of roommate matching by the tuple  $I = \langle \N, \R, v,\widehat{v} \rangle$ where $v$ is the vector of compatibility values $(v_1,v_2, \dots, v_{2n})$ and $\widehat{v}$ is the vector of valuations for the rooms $(\widehat{v}_1, \widehat{v}_2, \dots, \widehat{v}_{2n})$. 
We assume that each room is matched to exactly two agents. Formally, for any distinct $i,j \in \N$ and $r\in \R$, we call $(i,j,r)$ a \emph{triple} and we define a roommate matching as follows.

\begin{definition}[Roommate Matching]

A roommate matching $\mu\subseteq \N\times \N \times \R $ is such that each agent or room $a \in \N \cup \R$ is contained in exactly one element (triple) of $\mu$.
\end{definition}

\paragraph{Leontief and Additive Utilities.}
For the ease of presentation, we use  $v_i(\mu)$ and $\widehat{v}_i(\mu)$, respectively, to denote the value of agent $i\in \N$ from their assigned roommate and room under $\mu$, respectively. 
Given a roommate matching $\mu$, we define the \emph{utility} of an agent $i \in \N$ for $\mu$ to be $u_i(\mu) = f(v_i(\mu),\widehat{v}_i(\mu))$ for some computable function $f$ that we call a utility function.
We shall use $u_i(j,r)$ for $i, j \in \N$ and $r\in \R$ to denote the utility of $i$ for being matched to agent $j$ and room $r$. In this paper, the utility function $f$ will be either Leontief or additive\footnote{Some of our results hold for any polynomial-time computable function $f$, as will be stated in the remarks.}. Of the two, the more well-studied one is \textbf{additive utilities} where $u_i(\mu)=\widehat{v}_i(\mu)+v_i(\mu)$. The other is the natural model of \textbf{Leontief utilities} where $u_i(\mu)=\min \{\widehat{v}_i(\mu),v_i(\mu)\}$ (for example, see \citep{parkes2015beyond}). 

\paragraph{Mechanism.} Given an instance  $I=\langle \N, \R, v,\widehat{v} \rangle$, a mechanism $M$ 
uses the valuations $v$ and $\widehat{v}$ reported by the agents and implements a specific algorithm to output a matching $\mu$, written as  $\mu = M(\langle \N, \R, v,\widehat{v} \rangle)$. We distinguish between a mechanism and the algorithm it deploys only when reasoning about agents' incentives to misreport their valuations.
The aim of all the mechanisms we explore is to match agents and rooms in an efficient manner.\footnote{We assume complete matchings with no outside options.} 

\paragraph{Social Welfare.} Given an instance  $I =\langle \N, \R, v,\widehat{v} \rangle$, the \emph{social welfare} (\USW{}) of a matching $\mu$ is defined as the sum of the utilities of all agents, that is $\USW{}(\mu)= \sum_{i\in \N}u_i(\mu)$. A mechanism is social welfare maximizing if it produces a matching with maximum social welfare. Typically, we shall use $\mu^*$ to denote a maximum welfare matching. We say a matching has welfare $\USW{\alpha}$ if its social welfare is at least $\alpha$ fraction of $\USW{}(\mu^*)$ for some $0\leq \alpha \leq 1$. We consider the problem of maximizing \USW{}, both independently and in conjunction with strategyproofness.  

\paragraph{Incentives and Strategyproofness.} We say an agent $i\in \N$ \emph{misreports} their valuation if  their reported valuation $v_i'$ (resp. $\widehat{v}_i'$) is not their true valuation $v_i$ (resp. $\widehat{v}_i)$. For a fixed $i\in \N$, we refer to the set of compatibility valuations of all other agents $\N \setminus \{i\}$ as $v_{-i}$ and the room valuations analogously as $\widehat{v}_{-i}$. Given a mechanism $M$, an agent has an \emph{incentive to misreport}  if their utility $u_i(\mu)$ is strictly less than $u_i(\mu')$ where $\mu$ and $\mu'$ are the output of $M$ for reported valuations $(v,\widehat{v})$  and $(v_i',v_{-i}, \widehat{v}_i', \widehat{v}_{-i})$, respectively. We say that a mechanism is \emph{strategyproof} (SP) if no agent has an incentive to misreport their preferences. 

\begin{definition}[Strategyproofness]
    A roommate matching mechanism $M$ is strategyproof, if for any instance $I=\langle \N,\R, v, \widehat{v}\rangle$ 
    and any agent $i\in \N$, it holds that
    $u_i(\mu)\geq\max_{\widehat{v}_i',v_i'} u_i(\mu')$ 
     for any choice of $v_i'$ and $\widehat{v}'_i$ where $\mu = M(\N, \R, v, \widehat{v})$ and $
     \mu' = M(\N, \R, \{v_i', v_{-i} \}, \{\widehat{v}_i', \widehat{v}_{-i}\})$.
\end{definition}

Our main aim is to develop roommate matching mechanisms that maximize social welfare and are strategyproof under Leontief utilities. We also contrast these results with the case of additive utilities. 
%
%
 When we study SP mechanisms under binary (or symmetric) valuations, we will assume that the {\em mechanism is aware} that valuations are binary (resp. symmetric), so an agent can only misreport with other binary (resp. symmetric) valuation functions. 



\section{Social Welfare Guarantees}\label{sec:swapprox}
 
\noindent We first explore social welfare maximization alone, without considering incentives to manipulate. \citet{chan2016assignment} show that maximizing social welfare under additive utilities is NP-hard and give a \USW{$\sfrac{2}{3}$} algorithm. This algorithm does not extend beyond additive utilities and can result in matchings that are \USW{0} for Leontief utilities (see the example depicted in \cref{fig:ChanCounter} in \cref{app:maxwelfare}.) 
We begin by analyzing the social welfare guarantees provided by maximal matchings.


\subsection{Maximal Matchings}


Maximal matchings provide non-trivial approximations to \USW{} in two-dimensional settings. Typically, such matchings are \USW{$\sfrac{1}{2}$}. Prior work in the roommate matching setting has not considered maximal matchings or even how to generate them. We now discuss different ways of building maximal roommate matchings and the guarantees they provide on \USW{}. 

First, observe that a  maximal matching constructed by arbitrarily choosing \emph{any} triple, irrespective of the agents' preferences could lead to a matching with \USW{0} approximation. 
We shall consider two ways of generating maximal matchings: i) From the preference graph, and ii) from a given set of triples (usually having some specific property, e.g. non-zero \USW{}).

\subsubsection{A Naive Maximal Matching from Preference Graph} Given an instance $I=\langle \N, \R, v, \widehat{v}\rangle$ with {\em binary valuations}, the preference graph $G_I=(X,E)$ is a directed graph with an agent vertex $x_i$ for each agent $i\in \N$ and a room vertex $x_r$ each room $r\in \R$, i.e., $X = \N \cup \R$. There is a directed edge from an agent vertex $x_i$ to a vertex $x\in X$ if agent $i$ has a non-zero value for the agent/room corresponding to vertex $x$ in $I$.

We can generate a \textbf{naive maximal matching} from the preference graph $G_I$ as follows: Pick an edge from $G_I$ and add it to the matching. If both endpoints of the edge are agents, remove both agents' outgoing and incoming edges to/from other agents. Else, if exactly one endpoint is an agent (the other a room), remove all outgoing edges of the agent to any room. If selected edge is incident to an edge selected previously, match the agents and room and remove all incoming and outgoing edges to/from the triple.  Once no edges remain in $G_I$, complete the remaining  matched edges to triples arbitrarily.

\paragraph{Leontief Utilities.} We first consider the naive maximal matching approach for Leontief utilities. While this can be arbitrarily bad for non-symmetric valuations, we provide a simple tweak to get a maximal matching that is \USW{$\sfrac{1}{6}$} under binary symmetric Leontief utilites.

\begin{restatable}{proposition}{maximalunbounded}\label{prop:maximal-unbounded}
    A naive maximal matching cannot achieve better than $\USW{\alpha}$ for any $\alpha > 0$ under Leontief utilities. 
\end{restatable}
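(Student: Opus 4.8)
The plan is to exhibit a family of instances on which the naive maximal matching can be forced to produce social welfare arbitrarily small relative to the optimum, while the optimum itself stays bounded away from zero. Since the claim is that no positive approximation ratio $\alpha$ is achievable, it suffices to construct, for the adversarial tie-breaking/edge-selection order allowed by the procedure, a single instance (or a parameterized family) where the optimal matching has strictly positive welfare but the naive matching has welfare $0$; that already rules out every $\alpha > 0$.

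First I would recall exactly how the naive procedure can go wrong under Leontief utilities. The key mismatch is that the preference graph $G_I$ records only that agent $i$ has \emph{nonzero} value for some vertex, but a Leontief utility $u_i(\mu)=\min\{v_i(\mu),\widehat v_i(\mu)\}$ is positive only when agent $i$ simultaneously likes \emph{both} its roommate and its room. So an edge of $G_I$ from an agent to a compatible roommate carries no guarantee that, once that pair is later completed to a triple with ``an arbitrary room,'' either agent derives positive utility. The plan is to exploit precisely this: design agents whose compatibility edges point to partners they like, but arrange the rooms so that the only rooms left available when the pair is completed are rooms both agents value at $0$.

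Concretely, I would take two agents $1,2$ who like each other (so there is an agent-agent edge $x_1 x_2$ in $G_I$) but such that agent $1$ values only room $r_1$ and agent $2$ values only room $r_2$, with $r_1\neq r_2$. The procedure is permitted to pick the edge $x_1x_2$ first; this matches the pair and, per the stated rule, deletes their outgoing room edges, so when the pair is ``completed to a triple arbitrarily'' it may be assigned, say, room $r_3$ that neither values, giving both agents Leontief utility $\min\{1,0\}=0$. Meanwhile the true optimum would instead pair each of these agents (together with suitable extra agents liking the right rooms) into triples where the min is $1$, yielding strictly positive welfare. Padding with additional agent pairs arranged the same way lets the naive welfare stay exactly $0$ while $\textrm{\textup{SW}}(\mu^*)$ grows, so the ratio is $0$ and no $\alpha>0$ bound can hold.

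The main obstacle I anticipate is not the utility computation but the \emph{adversarial correctness} of the construction: I must verify that the deletion rules of the naive procedure genuinely \emph{allow} the bad completion on my instance — i.e., that after the pair $x_1x_2$ is selected the procedure really is free to assign a zero-valued room — and that the instance is a legitimate roommate-matching instance with $2n$ agents and $n$ rooms in which a positive-welfare matching provably exists. I would therefore present the instance explicitly (smallest case $n=2$ or a clean parameterized family), specify one admissible run of the procedure that yields welfare $0$, and separately exhibit one concrete matching with welfare bounded below by a positive constant, so that the ratio $\textrm{\textup{SW}}(\mu_{\text{naive}})/\textrm{\textup{SW}}(\mu^*)$ is $0$ and the ``cannot achieve better than $\USW{\alpha}$ for any $\alpha>0$'' conclusion follows immediately.
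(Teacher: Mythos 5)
Your overall strategy is the right one and matches the paper's: exhibit a single instance on which some admissible run of the naive procedure yields social welfare $0$ while the optimum is strictly positive, which rules out every $\alpha>0$ at once. Your high-level diagnosis is also correct — an edge of $G_I$ only certifies a nonzero value for one coordinate of the Leontief minimum. However, your specific gadget rests on a misreading of the deletion rule, and this is a genuine gap rather than a presentational issue. When the procedure selects an agent--agent edge, it removes only the two agents' edges \emph{to and from other agents}; their outgoing edges to rooms remain in $G_I$. Since the procedure runs until no edges remain, the edges $(x_1,r_1)$ and $(x_2,r_2)$ in your construction will eventually be processed, and picking, say, $(x_1,r_1)$ is incident to the previously selected edge $(x_1,x_2)$, which completes the triple as $(x_1,x_2,r_1)$ and gives agent $1$ Leontief utility $1$. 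So the ``arbitrary completion with a room neither agent values'' that your argument needs is not actually reachable on your instance: the fall-back completion step only applies to pairs whose room edges have all been deleted, which in turn requires $r_1$ and $r_2$ to be consumed by \emph{other} triples first. Your sketch does not arrange this, and arranging it while keeping the naive welfare at exactly $0$ (as you claim via ``padding'') is precisely the part that needs a different mechanism.

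The paper avoids this entirely by making the bad triple worthless \emph{on its own}, with no appeal to the arbitrary-completion step: it forms $(i,j,r)$ from two edges with different outgoing agents, namely $i\to j$ and $j\to r$, where $i$ likes $j$ but not $r$ and $j$ likes $r$ but not $i$. Then both agents' Leontief utilities are $\min\{1,0\}=0$ immediately, and the three elements $i$, $j$, $r$ each sit in a distinct triangle of $\mu^*$, so one wrong triple destroys six units of optimal welfare while the leftover agents value nothing else (see the paper's six-agent, three-room instance). If you want to salvage your route, you would have to either (i) add gadgets that consume $r_1$ and $r_2$ in zero-welfare triples — which, on inspection, forces you back to the paper's asymmetric-likes trick — or (ii) confront the ambiguity of which triple is formed when an agent--room edge is incident to two previously selected edges. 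As written, the instance you propose does not admit a run of the procedure with welfare $0$, so the conclusion does not follow.
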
    

\begin{proof}
    Given a roommate matching instance with binary Leontief utilities $I=\langle \N, \R, v,\widehat{v}\rangle$, let $\mu$ denote a maximal matching obtained by the naive algorithm.
    Observe that it takes two pairs  from the preference graph to match each triple in $\mu$. The choice of one triple can affect at most three optimally matched triples. Formally, let $\mu^*$ be a roommate matching  with maximum $\USW{}$. Let $(i,j,r)$ be a triple matched in the maximal matching $\mu$.  

    Now if the edges selected by the naive algorithm have different outgoing agents  (for example $(i,j)$ and $(j,r)$, as in \cref{fig:maximal-leon}), then both $i$ and $j$ receive utility $0$ under $\mu$.
    That is, if the matched triple is $(i,j,r)$, where $i$ likes $j$ but not $r$ and $j$ likes $r$ but not $i$, both of them get utility $0$ from the triple under $\mu$. 
    Moreover,
    in $\mu^*$, the agents $i$, $j$ and the room $r$ could be matched in three different triples such that each of the six agents matched in the three triples receive utility $1$ under $\mu^*$.
    Further, the remaining four agents (those matched to $i$, $j$ and $r$ under $\mu^*$) may not like any other agent-room pair and also get utility $0$ under $\mu$. We show one such instance in \cref{fig:maximal-leon}. Thus, a maximal matching generated from the preference graph would be \USW{0}.
\end{proof}

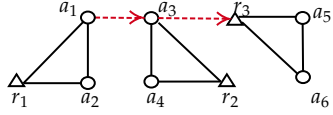
\begin{figure}
    \centering

 \tikzset{every picture/.style={line width=0.75pt}} 

\begin{tikzpicture}[x=0.75pt,y=0.75pt,yscale=-1,xscale=1]

\draw   (13,44) -- (16.83,50) -- (9.18,50) -- cycle ;
\draw   (154.02,45.44) .. controls (154.02,47.1) and (155.58,48.45) .. (157.5,48.45) .. controls (159.42,48.45) and (160.98,47.1) .. (160.98,45.44) .. controls (160.98,43.78) and (159.42,42.43) .. (157.5,42.43) .. controls (155.58,42.43) and (154.02,43.78) .. (154.02,45.44) -- cycle ;
\draw   (153.82,15.58) .. controls (153.82,17.25) and (155.38,18.59) .. (157.3,18.59) .. controls (159.22,18.59) and (160.78,17.25) .. (160.78,15.58) .. controls (160.78,13.92) and (159.22,12.57) .. (157.3,12.57) .. controls (155.38,12.57) and (153.82,13.92) .. (153.82,15.58) -- cycle ;
\draw   (77.5,47.7) .. controls (77.5,49.37) and (79.06,50.71) .. (80.98,50.71) .. controls (82.9,50.71) and (84.46,49.37) .. (84.46,47.7) .. controls (84.46,46.04) and (82.9,44.69) .. (80.98,44.69) .. controls (79.06,44.69) and (77.5,46.04) .. (77.5,47.7) -- cycle ;
\draw   (77.5,15.64) .. controls (77.5,17.31) and (79.06,18.65) .. (80.98,18.65) .. controls (82.9,18.65) and (84.46,17.31) .. (84.46,15.64) .. controls (84.46,13.98) and (82.9,12.63) .. (80.98,12.63) .. controls (79.06,12.63) and (77.5,13.98) .. (77.5,15.64) -- cycle ;
\draw   (123.19,12.99) -- (127.02,19.01) -- (119.36,19.01) -- cycle ;
\draw   (119.07,44) -- (122.9,50) -- (115.25,50) -- cycle ;
\draw    (16.5,48.5) -- (45.5,48.5) ;
\draw    (155,44) -- (126.7,17.7) ;
\draw    (153.82,15) -- (125,15) ;
\draw    (80.98,44.69) -- (80.98,18.65) ;
\draw    (84.6,18) -- (116,47.67) ;
\draw    (157.5,42.43) -- (157.3,18.59) ;
\draw    (16.5,48.5) -- (47,17.2) ;
\draw    (84.46,47.7) -- (116,47.67) ;
\draw [color={rgb, 255:red, 208; green, 2; blue, 27 }  ,draw opacity=1 ] [dash pattern={on 2pt off 1pt}] (52.46,15.38)  --  (77.5,15.64);
\draw [shift={(76,15.36)}, rotate = 180.65] [color={rgb, 255:red, 208; green, 2; blue, 27 }  ,draw opacity=1 ][line width=0.75]    (5.93,-3.29) .. controls (3.95,-1.4) and (1.31,-0.3) .. (0,0) .. controls (1.31,0.3) and (3.95,1.4) .. (5.93,3.29)   ;
\draw [color={rgb, 255:red, 208; green, 2; blue, 27 }  ,draw opacity=1 ] [dash pattern={on 2pt off 1pt}]  (84.46,15.64) -- (119,15.98) ;
\draw [shift={(120,16)}, rotate = 180.56] [color={rgb, 255:red, 208; green, 2; blue, 27 }  ,draw opacity=1 ][line width=0.75]    (5.93,-3.29) .. controls (3.95,-1.4) and (1.31,-0.3) .. (0,0) .. controls (1.31,0.3) and (3.95,1.4) .. (5.93,3.29)   ;
\draw   (45.5,48.42) .. controls (45.5,50.08) and (47.06,51.43) .. (48.98,51.43) .. controls (50.9,51.43) and (52.46,50.08) .. (52.46,48.42) .. controls (52.46,46.76) and (50.9,45.41) .. (48.98,45.41) .. controls (47.06,45.41) and (45.5,46.76) .. (45.5,48.42) -- cycle ;
\draw   (45.5,15.36) .. controls (45.5,17.02) and (47.06,18.37) .. (48.98,18.37) .. controls (50.9,18.37) and (52.46,17.02) .. (52.46,15.36) .. controls (52.46,13.7) and (50.9,12.35) .. (48.98,12.35) .. controls (47.06,12.35) and (45.5,13.7) .. (45.5,15.36) -- cycle ;
\draw    (48.98,45.41) -- (48.98,18.37) ;

\draw (33,6) node [anchor=north west][inner sep=0.75pt]  [font=\scriptsize]  {$a_{1}$};
\draw (44,52) node [anchor=north west][inner sep=0.75pt]  [font=\scriptsize]  {$a_{2}$};
\draw (9,52) node [anchor=north west][inner sep=0.75pt]  [font=\scriptsize]  {$r_{1}$};
\draw (115,52) node [anchor=north west][inner sep=0.75pt]  [font=\scriptsize]  {$r_{2}$};
\draw (121,6) node [anchor=north west][inner sep=0.75pt]  [font=\scriptsize]  {$r_{3}$};
\draw (160,52) node [anchor=north west][inner sep=0.75pt]  [font=\scriptsize]  {$a_{6}$};
\draw (161,9) node [anchor=north west][inner sep=0.75pt]  [font=\scriptsize]  {$a_{5}$};
\draw (77,52) node [anchor=north west][inner sep=0.75pt]  [font=\scriptsize]  {$a_{4}$};
\draw (83,6) node [anchor=north west][inner sep=0.75pt]  [font=\scriptsize]  {$a_{3}$};

\end{tikzpicture}

    \caption{An instance showing that a maximal matching cannot guarantee \USW{$\alpha$} for any $\alpha > 0$  under binary Leontief utilities. The optimal matching is shown in solid black edges and the wrong choice of triples is shown in dashed red edges.}
    \label{fig:maximal-leon}
\end{figure}

Observe that in the example shown in \cref{fig:maximal-leon}, if all agent preferences were symmetric, the naive maximal matching algorithm would be \USW{$\sfrac{1}{6}$}. Unfortunately, this is not true in general.  The naive maximal matching algorithm is \USW{$0$} even for binary symmetric utilities. This is illustrated in the example depicted in \cref{fig:maximal-add}. 

As shown by this example, the problem arises if triples are formed by picking two agent-room edges. We find that if we simply tweak the naive maximal matching to first pick agent-agent edges, we can ensure \USW{$\sfrac{1}{6}$} for binary symmetric Leontief utilities.  

\begin{proposition}
    There is a maximal matching algorithm that is \USW{$\sfrac{1}{6}$} for binary symmetric Leontief utilities .
\end{proposition}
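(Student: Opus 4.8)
The plan is to realize the ``pick agent--agent edges first'' fix as a maximal matching in the $3$-uniform hypergraph of \emph{valuable triples}, and then to obtain the constant $6$ by multiplying a factor-$2$ welfare bound with the factor-$3$ guarantee of hypergraph maximality. Call a triple $(i,j,r)$ \emph{valuable} if it yields positive welfare, i.e.\ $u_i(j,r)+u_j(i,r)\ge 1$; under binary Leontief utilities this forces $v_i(j)=v_j(i)=1$ (so $i$ and $j$ are mutually compatible, using symmetry) together with $\widehat{v}_i(r)=1$ or $\widehat{v}_j(r)=1$. The algorithm greedily selects a maximal set $\mathcal{T}$ of pairwise vertex-disjoint valuable triples (treating each agent and each room as a vertex of $\N\cup\R$), and then completes $\mathcal{T}$ to a full roommate matching by pairing the leftover $2n-2|\mathcal{T}|$ agents into the $n-|\mathcal{T}|$ leftover rooms arbitrarily. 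Since every triple of $\mathcal{T}$ contributes at least $1$ and welfare is a sum of non-negative terms, the output $\mu$ satisfies $\USW{}(\mu)\ge |\mathcal{T}|$. This is a legitimate maximal-matching algorithm of the ``from a set of triples'' type, and insisting that every selected triple contain a genuine compatible pair is exactly the tweak that rules out the bad configuration of \cref{prop:maximal-unbounded}.

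First I would bound the optimum. Let $p$ be the number of valuable triples used by $\mu^*$. Because utilities are binary Leontief and preferences are symmetric, a triple can carry positive welfare only when its two agents are mutually compatible, and then its welfare equals $\widehat{v}_i(r)+\widehat{v}_j(r)\le 2$. Hence $\USW{}(\mu^*)\le 2p$.

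Next I would relate $|\mathcal{T}|$ to $p$ via maximality. The $p$ valuable triples of $\mu^*$ are pairwise vertex-disjoint (they come from a matching) and are themselves valuable, so by maximality of $\mathcal{T}$ each of them shares at least one vertex of $\N\cup\R$ with some triple of $\mathcal{T}$. Each triple of $\mathcal{T}$ has exactly three vertices, and each such vertex lies in at most one of the disjoint optimal triples; therefore each triple of $\mathcal{T}$ can be ``blamed'' by at most three optimal valuable triples, giving $p\le 3|\mathcal{T}|$. Chaining the inequalities yields $\USW{}(\mu)\ge |\mathcal{T}|\ge p/3\ge \USW{}(\mu^*)/6$, i.e.\ the algorithm is \USW{$\sfrac{1}{6}$}.

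The main obstacle is conceptual rather than computational: pinning the constant at exactly $6$ relies on the two losses compounding cleanly---a factor $2$ because a single valuable triple can carry welfare $2$ while the analysis credits it with only $1$, and a factor $3$ because a $3$-uniform maximal matching can be intersected by up to three optimal triples per selected triple. Symmetry is essential in the first step: it is what guarantees that positive welfare forces a genuine mutually compatible pair, so that ``committing an agent--agent edge in every triple'' never squanders a room on an incompatible pair---the very failure mode that keeps the untweaked naive matching at \USW{$0$} even under symmetric valuations. I would finish by verifying the routine points: that a maximal $\mathcal{T}$ always completes to a valid roommate matching (the leftover agent and room counts match), and that the greedy tie-breaking cannot terminate with an extendable $\mathcal{T}$.
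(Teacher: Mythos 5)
Your proposal is correct and follows essentially the same route as the paper: selecting only triples whose two agents are mutually compatible (forced by symmetry whenever the triple has positive welfare), guaranteeing welfare at least $1$ per selected triple, and charging each selected triple against at most three vertex-disjoint optimal triples each worth at most $2$, yielding the $\sfrac{1}{6}$ bound. Your write-up is in fact somewhat more rigorous than the paper's (which argues informally about ``disrupting'' triples), since you make the counting explicit via the bounds $\USW{}(\mu^*)\le 2p$ and $p\le 3|\mathcal{T}|$.
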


\begin{proof}
    
    Consider a maximal matching algorithm where we first pick an edge between two agents and then an edge between one of these agents and a room.
     This approach \USW{$\sfrac{1}{6}$} for binary symmetric Leontief utilities. 
     
     As in the proof of \cref{prop:maximal-unbounded}, the choice of one triple can disrupt up to three optimally matched triples, where all six agents get utility $1$. The triple selected in $\mu$, would give at least one of the agents a utility of $1$, as the first (agent, agent) edge picked must give both agents value $1$ and the (agent, room) edge picked gives at least one of the two matched agents a value of $1$. If an agent is unmatched, either they had no outgoing edges to other agents in the preference graph, in which case, they get a utility of $0$ under all roommate matchings, or their outgoing edges were disrupted by another agent. Consequently, picking an (agent, agent) edge first for binary symmetric Leontief utilities leads to \USW{$\sfrac{1}{6}$}.    
\end{proof}

\paragraph{Additive utilities.} We now consider additive utilities. We find that the naive maximal matching approach performs much better for binary additive utilities.


    
    %
%

\begin{proposition}
     A naive maximal matching is \USW{$\sfrac{1}{4}$} under additive utilities.
\end{proposition}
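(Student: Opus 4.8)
The plan is to compare the naive maximal matching $\mu$ with a welfare-maximising matching $\mu^*$ by a charging argument that exploits two features of the binary additive setting. First, social welfare equals the number of \emph{satisfied} directed edges of the preference graph $G_I$ --- edges $a\to x$ for which $x$ is the roommate or the room assigned to $a$ --- so $\USW{}(\mu)$ and $\USW{}(\mu^*)$ simply count the edges of $G_I$ satisfied under $\mu$ and under $\mu^*$ respectively. I would split this count into roommate-edges and room-edges and bound each component on its own, the target inequality being $\USW{}(\mu^*)\le 4\,\USW{}(\mu)$. Second, every edge the algorithm selects is satisfied by $\mu$, so each triple of $\mu$ built from a selected roommate edge carries at least one unit of roommate-welfare, and each triple built using a selected room edge carries at least one unit of room-welfare.

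The core is a charging scheme justified by maximality: the algorithm halts only once $G_I$ has no edges, so every edge satisfied by $\mu^*$ must have been deleted at some point, and I charge it to a triple of $\mu$ responsible for its deletion. A satisfied roommate edge $a\to\mathrm{rm}^*(a)$ of $\mu^*$ is deleted only when $a$ or $\mathrm{rm}^*(a)$ is paired via a selected roommate edge, so I charge it to the triple of $\mu$ containing that paired endpoint, which is guaranteed to carry positive roommate-welfare. Similarly a satisfied room edge $a\to\mathrm{room}^*(a)$ is deleted either when $a$ is given a room through a selected room edge, or when the room $\mathrm{room}^*(a)$ is consumed by some triple of $\mu$; in the first case I charge it to $a$'s triple and in the second to the triple that took that room --- in either case a triple with positive room-welfare. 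Charging to the triple containing the \emph{saturated} vertex, rather than the one containing the edge's tail, is what makes every charge land on a triple that genuinely earns welfare, which is exactly the point where a careless charging would fail on arbitrarily completed triples.

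It then remains to count how much each triple absorbs. Fixing $T=(i,j,r)$, the satisfied roommate edges chargeable to $T$ are those incident to $i$ or to $j$, i.e. the outgoing and incoming optimal-roommate edges at each of the two agents, so at most four; hence the roommate-welfare of $\mu^*$ is at most four times the number of triples carrying positive roommate-welfare, which is at most four times the roommate-welfare of $\mu$. A symmetric count shows $T$ absorbs at most four room-charges --- at most two from $i,j$ themselves being room-saturated and at most two from the two agents occupying room $r$ in $\mu^*$ --- giving the analogous bound on room-welfare. Summing the two components yields $\USW{}(\mu^*)\le 4\,\USW{}(\mu)$, so $\mu$ is \USW{$\sfrac{1}{4}$}; the factor $4$ is tight already on a roommate-only instance where $\mu^*$ pairs four mutually-liking pairs (welfare $8$) while two selected edges, each joining agents from two distinct pairs, saturate all four pairs so that $\mu$ earns only $2$. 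I expect the main obstacle to be precisely this bookkeeping: a room edge has two possible deletion causes, and one must verify that in each case the charge is absorbed by a triple with positive welfare in the \emph{matching} component, while keeping the per-triple counts at four.
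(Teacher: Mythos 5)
Your overall strategy---charge every $\mu^*$-satisfied edge to the triple of $\mu$ whose formation deleted it, then bound the charges absorbed per triple---is close in spirit to the paper's per-triple accounting, but the way you split the bound into two independent inequalities (roommate-welfare of $\mu^*$ at most $4\times$ roommate-welfare of $\mu$, and likewise for rooms) does not hold. The false step is the claim that a satisfied roommate edge $a\to \mathrm{rm}^*(a)$ ``is deleted only when $a$ or $\mathrm{rm}^*(a)$ is paired via a selected roommate edge.'' The naive algorithm also deletes all agent--agent edges incident to $a$ when a \emph{triple containing $a$ is completed}, and that triple can be completed from two agent--room edges into the same room, in which case it carries no roommate-welfare at all. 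Concretely: take agents $a_1,\dots,a_4$ and rooms $r_1,r_2$ with $v_{a_1}(a_2)=v_{a_2}(a_1)=v_{a_3}(a_4)=v_{a_4}(a_3)=1$, $\widehat v_{a_1}(r_1)=\widehat v_{a_3}(r_1)=1$, and all other values $0$. If the algorithm selects $(a_1,r_1)$ and then $(a_3,r_1)$, it completes the triple $(a_1,a_3,r_1)$ and thereby deletes both mutual-liking pairs; the output has roommate-welfare $0$ while $\mu^*$ has roommate-welfare $4$, so your componentwise inequality becomes $4\le 4\cdot 0$ and fails (the total bound $5\le 4\cdot 2$ still holds, but your argument does not establish it). The repair is to let roommate charges land on triples that earn only room-welfare and vice versa, i.e.\ to do a single combined count per triple---which is essentially what the paper does: a triple built from two selected edges generates welfare at least $2$ and can destroy at most $3+3+2=8$ units of optimal welfare ($3$ per misplaced agent, $2$ per consumed room), while a triple built from one selected edge generates at least $1$ and the optimal welfare not already accounted for elsewhere is at most $4$.

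A secondary point: your tightness example is not realizable. In a roommate-only instance consisting of four disjoint mutually-liking pairs, the preference graph contains no edge between agents of different pairs, so the naive algorithm cannot ``select two edges each joining agents from two distinct pairs''; it will necessarily match every pair correctly and achieve the optimum. The paper's tight instance (\cref{fig:maximal-add}) instead uses two agent--room edges into a common room to glue together agents from different optimal triples---which is exactly the deletion mode your charging scheme overlooks.
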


\begin{proof}
    Firstly, we show that under additive utilities, a naive maximal matching cannot achieve better than $\USW{\alpha}$ for any $\alpha > \tfrac{1}{4}$. Observe the example with (symmetric) additive utilities shown in \cref{fig:maximal-add}. 
    Here, the optimal matching is denoted with black edges and the dashed red edges represent the (suboptimal) preference edges picked by the naive maximal matching algorithm. Clearly, the optimal solution gives a \USW{} of $8$ while the naive maximal matching has a \USW{} of $2$. Thus, the instance shown is one where the naive maximal matching algorithm is \USW{$\sfrac{1}{4}$}. 
    
    Now we prove the tightness of the bound. Given a roommate matching instance with binary Leontief utilities $I=\langle \N, \R, v,\widehat{v}\rangle$, let $\mu$ be a maximal matching generated by the naive algorithm and $\mu^*$ be a maximum \USW{} matching for $I$.
    
    Let $(i,j,r)\in \mu$.  Observe that unlike the Leontief case, the agents matched to $i$, and $j$ under $\mu^*$ can still be matched to the same room and get utility from it (if any). Similarly, the agents matched to $r$ in $\mu^*$  can still be matched to each other and get the same utility from each other. In order to construct a worst case example, these agents should only get value from $i$, $j$ and $r$. Thus, in order to calculate the approximation bound, it is sufficient to calculate how much welfare $(i,j,r)$ being matched to each other must generate, and how much welfare they each otherwise may have generated in $\mu^*$.
    
    Observe that the two edges selected by the naive algorithm to form $(i,j,r)$ must generate a social welfare of $2$. Each agent being matched suboptimally can reduce the maximum social welfare by at most $3$ ($1$ for their optimal room and $2$ from the compatibility values of their partner under $\mu^*$) and a room being wrongly matched can disrupt a social welfare of up to $2$ ($1$ for each agent matched to it under $\mu^*$). Thus, a complete triple being matched will generate a social welfare of at least $2$ and disrupt an optimal welfare of $8$, ensuring \USW{$\sfrac{1}{4}$}.

    For a triple first matched as a single edge and then completed arbitrarily will have a complete social welfare of at least $1$. Here, we only need to account for those edges not previously disrupted by another triple. If this is an edge between two agents, and even one was matched to a room they like, that was disrupted by another edge matched, and hence, their room values are accounted for. 
    
    If neither were, they could at most have disrupted an optimal welfare of $2$ each from compatibility values, leading to \USW{$\sfrac{1}{4}$}. If this was an agent room edge, at least one other agent who liked this room, could have been matched to it. As it did not, either the welfare is already accounted for, or no such agent exists, and this edge disrupted a social welfare of at most three, leading to \USW{$\sfrac{1}{3}$}. 

    Hence, a maximal matching generated by the naive algorithm must be \USW{$\sfrac{1}{4}$} for binary additive utilities.
    
\end{proof}

\begin{figure}[t]
    \centering

\tikzset{every picture/.style={line width=0.85pt}} 

\begin{tikzpicture}[x=0.75pt,y=0.75pt,yscale=-1,xscale=1]

\draw   (9.14,53.34) -- (13.57,60.31) -- (4.71,60.31) -- cycle ;
\draw   (109.87,60.65) .. controls (109.87,62.58) and (111.67,64.14) .. (113.9,64.14) .. controls (116.12,64.14) and (117.93,62.58) .. (117.93,60.65) .. controls (117.93,58.73) and (116.12,57.17) .. (113.9,57.17) .. controls (111.67,57.17) and (109.87,58.73) .. (109.87,60.65) -- cycle ;
\draw   (149.77,60.82) .. controls (149.77,62.74) and (151.57,64.3) .. (153.79,64.3) .. controls (156.02,64.3) and (157.82,62.74) .. (157.82,60.82) .. controls (157.82,58.89) and (156.02,57.33) .. (153.79,57.33) .. controls (151.57,57.33) and (149.77,58.89) .. (149.77,60.82) -- cycle ;
\draw   (202.45,59.8) .. controls (202.45,61.72) and (204.25,63.28) .. (206.47,63.28) .. controls (208.7,63.28) and (210.5,61.72) .. (210.5,59.8) .. controls (210.5,57.88) and (208.7,56.32) .. (206.47,56.32) .. controls (204.25,56.32) and (202.45,57.88) .. (202.45,59.8) -- cycle ;
\draw   (202.45,22.69) .. controls (202.45,24.61) and (204.25,26.17) .. (206.47,26.17) .. controls (208.7,26.17) and (210.5,24.61) .. (210.5,22.69) .. controls (210.5,20.77) and (208.7,19.21) .. (206.47,19.21) .. controls (204.25,19.21) and (202.45,20.77) .. (202.45,22.69) -- cycle ;
\draw   (113.54,18.46) -- (117.97,25.43) -- (109.11,25.43) -- cycle ;
\draw   (250.57,52.8) -- (255,59.76) -- (246.14,59.76) -- cycle ;
\draw    (113.9,57.17) -- (113.2,25.2) ;
\draw    (153.79,57.33) -- (117.97,25.43) ;
\draw    (206.47,56.32) -- (206.47,26.17) ;
\draw    (209,25) -- (249,56) ;
\draw    (11.92,56.9) -- (48.5,24.49) ;
\draw [color={rgb, 255:red, 208; green, 2; blue, 27 }  ,draw opacity=1 ] [dash pattern={on 2pt off 1pt}]   (54.82,22.36) -- (109.27,22.49) ;
\draw [color={rgb, 255:red, 208; green, 2; blue, 27 }  ,draw opacity=1 ] [dash pattern={on 2pt off 1pt}]   (202.45,22.69) -- (118.29,22.5) ;
\draw   (46.76,60.63) .. controls (46.76,62.55) and (48.57,64.11) .. (50.79,64.11) .. controls (53.01,64.11) and (54.82,62.55) .. (54.82,60.63) .. controls (54.82,58.7) and (53.01,57.14) .. (50.79,57.14) .. controls (48.57,57.14) and (46.76,58.7) .. (46.76,60.63) -- cycle ;
\draw   (46.76,22.36) .. controls (46.76,24.28) and (48.57,25.84) .. (50.79,25.84) .. controls (53.01,25.84) and (54.82,24.28) .. (54.82,22.36) .. controls (54.82,20.43) and (53.01,18.87) .. (50.79,18.87) .. controls (48.57,18.87) and (46.76,20.43) .. (46.76,22.36) -- cycle ;
\draw    (50.79,57.14) -- (50.79,25.84) ;

\draw (40,8) node [anchor=north west][inner sep=0.75pt]  [font=\scriptsize]  {$a_{1}$};
\draw (44,66) node [anchor=north west][inner sep=0.75pt]  [font=\scriptsize]  {$a_{2}$};
\draw (5,66) node [anchor=north west][inner sep=0.75pt]  [font=\scriptsize]  {$r_{1}$};
\draw (246,66) node [anchor=north west][inner sep=0.75pt]  [font=\scriptsize]  {$r_{2}$};
\draw (108.66,8) node [anchor=north west][inner sep=0.75pt]  [font=\scriptsize]  {$r_{3}$};
\draw (150,66) node [anchor=north west][inner sep=0.75pt]  [font=\scriptsize]  {$a_{6}$};
\draw (108.65,66) node [anchor=north west][inner sep=0.75pt]  [font=\scriptsize]  {$a_{5}$};
\draw (202,66) node [anchor=north west][inner sep=0.75pt]  [font=\scriptsize]  {$a_{4}$};
\draw (203,10) node [anchor=north west][inner sep=0.75pt]  [font=\scriptsize]  {$a_{3}$};

\end{tikzpicture}

    \caption{An instance showing that a maximal matching cannot guarantee \USW{$\alpha$} for any $\alpha > \sfrac{1}{4}$ under binary additive utilities. The optimal matching is shown in solid black edges and the wrong choice of triple is shown in dashed red edges.}
    \label{fig:maximal-add}
\end{figure}

\subsubsection{Matching with Triangles and L-Shapes}

\cref{prop:maximal-unbounded} shows that a naive maximal matching could provide no welfare guarantee under Leontief utilities. It illustrates the challenges of adopting graph-matching algorithms to roommate matching instances, especially under Leontief.

We seek to develop a polynomial-time algorithm that achieves a constant approximation of welfare. To this end, it is necessary to understand the structure of roommate matching problems. Thus, we start by introducing triple structures according to their contribution to welfare.

\paragraph{Triple Structures.}
Given an instance of the roommate matching problem, a triple $(i,j,r)$ can be represented as a graphical structure of the preference graph. These structures are illustrated graphically in \cref{fig:structures}. 
A triple may form a triangle match (T) if (i) both agents like each other and (ii) both like the matched room. 
If only one edge between an agent and a room is missing in a triangle match, that is both agents like each other, but only one likes the room, then we say the match is an L-shaped (L) match.

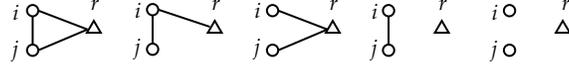
\begin{figure}[t]
\centering

\tikzset{every picture/.style={line width=0.75pt}} 

\begin{tikzpicture}[x=0.75pt,y=0.75pt,yscale=-1,xscale=1]

\draw   (80.8,44.85) .. controls (80.8,43.28) and (82.08,42) .. (83.65,42) .. controls (85.22,42) and (86.5,43.28) .. (86.5,44.85) .. controls (86.5,46.42) and (85.22,47.7) .. (83.65,47.7) .. controls (82.08,47.7) and (80.8,46.42) .. (80.8,44.85) -- cycle ;
\draw   (80.8,64.85) .. controls (80.8,63.28) and (82.08,62) .. (83.65,62) .. controls (85.22,62) and (86.5,63.28) .. (86.5,64.85) .. controls (86.5,66.42) and (85.22,67.7) .. (83.65,67.7) .. controls (82.08,67.7) and (80.8,66.42) .. (80.8,64.85) -- cycle ;
\draw   (111,56) -- (118,56) -- (114.5,50) -- cycle ;
\draw    (83.65,47.7) -- (83.65,62) ;
\draw    (86.5,44.85) -- (111.3,54.35) ;
\draw    (86.5,64.85) -- (111.3,54.35) ;

\draw   (141.3,44.35) .. controls (141.3,42.78) and (142.58,41.5) .. (144.15,41.5) .. controls (145.72,41.5) and (147,42.78) .. (147,44.35) .. controls (147,45.92) and (145.72,47.2) .. (144.15,47.2) .. controls (142.58,47.2) and (141.3,45.92) .. (141.3,44.35) -- cycle ;
\draw   (141.3,64.35) .. controls (141.3,62.78) and (142.58,61.5) .. (144.15,61.5) .. controls (145.72,61.5) and (147,62.78) .. (147,64.35) .. controls (147,65.92) and (145.72,67.2) .. (144.15,67.2) .. controls (142.58,67.2) and (141.3,65.92) .. (141.3,64.35) -- cycle ;
\draw   (175.5,50) -- (172,56) -- (179,56) -- cycle ;
\draw    (144.15,47.2) -- (144.15,61.5) ;
\draw    (147,44.35) -- (173,53) ;
\draw   (201.8,44.85) .. controls (201.8,43.28) and (203.08,42) .. (204.65,42) .. controls (206.22,42) and (207.5,43.28) .. (207.5,44.85) .. controls (207.5,46.42) and (206.22,47.7) .. (204.65,47.7) .. controls (203.08,47.7) and (201.8,46.42) .. (201.8,44.85) -- cycle ;
\draw   (201.8,64.85) .. controls (201.8,63.28) and (203.08,62) .. (204.65,62) .. controls (206.22,62) and (207.5,63.28) .. (207.5,64.85) .. controls (207.5,66.42) and (206.22,67.7) .. (204.65,67.7) .. controls (203.08,67.7) and (201.8,66.42) .. (201.8,64.85) -- cycle ;
\draw   (231.5,56) -- (238.5,56) -- (235,50) -- cycle ;
\draw    (207.5,44.85) -- (232.3,54.35) ;
\draw    (207.5,64.85) -- (232.3,54.35) ;
\draw   (260.3,44.85) .. controls (260.3,43.28) and (261.58,42) .. (263.15,42) .. controls (264.72,42) and (266,43.28) .. (266,44.85) .. controls (266,46.42) and (264.72,47.7) .. (263.15,47.7) .. controls (261.58,47.7) and (260.3,46.42) .. (260.3,44.85) -- cycle ;
\draw   (260.3,64.85) .. controls (260.3,63.28) and (261.58,62) .. (263.15,62) .. controls (264.72,62) and (266,63.28) .. (266,64.85) .. controls (266,66.42) and (264.72,67.7) .. (263.15,67.7) .. controls (261.58,67.7) and (260.3,66.42) .. (260.3,64.85) -- cycle ;
\draw   (286.5,56) -- (293.5,56) -- (290,50) -- cycle ;
\draw    (263.15,47.7) -- (263.15,62) ;
\draw   (321.3,44.85) .. controls (321.3,43.28) and (322.58,42) .. (324.15,42) .. controls (325.72,42) and (327,43.28) .. (327,44.85) .. controls (327,46.42) and (325.72,47.7) .. (324.15,47.7) .. controls (322.58,47.7) and (321.3,46.42) .. (321.3,44.85) -- cycle ;
\draw   (321.3,64.85) .. controls (321.3,63.28) and (322.58,62) .. (324.15,62) .. controls (325.72,62) and (327,63.28) .. (327,64.85) .. controls (327,66.42) and (325.72,67.7) .. (324.15,67.7) .. controls (322.58,67.7) and (321.3,66.42) .. (321.3,64.85) -- cycle ;
\draw   (347.5,56) -- (354.59,56) -- (351,50) -- cycle ;

\draw (72,40.4) node [anchor=north west][inner sep=0.75pt]  [font=\scriptsize]  {$i$};
\draw (71.5,58.4) node [anchor=north west][inner sep=0.75pt]  [font=\scriptsize]  {$j$};
\draw (111.5,38) node [anchor=north west][inner sep=0.75pt]  [font=\scriptsize]  {$r$};
\draw (133.5,39.4) node [anchor=north west][inner sep=0.75pt]  [font=\scriptsize]  {$i$};
\draw (133,57.4) node [anchor=north west][inner sep=0.75pt]  [font=\scriptsize]  {$j$};
\draw (194,39.4) node [anchor=north west][inner sep=0.75pt]  [font=\scriptsize]  {$i$};
\draw (193.5,57.4) node [anchor=north west][inner sep=0.75pt]  [font=\scriptsize]  {$j$};
\draw (253,39.9) node [anchor=north west][inner sep=0.75pt]  [font=\scriptsize]  {$i$};
\draw (252.5,57.9) node [anchor=north west][inner sep=0.75pt]  [font=\scriptsize]  {$j$};
\draw (313.5,40.4) node [anchor=north west][inner sep=0.75pt]  [font=\scriptsize]  {$i$};
\draw (313,58.4) node [anchor=north west][inner sep=0.75pt]  [font=\scriptsize]  {$j$};
\draw (172.5,38) node [anchor=north west][inner sep=0.75pt]  [font=\scriptsize]  {$r$};
\draw (232,38) node [anchor=north west][inner sep=0.75pt]  [font=\scriptsize]  {$r$};
\draw (287,38) node [anchor=north west][inner sep=0.75pt]  [font=\scriptsize]  {$r$};
\draw (349,38) node [anchor=north west][inner sep=0.75pt]  [font=\scriptsize]  {$r$};

\end{tikzpicture}

    \caption{Possible matching structures in a roommate matching problem. The first structure on the left is a triangle (T), the next one is L-shaped (L) and  gives non-zero utility to agent $i$. The last three show the cases where both $i$ and $j$ get utility $0$ under Leontief.  }
    \label{fig:structures}
\end{figure}

\begin{definition}
    A matching is L/T-matching if it only consists of L-shapes and triangles.
    An L/T-matching $M$ is maximal if it admits no additional L/T triples.
\end{definition}

\begin{figure}[t]
    \centering

\tikzset{every picture/.style={line width=0.75pt}} 

\begin{tikzpicture}[x=0.75pt,y=0.75pt,yscale=-1,xscale=1]

\draw   (13,45) -- (16.83,51) -- (9.18,51) -- cycle ;
\draw   (154.02,45.44) .. controls (154.02,47.1) and (155.58,48.45) .. (157.5,48.45) .. controls (159.42,48.45) and (160.98,47.1) .. (160.98,45.44) .. controls (160.98,43.78) and (159.42,42.43) .. (157.5,42.43) .. controls (155.58,42.43) and (154.02,43.78) .. (154.02,45.44) -- cycle ;
\draw   (153.82,15.58) .. controls (153.82,17.25) and (155.38,18.59) .. (157.3,18.59) .. controls (159.22,18.59) and (160.78,17.25) .. (160.78,15.58) .. controls (160.78,13.92) and (159.22,12.57) .. (157.3,12.57) .. controls (155.38,12.57) and (153.82,13.92) .. (153.82,15.58) -- cycle ;
\draw   (77.5,47.7) .. controls (77.5,49.37) and (79.06,50.71) .. (80.98,50.71) .. controls (82.9,50.71) and (84.46,49.37) .. (84.46,47.7) .. controls (84.46,46.04) and (82.9,44.69) .. (80.98,44.69) .. controls (79.06,44.69) and (77.5,46.04) .. (77.5,47.7) -- cycle ;
\draw   (77.5,15.64) .. controls (77.5,17.31) and (79.06,18.65) .. (80.98,18.65) .. controls (82.9,18.65) and (84.46,17.31) .. (84.46,15.64) .. controls (84.46,13.98) and (82.9,12.63) .. (80.98,12.63) .. controls (79.06,12.63) and (77.5,13.98) .. (77.5,15.64) -- cycle ;
\draw   (123.19,12.99) -- (127.02,19.01) -- (119.36,19.01) -- cycle ;
\draw   (119.07,45) -- (122.9,51) -- (115.25,51) -- cycle ;
\draw    (16,49) -- (45.5,49) ;
\draw    (154.02,45.44) -- (127.02,18.01) ;
\draw    (153.82,15.58) -- (125.4,15.2) ;
\draw    (80.98,44.69) -- (80.98,18.65) ;
\draw    (84.6,18) -- (118,46) ;
\draw    (157.5,42.43) -- (157.3,18.59) ;
\draw    (15.5,47) -- (47,17.2) ;
\draw    (84.46,49) -- (117,49) ;
\draw [color={rgb, 255:red, 208; green, 2; blue, 27 }  ,draw opacity=1 ]   (77.5,15.5) -- (54.46,15.5) ;
\draw [shift={(53.3,15.5)}, rotate = 0.65] [color={rgb, 255:red, 208; green, 2; blue, 27 }  ,draw opacity=1 ][line width=0.75]    (5.93,-3.29) .. controls (3.95,-1.4) and (1.31,-0.3) .. (0,0) .. controls (1.31,0.3) and (3.95,1.4) .. (5.93,3.29)   ;
\draw [color={rgb, 255:red, 208; green, 2; blue, 27 }  ,draw opacity=1 ]   (84.46,15.5) -- (119,15.5) ;
\draw [shift={(120.5,15.5)}, rotate = 180.56] [color={rgb, 255:red, 208; green, 2; blue, 27 }  ,draw opacity=1 ][line width=0.75]    (5.93,-3.29) .. controls (3.95,-1.4) and (1.31,-0.3) .. (0,0) .. controls (1.31,0.3) and (3.95,1.4) .. (5.93,3.29)   ;
\draw   (45.5,48.42) .. controls (45.5,50.08) and (47.06,51.43) .. (48.98,51.43) .. controls (50.9,51.43) and (52.46,50.08) .. (52.46,48.42) .. controls (52.46,46.76) and (50.9,45.41) .. (48.98,45.41) .. controls (47.06,45.41) and (45.5,46.76) .. (45.5,48.42) -- cycle ;
\draw   (45.5,15.36) .. controls (45.5,17.02) and (47.06,18.37) .. (48.98,18.37) .. controls (50.9,18.37) and (52.46,17.02) .. (52.46,15.36) .. controls (52.46,13.7) and (50.9,12.35) .. (48.98,12.35) .. controls (47.06,12.35) and (45.5,13.7) .. (45.5,15.36) -- cycle ;
\draw    (48.98,45.41) -- (48.98,18.37) ;

\draw (33,6) node [anchor=north west][inner sep=0.75pt]  [font=\scriptsize]  {$a_{1}$};
\draw (44,52) node [anchor=north west][inner sep=0.75pt]  [font=\scriptsize]  {$a_{2}$};
\draw (9,52) node [anchor=north west][inner sep=0.75pt]  [font=\scriptsize]  {$r_{1}$};
\draw (115,52) node [anchor=north west][inner sep=0.75pt]  [font=\scriptsize]  {$r_{2}$};
\draw (121,6) node [anchor=north west][inner sep=0.75pt]  [font=\scriptsize]  {$r_{3}$};
\draw (160,52) node [anchor=north west][inner sep=0.75pt]  [font=\scriptsize]  {$a_{6}$};
\draw (161,9) node [anchor=north west][inner sep=0.75pt]  [font=\scriptsize]  {$a_{5}$};
\draw (77,52) node [anchor=north west][inner sep=0.75pt]  [font=\scriptsize]  {$a_{4}$};
\draw (83,6) node [anchor=north west][inner sep=0.75pt]  [font=\scriptsize]  {$a_{3}$};

\end{tikzpicture}

    \caption{Tight Example of \USW{$\sfrac{1}{6}$} for \cref{thm:LT-maximal}}
    \label{fig:LTmaximal}
\end{figure}

\begin{theorem}\label{thm:LT-maximal}
    An L/T-maximal matching guarantees $\USW{\sfrac{1}{6}}$ under binary Leontief utilities.
\end{theorem}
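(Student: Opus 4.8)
The plan is to run a charging argument from an optimal matching $\mu^*$ to the L/T-maximal matching $M$, losing a factor of at most $6$. Two facts drive the bound. First, every triple kept in $M$ is an L-shape or a triangle, hence contributes at least $1$ to $\USW{}(M)$, so $\USW{}(M)\ge |M|$. Second, every triple of $\mu^*$ contributes at most $2$: under binary Leontief utilities a triple yields welfare $2$ exactly when it is a triangle (both agents like each other and both like the room, so both are satisfied) and welfare $1$ when exactly one agent is satisfied. A useful preliminary observation is that a triple generates positive welfare \emph{if and only if} it is an L-shape or a triangle, since positive welfare requires some agent to like both its assigned roommate and its room, which is precisely the defining condition of an L/T triple. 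Thus only the L/T triples of $\mu^*$ carry welfare and need to be accounted for.

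Next I would use maximality to link the two matchings. I claim that every positive-welfare (equivalently, L/T) triple $T^\ast=(i,j,r)$ of $\mu^\ast$ shares at least one of its three elements $\{i,j,r\}$ with some triple of $M$. Indeed, if $M$ left all of $i,j,r$ unmatched, then $M\cup\{T^\ast\}$ would still be an L/T-matching, contradicting the maximality of $M$. This lets me define a charging map sending each positive triple of $\mu^\ast$ to one triple of $M$ with which it shares an element. The crucial counting step is to bound the load on any single triple $T\in M$: since $T$ consists of exactly three elements, and each element of $\N\cup\R$ lies in exactly one triple of $\mu^\ast$, at most three triples of $\mu^\ast$ can share an element with $T$, so at most three positive triples are charged to $T$.

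Combining the two sides then finishes the argument: each $T\in M$ is charged by at most $3$ triples of $\mu^\ast$, each of welfare at most $2$, so the welfare charged to $T$ is at most $6$, while $T$ itself contributes at least $1$ to $\USW{}(M)$. Summing over all triples of $M$ gives $\USW{}(\mu^\ast)\le 6\,|M|\le 6\,\USW{}(M)$, i.e.\ $\USW{}(M)\ge \tfrac{1}{6}\USW{}(\mu^\ast)$. The instance in \cref{fig:LTmaximal}, where a single L-shape blocks three disjoint optimal triangles, shows the constant is tight. The main obstacle is the combinatorial heart of the charging: simultaneously exploiting the factor $3$ (the three elements of each $M$-triple, each appearing in at most one optimal triple) and the factor $2$ (the maximum welfare of an optimal triangle), while carefully arguing via maximality that \emph{no} positive-welfare triple of $\mu^\ast$ can be entirely disjoint from $M$, so that none of the optimal welfare escapes the charging.
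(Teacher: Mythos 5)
Your proposal is correct and follows essentially the same route as the paper's proof: each triple retained in the L/T-maximal matching contributes at least $1$, intersects at most three triples of $\mu^*$ (one per element), and each of those contributes at most $2$, giving the factor $6$; your explicit charging map and the maximality step showing that no positive-welfare triple of $\mu^*$ can be disjoint from $M$ just make rigorous what the paper states informally. Note only that your reading of ``L/T'' as ``some agent likes both its roommate and its room'' is the one the argument (and the paper's later use of L/T-maximality) actually needs, since the paper's literal definition of an L requires mutual liking between the agents.
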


\begin{proof}

Given a roommate matching instance $I=\langle \N, \R, v, \widehat{v}\rangle$ with binary Leontief utilities, let $\mu^*$ be a maximum \USW{} roommate matching. If $I$ has no Ls or Ts then clearly, every agent gets a utility of $0$ from every roommate matching. Thus, in this case, the approximation bound holds. 

If on the other hand, $\mu^*$ does indeed contain triples that are Ls or Ts, then the wrong choice of $L$ can intersect at most three  matched triples in $\mu^*$, which could in the worst case all be Ts. As a result, the maximal matching would only give one agent a utility of $1$, whereas $\mu^*$ gives a utility of $1$ to all six agents. Thus, an L/T-maximal matching is \USW{$\sfrac{1}{6}$} under binary Leontief utilities. 
\end{proof}

A worst case example for \cref{thm:LT-maximal} is shown in \cref{fig:LTmaximal}. Here an optimal solution would match the three triangles, but picking the L formed by $(a_1,a_3,r_3)$ is maximal. As a result, an L/T-maximal matching is \USW{$\sfrac{1}{6}$} under binary Leontief utilities.

\begin{remark}
    For binary additive utilities, clearly, the types of triples would be more diverse and hence, building a maximal matching from triples with non-zero utility could lead to $\USW{\sfrac{1}{12}}$ in the worst case.
\end{remark}

\subsubsection{Triangle-then-L Maximal Matching}
Our previous analysis shows that it is important to not eliminate triples with more social welfare too early. Thus, we now propose an algorithm based on repeatedly choosing the highest utility triple (triangles first then Ls) produces a maximal matching. This approach guarantees \USW{$\sfrac{1}{3}$}, for unrestricted valuations and all utility types. 
\begin{algorithm}[t]
  \KwIn{ Roommate Matching instance $\langle \N,\R, v,\widehat{v} \rangle$ }
  \KwOut{A matching $\mu$}
  Initialize matching $\mu \gets \emptyset$ \;
  \While{ there exists an unmatched triple in $\N \cup \R$}{
  
    Let the set of unmatched triples  $P = \{(i,j,r) | i,j\in \N$ and $r\in \R \} $\; 
    $(i,j,r)\gets \argmax _{(i,j,r)\in P}(u_i(\{j,r\})+u_j(\{i,r\}))$ \Comment{Tie-breaking is arbitrary}\;
    Update $\mu \gets \mu \cup (i,j,r)$\;
    Update $\N \gets \N \setminus \{i,j\}$; $\R \gets \R \setminus \{r\}$\;
    }
   \caption{Triangle-then-L Maximal Matching}\label{alg:greedy}
\end{algorithm}

\begin{restatable}{theorem}{greedyalgo}\label{thm:greedyalg}
    A Triangle-then-L maximal matching is \USW{$\sfrac{1}{3}$}.
\end{restatable}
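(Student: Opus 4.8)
The plan is to prove the bound for binary Leontief utilities by a charging argument that turns on the \emph{defining} feature of the matching---triangles are made maximal \emph{before} any $L$-shape is added---and to treat general valuations separately at the end. Write $\mu$ for the Triangle-then-L maximal matching and $\mu^*$ for a welfare-maximizing matching, and let $w(t)$ denote the welfare a triple $t$ contributes. Let $T,L$ be the triangles and $L$-shapes of $\mu$; every other triple of $\mu$ contributes $0$, so $\USW{}(\mu)=2|T|+|L|$, and likewise $\USW{}(\mu^*)=2|T^*|+|L^*|$ where $T^*,L^*$ are the triangles and $L$-shapes of $\mu^*$. If the instance admits no triangle or $L$-shape at all, every matching has welfare $0$ and the claim is vacuous, so assume otherwise. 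The two structural properties I will use---and which hold for \emph{any} matching built in this triangle-then-L fashion, not merely the output of \cref{alg:greedy}---are: (P1) no triangle can be formed from agents and rooms left uncovered by $T$ (triangles were exhausted first), and (P2) no $L$-shape can be formed from agents and rooms left uncovered by $T\cup L$.

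Next I would set up the charge, sending each positively-valued optimal triple to a triple of $\mu$ that shares an element with it. By (P1), every optimal triangle $O\in T^*$ must use at least one agent or room of a \emph{triangle} of $T$ (otherwise $O$ would be a triangle disjoint from $T$, contradicting maximality); I charge its weight $2$ there. By (P2), every optimal $L$-shape $O\in L^*$ uses an element of $T\cup L$, and I charge its weight $1$ to such a triple. Because $\mu$ is a matching and $\mu^*$ is a matching, each of the three elements of a fixed $g\in T\cup L$ lies in at most one triple of $\mu^*$, so at most three optimal triples are charged to any single $g$.

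The crux is bounding the total charge absorbed by one triple $g$. If $g$ is a triangle, the (at most three) optimal triples charged to it have weight at most $2$ each, for a total of at most $6 = 3\,w(g)$. The decisive case is $g\in L$: here I would emphasize that optimal triangles are, \emph{by construction}, routed only to triangles of $T$ and never to $L$-shapes, so the only triples charged to an $L$-shape $g$ are optimal $L$-shapes, each of weight $1$; with at most three of them the total is at most $3 = 3\,w(g)$. Summing over all $g\in T\cup L$ and recalling that every positive optimal triple was charged somewhere then gives $\USW{}(\mu^*)\le 3\sum_{g}w(g)=3\,\USW{}(\mu)$, i.e. $\USW{}(\mu)\ge\tfrac13\USW{}(\mu^*)$.

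I expect the main obstacle to be exactly the step separating this from the weaker $\USW{\sfrac{1}{6}}$ of \cref{thm:LT-maximal}: a single $L$-shape of $\mu$ can intersect three optimal triangles (worth $6$) while itself contributing only $1$, which is what makes an \emph{unprioritised} $L/T$-maximal matching as bad as $\sfrac16$. Triangle-maximality (P1) is precisely what neutralises this, by forcing every optimal triangle to meet a triangle of $T$ so that its weight is booked against a triangle rather than an $L$-shape; making this redirection airtight---checking that the shared element genuinely belongs to a triangle and that no optimal triangle can be left to burden an $L$-shape---is the delicate part of the write-up. Finally, for arbitrary valuations and utility functions, where the triangle/$L$ dichotomy is unavailable, I would observe that \cref{alg:greedy} is an instance of greedy maximum-weight $3$-set packing (each triple occupies two agents and one room) and invoke the standard bound: charging each optimal triple to the first selected triple it meets, and using that this selected triple had maximum weight among those still available, yields the same factor $3$.
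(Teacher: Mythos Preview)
Your proposal is correct. The closing paragraph---charging each optimal triple to the first greedy triple it meets and using that the greedy triple had maximum weight among those still available---is exactly the paper's argument, and the paper applies it uniformly to all utility functions without splitting off the binary Leontief case.

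The main body of your write-up, however, takes a genuinely different route for binary Leontief. Rather than invoking the greedy ``first intersected'' charge, you exploit the two-phase structure directly: property (P1) lets you route every optimal triangle to a selected \emph{triangle}, so that an $L$-shape $g\in L$ only absorbs optimal $L$-shapes and the charge against it is capped at $3\cdot 1=3\,w(g)$. This is correct and makes explicit \emph{why} prioritising triangles buys the jump from $\sfrac{1}{6}$ (\cref{thm:LT-maximal}) to $\sfrac{1}{3}$, a point the paper's proof leaves implicit inside the generic greedy bound. The trade-off is that your binary-Leontief argument is strictly a specialisation of the greedy one you sketch at the end: if an optimal triangle $O$ first meets a selected triple $g$, then $O$ was available when $g$ was picked, so $w(g)\ge w(O)=2$, forcing $g\in T$---which is (P1) in disguise. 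So while your structural charging is illuminating, it is redundant once you have the general greedy argument, and the paper simply presents the latter once.
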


\begin{proof}
    \begin{sloppypar}
Fix an arbitrary roommate matching instance $I=\langle \N,\R, v,\widehat{v} \rangle$. Let $\mu$ be a Triangle-then-L maximal matching (returned by Algorithm \ref{alg:greedy}) on $I$. Let $\mu^*$ be a social welfare maximizing roommate matching on $I$. We show that this algorithm always returns a \USW{$\sfrac{1}{3}$} roommate matching. 
\end{sloppypar}

Let $(i,j,r)$ be the first triple to be matched in \cref{alg:greedy} when run on $I$. Under $\mu^*$ all three of $i$, $j$ and $r$ could be matched very differently from $\mu$. Their partners under $\mu^*$ may get utility $0$ whenever they are not matched as in $\mu^*$. Let $i$, $j$, and $r$ be matched in $\mu^*$ as $\{(i,i',r_i),(j,j',r_j), (i_r,j_r,r)\}\subseteq \mu^*$.   

In the worst case, all three of these triples have the same social welfare as $(i,j,r)$, and these remaining agents have no value when matched in any other triple.  That is, $u_i(\{j,r\})+u_j(\{i,r\})=u_i(\{i',r_i\})+u_{i'}(\{i,r_i\})=u_j(\{j',r_j\})+u_{j'}(\{j,r_j\})=u_{i_r}(\{j_r,r\})+u_{j_r}(\{i_r,r\})$. Consequently, by picking $(i,j,r)$, the greedy mechanism achieves only $\sfrac{1}{3}$ of the optimal $\USW{}$ achieved by agents $i, i', j, j',i_r$ and $j_r$ in $\mu^*$. 
Similarly, for any triple matched in the greedy mechanism, there may be three incident triples of equal value in $\mu^*$, a social welfare maximizing matching. Thus, social welfare of $\mu$ is at least $\sfrac{1}{3}$ times the social welfare of $\mu^*$.
\end{proof}

\begin{remark}
    The approximation guarantee of the Triangle-then-L mechanism (\cref{alg:greedy}) does not depend on how the agents combine the values they receive from their room and partner. Consequently, this result is applicable to any computable utility function  (e.g., submodular or monotonic).
\end{remark}


\subsection{Computational Complexity}
We now show that maximizing social welfare is APX-hard, even under binary symmetric Leontief utilities. 
We present a factor preserving reduction from $3$SAT. The construction will be similar to the standard reduction from $3$SAT to \textsc{$3$-Dimensional Matching} \citep{GJ79}. Interestingly, we show the problem is hard even if there are only Ls. We defer the full proof to \cref{app:red}.

\begin{restatable}{theorem}{swhardl}\label{thm:SW-hardL}
    Finding a maximum \USW{} roommate matching is APX-hard under binary Leontief utilities. 
\end{restatable}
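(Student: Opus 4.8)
The plan is to give a gap-preserving (L-)reduction from a bounded-occurrence version of \textsc{Max-3Sat}, which is APX-hard, to the problem of maximizing \USW{} under binary symmetric Leontief utilities, mirroring the classical reduction of \citet{GJ79} from $3$SAT to \textsc{$3$-Dimensional Matching}. Fix a $3$CNF formula $\varphi$ with variables $x_1,\dots,x_n$ and clauses $C_1,\dots,C_m$ in which every variable occurs a bounded number of times; this boundedness is what makes the reduction linear (and hence factor preserving). From $\varphi$ I would build an instance $I_\varphi=\langle \N,\R,v,\widehat v\rangle$ with binary symmetric valuations, designed so that every useful triple is an L-shape contributing welfare exactly $1$ (cf.\ \cref{fig:structures}); in particular I would choose the room valuations so that no triangle can ever form, which is how the ``even if there are only Ls'' strengthening is obtained.

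The gadgets are the three standard ones adapted to the roommate setting. \textbf{Truth-setting gadget:} for each variable $x$ occurring in $t_x$ clauses I build a ring of $2t_x$ agents and rooms, wired so that it admits exactly two welfare-maximizing ways to be internally covered, one interpreted as setting $x$ true and the other as setting $x$ false. Each choice leaves a specific set of ``literal'' agent/room elements free, one per occurrence of the satisfied literal, and leaves the opposite literal's elements occupied internally. \textbf{Satisfaction-testing gadget:} for each clause $C=(\ell_1\vee\ell_2\vee\ell_3)$ I add a small gadget containing a distinguished element that can be completed into an extra welfare-$1$ L-triple precisely when at least one of the three literal elements of $C$ is left free by its variable gadget, i.e.\ precisely when $C$ is satisfied. \textbf{Garbage-collection gadget:} the literal elements that are freed but not consumed by a clause gadget are absorbed by dedicated filler agents/rooms so that they can always be completed into triples, but only ever as welfare-$0$ triples, keeping the per-gadget accounting clean. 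Since every variable occurs boundedly often, the total number of agents and rooms, and hence the optimal welfare $\USW{}(\opt)$, is $\Theta(m)$.

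With the gadgets in place, the two directions of the gap follow from per-gadget accounting. For completeness, a satisfying assignment of $\varphi$ induces, gadget by gadget, a complete matching whose welfare equals a fixed target $W=\Theta(m)$: each variable ring contributes its maximal internal welfare, and every clause gadget contributes its extra welfare-$1$ triple. For soundness, I would argue that any complete matching can, without loss of welfare, be normalized so that each variable ring is covered in one of its two canonical phases, which reads off a genuine truth assignment $\sigma$; a clause gadget attains its extra welfare-$1$ triple only if $\sigma$ satisfies the corresponding clause. Hence a matching of welfare $W-k$ yields an assignment violating at most $O(k)$ clauses, and conversely an assignment violating $k$ clauses gives a matching of welfare $W-\Theta(k)$. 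Combined with $\USW{}(\opt)=\Theta(m)$, this is an L-reduction and transfers the known constant-factor inapproximability of \textsc{Max-3Sat} to \USW{} maximization.

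The main obstacle is the soundness normalization: I must show that no adversarial matching can ``cheat'' by covering a variable ring in a mixed (neither fully-true nor fully-false) fashion to free extra literal elements and thereby over-satisfy clause gadgets without paying for it elsewhere. Concretely, the rings must be engineered so that every non-canonical internal cover strictly decreases that gadget's welfare by at least the amount any downstream clause gadget could gain, making canonical covers dominant in every optimal matching. Securing this robustness while keeping all valuations binary, symmetric, and triangle-free---and while maintaining the $\Theta(m)$ size bound so the factor is preserved---is the crux; the remaining bookkeeping (fixing the target $W$ and verifying the linear loss/gain relationship) is routine.
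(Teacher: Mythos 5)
Your plan is essentially the paper's proof: the paper (Appendix~\ref{app:red}) also adapts the classical $3$SAT$\to$$3$DM reduction, building for each variable $x_i$ a ring of $2d_i$ ``$b$''-agents, $d_i$ ``$a$''-agents and $d_i$ rooms with exactly two canonical (odd/even) coverings, a one-agent/one-room clause gadget that gains an extra welfare-$1$ L exactly when a literal agent is freed, all triples being Ls of welfare at most $1$, and factor preservation from $\mathrm{OPT}=3\hat m+k^*\le 7k^*$ (using $k^*\ge \hat m/2$, so no bounded-occurrence restriction is needed since gadget sizes already scale with occurrence counts). The ``crux'' you flag---ruling out mixed covers of a variable ring---is dispatched in the paper by a short counting observation: the $b$-agents of a gadget do not value one another and only $d_i$ rooms are valued inside the gadget, so at most $d_i$ of the $2d_i$ $b$-agents can get utility $1$, and achieving that bound forces the cover to be entirely odd or entirely even, which in turn forces consistent clause matchings.
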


In particular, the instance creates a setting where all triples have utility at most $1$. That is a maximum welfare matching would essentially be a maximal matching of Ls with maximum size. This is an important difference from the reduction in \cite{chan2016assignment} which proves NP-hardness with mixtures of Ls and Ts. 

In fact, our reduction shows that the problem is NP-hard even when each agent has degree three in the preference graph, i.e., the problem in paraNP-hard\footnote{We cannot expect to get an algorithm that runs in $n^d$ where $d$ is the degree of the agents.} with respect to degree of agents. 
\begin{corollary}
Maximizing \USW{} is NP-hard under Leontief utilities in the marriage setting even when agents' valuations are binary, symmetric, and each agent likes at most one room and at most two other agents.
\end{corollary}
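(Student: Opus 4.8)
The plan is to obtain the corollary not by a fresh construction but by auditing the reduction that already establishes \cref{thm:SW-hardL}, and verifying that it can be arranged to meet all four restrictions simultaneously. Recall that this reduction is modeled on the classical transformation of $3$SAT into \textsc{$3$-Dimensional Matching}, and that (as noted in the text preceding the statement) it only ever produces triples of utility at most $1$, i.e.\ an instance whose optimum is a maximum-size packing of L-shaped triples. Two of the four requirements come essentially for free: the construction assigns each agent and room $0/1$ values, and the compatibility values it creates are symmetric, so the resulting instance is binary and symmetric. The remaining work is to secure (i) the \emph{marriage} (bipartite-agent) structure and (ii) the degree bound ``at most one room and at most two other agents.''

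For the marriage structure I would exploit the inherently tripartite nature of the $3$DM template. The three coordinate sets of the underlying $3$DM-style instance are mapped to an agent-side $A$, an agent-side $B$, and the set of rooms, with every gadget triple consisting of exactly one element of each. Consequently every agent-agent compatibility edge runs between $A$ and $B$: the compatibility graph is bipartite, and legal roommate pairs always join one agent of $A$ with one agent of $B$. This is precisely the marriage setting, so no new argument is needed beyond checking that the gadgets respect this $A/B/\text{room}$ coloring.

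For the degree bound I would start the reduction not from general $3$SAT but from a bounded-occurrence variant (for instance $3$SAT in which every variable occurs a constant number of times), which remains NP-hard. Bounded occurrence guarantees that each element of the $3$DM instance lies in only constantly many triples, so each agent is incident to only constantly many compatibility and room edges. To push this all the way down to ``at most two agents and at most one room'' per agent, I would introduce truth-propagation (equality) gadgets: a high-occurrence variable is replaced by a cycle of low-degree copies whose agents are forced, by L-shaped triples, to agree, so that each copy carries at most two agent-edges and a single room-edge. The key is to design these equality gadgets inside the roommate-room model using only binary symmetric values and only L-triples, so that the ``all triples have utility at most $1$'' property of \cref{thm:SW-hardL} is preserved.

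The main obstacle I anticipate is exactly this last point: realizing the degree reduction without breaking correctness or the other three restrictions. One must verify that the forced-agreement gadgets (a) are binary and symmetric, (b) are realized entirely by L-triples so that the global optimum is still a maximum-size L-packing, (c) keep every agent at degree at most three while respecting the $A/B/\text{room}$ bipartition, and (d) change the optimum value only by a fixed, accountable additive amount, so that a maximum-welfare matching still decodes to a satisfying assignment. Once each gadget is checked against (a)--(d), the correctness of the original reduction transfers verbatim, and the resulting instances witness NP-hardness under binary, symmetric, marriage-setting Leontief utilities with each agent liking at most one room and at most two other agents.
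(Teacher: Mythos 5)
Your overall route is the same as the paper's: the corollary is obtained by inspecting the reduction behind \cref{thm:SW-hardL} (spelled out in \cref{app:red}) rather than by a fresh construction, and you correctly observe that binary and symmetric valuations come for free and that the tripartite $3$DM-style template yields the marriage setting (in the paper's construction the $b$-agents form one side and the $a$-agents together with the clause agents form the other, so every compatibility edge crosses the bipartition). The one place where you diverge is that you treat the degree bound as requiring extra machinery --- a bounded-occurrence SAT variant plus new ``equality'' gadgets that replace a high-occurrence variable by a cycle of low-degree copies forced to agree via L-triples. That machinery is already the paper's variable gadget: each variable $x_i$ with $d_i$ occurrences is expanded into $2d_i$ $b$-agents and $d_i$ $a$-agents arranged in exactly such a consistency cycle, each $b_{is}$ liking one room ($r_{i\lceil s/2\rceil}$), one $a$-agent, and at most one clause agent, and each $a_{it}$ liking no room and two $b$-agents. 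So no new gadget design or change of source problem is needed; the corollary follows by reading off degrees from the existing construction, which is what the paper does.

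Two cautions. First, your sketch defers the gadget verification (``once each gadget is checked against (a)--(d)\dots''), which is precisely the content of the corollary, so as written it is a plan rather than a proof. Second, that deferred check would surface a wrinkle neither your sketch nor the paper's one-line justification confronts: a clause agent $c_j$ likes its room $r_j$ and, by symmetry, one $b$-agent per literal of $c_j$ --- i.e.\ up to three other agents for a $3$-literal clause --- which exceeds the stated bound of two. Repairing this literally requires either a restricted SAT variant or splitting the clause gadget; your proposed degree-reduction toolkit is the right kind of fix, but you would need to apply it to the clause gadgets, not the variable gadgets.
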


In light of this result, we now show an approximation algorithm via the {\sc Set Packing} problem irrespective of agents' underlying utility function.

\subsubsection{Welfare Approximation via \textsc{$3$-Set Packing}}
\begin{sloppypar}
In \textsc{Weighted $3$-Set Packing} we are given a universe $\mathcal{U}$ and a collection $\mathcal{C}$ of weighted $3$-sized sets over $\mathcal{U}$, and the goal is to obtain a subset of $\mathcal{C}$ of pairwise disjoint sets with maximum total weight.    
It is well-known that \textsc{$3$-Set Packing} generalises $3$-dimensional matching. We  show that our problem reduces to \textsc{$3$-Set Packing} and can utilize the prior work on it to obtain approximation and exact algorithms. 
\end{sloppypar} 

\begin{restatable}{proposition}{setpackbound}\label{prop:setpackbound}
There exists an approximation preserving reduction  from  welfare maximization in roommate matchings  to \textsc{Weighted $3$-Set Packing}. This gives an algorithm which is \USW{$0.559$} under Leontief utilities.
\end{restatable}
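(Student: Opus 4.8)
The plan is to exhibit a weight-preserving, structure-preserving correspondence between roommate matchings and packings of triples, so that any approximation algorithm for \textsc{Weighted $3$-Set Packing} transfers directly to our problem.

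\textbf{The reduction.} Given an instance $I = \langle \N, \R, v, \widehat{v}\rangle$, I would take the universe to be $\mathcal{U} = \N \cup \R$, which has $3n$ elements. For every triple $(i,j,r)$ with distinct $i,j \in \N$ and $r \in \R$, I create one $3$-element set $S_{ijr} = \{i,j,r\}$ and assign it weight $w(S_{ijr}) = u_i(j,r) + u_j(i,r)$, i.e. exactly the total welfare this triple contributes to any matching containing it (matching the objective in \cref{alg:greedy}). Since all valuations map into $\mathbb{R}_{\geq 0}$, every weight is non-negative. The collection $\mathcal{C} = \{S_{ijr}\}$ has size $O(n^3)$, so the reduction is polynomial, and it contains only sets consisting of two agents and one room; hence any disjoint sub-collection automatically respects the required structure.

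\textbf{Optima coincide.} I would then show the two optimal values are equal. On one hand, every roommate matching $\mu$ is by definition a partition of $\mathcal{U}$ into $n$ disjoint sets of $\mathcal{C}$, and $\sum_{S \in \mu} w(S) = \sum_{i \in \N} u_i(\mu) = \USW{}(\mu)$, so $OPT_{\mathrm{SP}} \ge \USW{}(\mu^*)$. Conversely, let $\mathcal{P} \subseteq \mathcal{C}$ be any disjoint packing using $k$ sets; it covers exactly $2k$ agents and $k$ rooms, leaving $2(n-k)$ agents and $n-k$ rooms uncovered, which can be partitioned arbitrarily into $n-k$ further triples. This yields a complete matching $\mu_{\mathcal{P}}$, and because the padding triples have non-negative weight, $\USW{}(\mu_{\mathcal{P}}) \ge w(\mathcal{P})$. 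Applying this to an optimal packing gives $\USW{}(\mu^*) \ge OPT_{\mathrm{SP}}$, hence equality.

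\textbf{Approximation preservation and the $0.559$ bound.} The completion step is precisely what upgrades the reduction from optimum-preserving to approximation-preserving. Given an $\alpha$-approximate packing $\mathcal{P}$ with $w(\mathcal{P}) \ge \alpha \cdot OPT_{\mathrm{SP}} = \alpha \cdot \USW{}(\mu^*)$, the completed matching satisfies $\USW{}(\mu_{\mathcal{P}}) \ge w(\mathcal{P}) \ge \alpha \cdot \USW{}(\mu^*)$, i.e. it is \USW{$\alpha$}. Plugging in the current best approximation algorithm for \textsc{Weighted $3$-Set Packing} (equivalently, maximum-weight independent set in $4$-claw-free graphs), whose ratio is $0.559$, yields a \USW{$0.559$} matching in polynomial time. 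I would also remark that nothing in the argument uses the form of $f$: the weights $u_i(j,r)+u_j(i,r)$ are well-defined for any polynomial-time-computable, non-negative utility function, so the bound holds for Leontief, additive, and more general utilities alike.

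\textbf{Main obstacle.} The one genuine subtlety, and the step I would argue most carefully, is the structural mismatch that \textsc{Set Packing} maximizes over \emph{partial} packings whereas a roommate matching must cover \emph{every} agent and room. The completion argument resolves this, but it rests on two facts I would make explicit: (i) leftover elements always split evenly into two-agents-one-room triples, since each chosen set removes exactly two agents and one room; and (ii) utilities are non-negative, so padding with arbitrary triples never decreases welfare. Were either to fail, the reduction would preserve only the optimum and not the approximation factor.
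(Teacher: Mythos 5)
Your proposal is correct and follows essentially the same reduction as the paper: universe $\N \cup \R$, one weighted $3$-set per (agent, agent, room) triple with weight equal to the two agents' combined utility, and then an appeal to the $0.559$-approximation for \textsc{Weighted $3$-Set Packing}. In fact you are somewhat more careful than the paper, which simply asserts the weight-$k$ packing / welfare-$k$ matching equivalence with a ``clearly,'' whereas you explicitly justify the completion of a partial packing to a full matching using non-negativity of the padding triples' weights --- a detail worth keeping.
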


\begin{proof}

Given a roommate matching instance $\langle \N, \R, v, \widehat{v}\rangle$, we construct an instance of \textsc{Weighted $3$-Set Packing} where the universe $\mathcal{U}$ and the collection of subsets $\mathcal{C}$ are defined as follows.

All the agents and the rooms are the elements of the universe $\mathcal{U}$.
For each triple consisting of a pair of (distinct) agents and a room, we create a set in $\mathcal{C}$ and we assign the weight for this set to be the sum of the utilities of the two agents in the set. The utility of an agent in a triple is computed based on the assumed utility function (Leontief or additive). Observe that this does not depend on whether the utilities are Leontief or additive.  Clearly, a set packing of weight $k$ exists in the constructed instance if and only if a roommate matching of \USW{} $k$ exists in the original instance. %

Consequently, the approximation algorithm in~\cite{thiery2023improved} gives a $0.559$-approximation for our problem. 
\end{proof}

\noindent We can obtain an exact algorithm via the known exact algorithm for \textsc{Weighted $3$-Set Packing}~\citep{zehaviforgetfulness}.

\begin{corollary}\label{cor:setpackboundFPT}
There exists an algorithm to maximize the \USW{} under Leontief utilities for arbitrary valuations that runs 
in time $O^*(8.097^n)$.
\end{corollary}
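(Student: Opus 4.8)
The plan is to treat the reduction of \cref{prop:setpackbound} as a black box and feed its output into the known exact algorithm for \textsc{Weighted $3$-Set Packing}. First I would invoke \cref{prop:setpackbound} on the given instance $\langle \N, \R, v, \widehat{v}\rangle$: this produces a universe $\mathcal{U} = \N \cup \R$ of $3n$ elements together with a collection $\mathcal{C}$ of at most $\binom{2n}{2}\cdot n = O(n^3)$ weighted $3$-sets, one for each triple $(i,j,r)$, whose weight is $u_i(j,r)+u_j(i,r)\ge 0$ computed under the Leontief utility function. Building this instance is polynomial in the input, so its cost is absorbed into the $O^*$ notation and does not affect the exponential base.

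The crux is to pin down the parameter with which the set-packing algorithm is invoked. A complete roommate matching selects exactly $n$ pairwise disjoint triples and hence corresponds to a packing of exactly $n$ sets covering all $3n$ elements of $\mathcal{U}$; conversely, any packing of $n$ disjoint $3$-sets is forced to be such a perfect packing. Because all weights are non-negative and the model only admits complete matchings, I would argue that a maximum-weight packing may be taken to consist of exactly $n$ sets: a packing of fewer sets leaves uncovered elements that can be grouped into zero-weight triples and appended without decreasing the total weight. Thus maximizing \USW{} over complete matchings is equivalent to computing a maximum-weight packing whose number of selected sets equals $k = n$, and a weight-$k$ packing exists if and only if a roommate matching of social welfare $k$ exists.

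With this identification I would run the exact algorithm of \citet{zehaviforgetfulness} for \textsc{Weighted $3$-Set Packing}, which computes a maximum-weight packing of $k$ disjoint sets in time $O^*(8.097^{k})$. Substituting $k = n$ yields the claimed $O^*(8.097^{n})$ bound, and translating the returned packing back through the reduction of \cref{prop:setpackbound} recovers a welfare-maximizing roommate matching. The only genuine subtlety---and the step I would be most careful about---is this matching of parameters: the cited running time is measured in the number of sets in the packing rather than in the size of the universe, so the entire argument rests on the observation that a complete matching always uses precisely $n$ triples. This fixes $k=n$ and makes the exponential base of the cited bound directly applicable, while the non-negativity of the utilities is what lets me pass freely between ``maximum-weight packing'' and ``maximum-weight perfect packing of size $n$''.
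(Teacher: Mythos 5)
Your proposal is correct and follows exactly the route the paper intends: apply the reduction of \cref{prop:setpackbound} and then invoke the exact algorithm of \citet{zehaviforgetfulness} with the packing-size parameter fixed to $k=n$, which is forced because a complete roommate matching uses precisely $n$ disjoint triples and non-negative weights let one pad any packing to a perfect one. The paper states the corollary without spelling out these details, so your write-up is simply a careful elaboration of the same argument (only a minor notational slip where $k$ is reused for both the packing size and the welfare value).
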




\section{Strategyproof Mechanisms}
We now show upper and lower bounds on the approximation guarantees possible under strategyproof mechanisms under both Leontief and additive utilities.


\subsection{Impossibility Results}

We start by finding that strategyproofness is often incompatible with welfare under some natural utility models (e.g. general valuations, additive utilities). These impossibilities motivate the study of strategyproof mechanism for restricted valuations (e.g. binary valuations).

\begin{restatable}{theorem}{spimpossible}\label{thm:no-alpha-approx}
There exists no strategyproof mechanism that can guarantee \USW{$\alpha$} for any $0<\alpha \leq 1$, for general (unrestricted) valuations under either Leontief or additive utilities.
\end{restatable}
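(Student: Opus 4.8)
The plan is to fix an arbitrary target ratio $\alpha\in(0,1]$, assume toward a contradiction that $M$ is a strategyproof mechanism guaranteeing $\USW{\alpha}$, and exhibit a tiny instance (four agents, two rooms) on which $M$ must violate strategyproofness. The engine of the argument is that with unbounded valuations a single agent can report an astronomically large value to force any $\alpha$-approximate mechanism into one specific configuration: if the optimal welfare is concentrated in a single term of size $\Theta(V)$ while every matching omitting the corresponding triple has welfare bounded by a constant $C$ independent of $V$, then for $V$ large enough (depending on $\alpha$) the inequality $\alpha\cdot\USW{}(\mu^*)>C$ holds, so the $\USW{\alpha}$ matching $M$ outputs is pinned down on the relevant agents.

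First I would build a ``forcing'' instance $I$. Two agents (say $2$ and $3$) hold huge, mutually reinforcing valuations so that exactly one triple $t^*$ carries the dominant welfare $\Theta(V)$, and every matching omitting $t^*$ has welfare at most a constant. Choosing $V$ large, any $\USW{\alpha}$ mechanism must output $t^*$; with four agents and two rooms this forces the remaining two agents, including the target agent $1$ and the leftover room, into the complementary triple, in which agent $1$'s true utility is $0$. Next I would let agent $1$ deviate, reporting an even larger value $V'\gg V$ on the configuration it genuinely prefers, yielding an instance $I'$. Now the dominant $\Theta(V')$ term belongs to agent $1$, so the same forcing argument shows every $\USW{\alpha}$ matching for $I'$ must place agent $1$ in its desired configuration; evaluated at agent $1$'s \emph{true} valuation this gives utility $1>0$. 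Hence agent $1$ strictly benefits from misreporting, contradicting strategyproofness.

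For the concrete numbers, the crucial difference between the two models is how to make the dominant term genuinely all-or-nothing. Under Leontief utilities complementarity does this automatically: setting $v_2(3)=\widehat{v}_2(A)=v_3(2)=\widehat{v}_3(A)=V$ with zero value for room $B$ makes the $\min$ vanish unless the \emph{entire} triple $(2,3,A)$ is realized, so any alternative loses the whole $\Theta(V)$ contribution. Here agent $1$ truthfully values the contested triple $(1,4,A)$ at $\min(v_1(4),\widehat{v}_1(A))=1$ but is squeezed into room $B$ (utility $0$) by the forced $(2,3,A)$; inflating $v_1(4)$ and $\widehat{v}_1(A)$ to $V'$ forces $M$ to hand agent $1$ room $A$ and partner $4$, raising its true utility to $1$.

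Under additive utilities the same split fails, because a value placed on a room leaks into every matching using that room, so spreading the huge value across partner and room only yields a factor-$2$ gap---insufficient once $\alpha\le\tfrac12$. The remedy is to concentrate each huge value in the single \emph{partner} dimension ($v_2(3)=v_3(2)=V$, all room values $0$), making ``$2$ paired with $3$'' a genuinely binary resource: separating them yields welfare $0$ from that pair, restoring the all-or-nothing behaviour for every $\alpha>0$. Agent $1$'s preferences are then expressed purely through partner values ($v_1(3)=1$ truthfully, $v_1'(3)=V'$ in the deviation), and the two-step forcing argument goes through verbatim. I expect this additive case to be the main obstacle: the whole impossibility hinges on placing the decisive values on a dimension the mechanism \emph{cannot partially satisfy}, so that the forced configuration dominates for all $\alpha>0$ rather than only for large $\alpha$.
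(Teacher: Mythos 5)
Your proposal is correct and uses essentially the same argument as the paper: a four-agent, two-room instance in which an agent receiving utility $0$ exploits the unbounded valuation domain by reporting a huge compatibility (and, for Leontief, room) value, making its preferred matching the unique $\USW{\alpha}$ outcome and thereby contradicting strategyproofness; you also correctly isolate the additive subtlety that the inflated value must sit on the partner dimension, which the paper handles by zeroing out all room values. The only cosmetic difference is that you add a first tier of large values so the truthful instance already pins down the mechanism's output, whereas the paper's truthful instance is a symmetric preference cycle with two optimal matching types and the argument cases on whichever one the mechanism selects.
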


\begin{proof}
We shall establish this result by giving an instance for each value of $\alpha \in (0,1]$ where for every matching that is \USW{$\alpha$}, there is an agent who has an incentive to misreport their preferences. 

Fix a value of $\alpha \in (0,1]$. Consider an instance with 4 agents and 2 rooms, where each agent $i\in \N$ has value $\widehat{v}_i(r)=k$ for every room $r\in \R $, for $k\geq 0$. We will fix this value later (depending on whether the utilities are additive or Leontief), but it is important that each agent is indifferent among the rooms. Consequently, the agents can be strategic about their compatibility values only, if at all. 

The agent compatibility values are depicted in Figure \ref{fig:noapproxSW}. There are four agents $\{a_1,a_2,a_3,a_4\}$. Agent $a_1$ has value $1$ for $a_2$, agent $a_2$ has value $1$ for $a_3$, agent $a_3$ has value $1$ for $a_4$ and agent $a_4$ has value $1$ for $a_1$. That is, $v_{a_1}(a_2)=v_{a_2}(a_3)=v_{a_3}(a_4)=v_{a_4}(a_1)=1$. All other agent values are $0$. The two possible types of maximum welfare matchings are depicted by different color edges. 

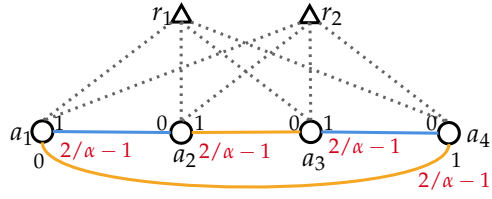
\begin{figure}[t]
    \centering

\tikzset{every picture/.style={line width=1.15pt}} 

\begin{tikzpicture}[x=0.75pt,y=0.75pt,yscale=-1,xscale=1]

\draw   (221.77,70.6) .. controls (221.77,67.62) and (224.13,65.21) .. (227.05,65.21) .. controls (229.96,65.21) and (232.32,67.62) .. (232.32,70.6) .. controls (232.32,73.58) and (229.96,76) .. (227.05,76) .. controls (224.13,76) and (221.77,73.58) .. (221.77,70.6) -- cycle ;
\draw   (152.1,69.94) .. controls (152.1,66.96) and (154.47,64.54) .. (157.38,64.54) .. controls (160.29,64.54) and (162.65,66.96) .. (162.65,69.94) .. controls (162.65,72.92) and (160.29,75.33) .. (157.38,75.33) .. controls (154.47,75.33) and (152.1,72.92) .. (152.1,69.94) -- cycle ;
\draw   (86.77,70.27) .. controls (86.77,67.29) and (89.13,64.88) .. (92.05,64.88) .. controls (94.96,64.88) and (97.32,67.29) .. (97.32,70.27) .. controls (97.32,73.25) and (94.96,75.67) .. (92.05,75.67) .. controls (89.13,75.67) and (86.77,73.25) .. (86.77,70.27) -- cycle ;
\draw   (16.77,69.77) .. controls (16.77,66.79) and (19.13,64.38) .. (22.05,64.38) .. controls (24.96,64.38) and (27.32,66.79) .. (27.32,69.77) .. controls (27.32,72.75) and (24.96,75.17) .. (22.05,75.17) .. controls (19.13,75.17) and (16.77,72.75) .. (16.77,69.77) -- cycle ;
\draw [color={rgb, 255:red, 74; green, 144; blue, 226 }  ,draw opacity=1 ]   (27.32,69.77) -- (86.77,70.27) ;
\draw [color={rgb, 255:red, 74; green, 144; blue, 226 }  ,draw opacity=1 ]   (162.65,69.94) -- (221.77,70.6) ;
\draw [color={rgb, 255:red, 245; green, 166; blue, 35 }  ,draw opacity=1 ]   (152.1,69.94) -- (97.32,70.27) ;
\draw [color={rgb, 255:red, 245; green, 166; blue, 35 }  ,draw opacity=1 ]   (22.05,75.17) .. controls (22,104.83) and (227,105.5) .. (227.05,76) ;
\draw   (92.17,6) -- (97,14.5) -- (87.35,14.5) -- cycle ;
\draw   (156.84,6.33) -- (161.67,14.83) -- (152.01,14.83) -- cycle ;
\draw [color={rgb, 255:red, 100; green, 100; blue, 100 }  ,draw opacity=1 ] [dash pattern={on 1pt off 2pt}]   (87.35,14.5) -- (22.05,64.38) ;
\draw [color={rgb, 255:red, 100; green, 100; blue, 100 }  ,draw opacity=1 ] [dash pattern={on 1pt off 2pt}]   (152.01,14.83) -- (22.05,64.38) ;
\draw [color={rgb, 255:red, 100; green, 100; blue, 100 }  ,draw opacity=1 ] [dash pattern={on 1pt off 2pt}]   (161.67,14.83) -- (227.05,65.21) ;
\draw [color={rgb, 255:red, 100; green, 100; blue, 100 }  ,draw opacity=1 ]  [dash pattern={on 1pt off 2pt}]  (157,14.83) -- (157.38,64.54) ;
\draw [color={rgb, 255:red, 100; green, 100; blue, 100 }  ,draw opacity=1 ] [dash pattern={on 1pt off 2pt}]   (157,14.83) -- (92.05,64.88) ;
\draw [color={rgb, 255:red, 100; green, 100; blue, 100 }  ,draw opacity=1 ] [dash pattern={on 1pt off 2pt}]   (97,14.5) -- (227.05,65.21) ;
\draw [color={rgb, 255:red, 100; green, 100; blue, 100 }  ,draw opacity=1 ] [dash pattern={on 1pt off 2pt}]   (92,14.5) -- (157.38,64.54) ;
\draw [color={rgb, 255:red, 100; green, 100; blue, 100 }  ,draw opacity=1 ] [dash pattern={on 1pt off 2pt}]   (92,14.5) -- (92.05,64.88) ;

\draw (26.26,60) node [anchor=north west][inner sep=0.75pt]  [font=\scriptsize]  {$1$};
\draw (78,60) node [anchor=north west][inner sep=0.75pt]  [font=\scriptsize]  {$0$};
\draw (97,60) node [anchor=north west][inner sep=0.75pt]  [font=\scriptsize]  {$1$};
\draw (213.9,60) node [anchor=north west][inner sep=0.75pt]  [font=\scriptsize]  {$0$};
\draw (16,79.07) node [anchor=north west][inner sep=0.75pt]  [font=\scriptsize]  {$0$};
\draw (144.68,60) node [anchor=north west][inner sep=0.75pt]  [font=\scriptsize]  {$0$};
\draw (226.34,78.59) node [anchor=north west][inner sep=0.75pt]  [font=\scriptsize]  {$1$};
\draw (162,60) node [anchor=north west][inner sep=0.75pt]  [font=\scriptsize]  {$1$};
\draw (210,88) node [anchor=north west][inner sep=0.75pt]  [font=\scriptsize,color={rgb, 255:red, 208; green, 2; blue, 27 }  ,opacity=1 ]  {$2/\alpha -1$};
\draw (99.32,73.67) node [anchor=north west][inner sep=0.75pt]  [font=\scriptsize,color={rgb, 255:red, 208; green, 2; blue, 27 }  ,opacity=1 ]  {$2/\alpha -1$};
\draw (29.32,73.17) node [anchor=north west][inner sep=0.75pt]  [font=\scriptsize,color={rgb, 255:red, 208; green, 2; blue, 27 }  ,opacity=1 ]  {$2/\alpha -1$};
\draw (164.68,71.02) node [anchor=north west][inner sep=0.75pt]  [font=\scriptsize,color={rgb, 255:red, 208; green, 2; blue, 27 }  ,opacity=1 ]  {$2/\alpha -1$};
\draw (4,65.78) node [anchor=north west][inner sep=0.75pt]  [font=\small]  {$a_{1}$};
\draw (234.38,66.59) node [anchor=north west][inner sep=0.75pt]  [font=\small]  {$a_{4}$};
\draw (86.24,76.97) node [anchor=north west][inner sep=0.75pt]  [font=\small]  {$a_{2}$};
\draw (151.33,78.78) node [anchor=north west][inner sep=0.75pt]  [font=\small]  {$a_{3}$};
\draw (75.62,5.07) node [anchor=north west][inner sep=0.75pt]  [font=\small]  {$r_{1}$};
\draw (161,5.4) node [anchor=north west][inner sep=0.75pt]  [font=\small]  {$r_{2}$};

\end{tikzpicture}

    \caption{Agent compatibility values and misreports (in red) for the instance described in \cref{thm:no-alpha-approx}. The different colors on the edge correspond to the different types of maximum \USW{} matchings for this instance. }
    \label{fig:noapproxSW}
\end{figure}

\paragraph{Additive Utilities.} We first prove the theorem when the utilities are additive. We set the agents' values for the rooms, $\widehat{v}_i(r)=k=0$. 
Let us define the matchings of type $\mu_1$ to be the matchings where $(a_1,a_3)$ and $(a_2,a_4)$ are matched: $\mu_1 = \{(a_1,a_3, \cdot), (a_2,a_4, \cdot)\}$ where $\cdot$ indicates any disjoint assignment of the two rooms $r_1$ and $r_2$. Observe that, a matching of type $\mu_1$ has \USW{} zero. It can be easily verified that except for the matchings of type $\mu_1$, every other matching is welfare maximizing and yields a social welfare of $2$. 
However, each maximum \USW{} matching will satisfy only half of the agents. Thus, the other agents have an incentive to misreport their values. We show that an agent has incentive to misreport for each maximum welfare matching.

If the agents were matched as $\{(a_1,a_2,\cdot),(a_3,a_4,\cdot)\}$, $a_2$ can misreport her value for $a_3$ as $\beta$, where $\beta>\frac{2}{\alpha}-1$, such that matching $\{(a_1,a_4,\cdot),(a_2,a_3,\cdot)\}$ will generate social welfare $\beta+1$ such that that not only is it the maximum \USW{} roommate matching, but we also have that $\alpha(\beta +1)>2$. That is, any \USW{$\alpha$} mechanism must output $\{(a_1,a_4,\cdot),(a_2,a_3,\cdot)\}$. 
However in this case, $a_1$ could misreport his value for $a_2$ to be at least $\frac{2}{\alpha}-1$ analogously so as to be matched to $a_2$ himself. As a result, no \USW{$\alpha$} can be strategyproof in this instance. 

\paragraph{Leontief Utilities.} We now build the case for Leontief utilities. We set the true room values as $k=1$. Analogous to the proof for additive utilities, 
a matching of type $\mu_1$ has \USW{} of $0$. Further, the maximum \USW{} is $2$.
Similar to the previous arguments,  agents who get utility $0$ in the chosen matching can now misreport both their agent compatibility and room values as $\beta$ so the maximum \USW{} now becomes $\beta+1$ with $\beta>\frac{2}{\alpha}-1$. 
\end{proof}

 In contrast to \cref{thm:no-alpha-approx}, under binary valuations, approximate \USW{} mechanisms can be strategyproof. We now provide an upper bound to these approximations for binary additive utilities. 

\begin{restatable}{theorem}{binaddsp}\label{thm:addsp23}
    There exists no strategyproof mechanism that can guarantee \USW{$\alpha$} for any $\alpha>\sfrac{2}{3}$ , for binary valuations under additive utilities.
\end{restatable}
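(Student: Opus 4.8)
The plan is to convert the approximation requirement into an \emph{exact}-optimality requirement on a carefully chosen family of instances, and then defeat exact optimality with a single-like misreport. The starting point is an integrality/forcing observation: under binary additive utilities every agent's utility lies in $\{0,1,2\}$, so \USW{} is always a nonnegative integer. Consequently, on any instance whose optimum is exactly $3$, the open interval $\bigl(\tfrac{2}{3}\cdot 3,\;3\bigr)=(2,3)$ contains no integer, so any mechanism guaranteeing \USW{$\alpha$} with $\alpha>\tfrac23$ must return a matching of welfare at least $3$, i.e.\ an \emph{optimal} matching. (The same forcing holds whenever the optimum is $1$, $2$, or $3$.) This reduces the theorem to exhibiting instances with optimum $3$ on which no strategyproof mechanism can always be optimal.

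Next I would design an instance $I$ with \emph{exactly two} optimal matchings $\mu$ and $\mu'$, each of welfare $3$ and built on \emph{different} agent-pairings, arranged symmetrically around two distinguished agents $x$ and $y$. The features I would enforce are: (i) in $\mu$ agent $x$ gets utility $1$ although it would get $2$ in $\mu'$, and symmetrically $y$ gets $2$ in $\mu$ but only $1$ in $\mu'$; and (ii) $x$ owns a ``surplus'' like --- a single (one-directional) compatibility or room value --- that is realized in $\mu$ but not in $\mu'$, and symmetrically for $y$. Since values are binary, $x$ can deviate only by toggling a single like; dropping its surplus like yields an instance $I_x$ in which $\mu$ loses exactly one unit of welfare (down to $2$), $\mu'$ is untouched (still $3$), and --- because $\mu,\mu'$ were the \emph{only} welfare-$3$ matchings and dropping a like can only decrease welfares --- $\mu'$ becomes the \emph{unique} optimum of $I_x$.

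With these ingredients the argument closes cleanly. By the forcing observation $M(I)\in\{\mu,\mu'\}$. If $M(I)=\mu$, then $x$, whose true utility is $1$, deviates to $I_x$; forcing again gives $M(I_x)=\mu'$, whose true value to $x$ is $2>1$ (the surplus like it dropped is unused in $\mu'$, so $x$'s utility there is computed entirely from still-reported true likes), a profitable manipulation. If instead $M(I)=\mu'$, the symmetric deviation by $y$ is profitable. Either way $M$ is not strategyproof, so no strategyproof mechanism can beat $\tfrac23$ of the optimum.

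I expect the main obstacle to be the second step: \emph{controlling the set of optimal matchings}. Naive four-agent/two-room gadgets tend to admit \emph{spurious} optima (from swapping the two rooms, or from interchangeable ``object'' agents), and any extra welfare-$3$ matching that happens to satisfy \emph{every} agent with manipulation leverage hands the mechanism a ``safe'' optimal choice and breaks the argument. The construction must therefore (a) break room-symmetry so that room swaps strictly change welfare, and (b) break partner-symmetry so that the two optima use genuinely different, non-interchangeable pairings, all while keeping each victim's single surplus like realized in exactly the matching that disadvantages it. This is precisely where binary-but-asymmetric valuations are essential: the surplus like must be a non-reciprocated compatibility value, which is why the bound is $\tfrac23$ here and improves to $\tfrac34$ once symmetry is imposed (\cref{cor:addspsymm}).
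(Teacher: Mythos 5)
Your reduction of the approximation guarantee to exact optimality via integrality is correct and is exactly the (implicit) logic the paper uses: under binary additive utilities welfare is an integer, so on an instance with optimum $3$ any \USW{$\alpha$} mechanism with $\alpha>\sfrac{2}{3}$ must output a welfare-$3$ matching. The gap is in the second step, which you yourself flag as ``the main obstacle'' but leave unconstructed --- and the specific blueprint you propose for it appears not to be realizable. You want two welfare-$3$ optima $\mu,\mu'$ in which $x$ gets $1$ and $2$ respectively, with $x$'s single realized like in $\mu$ disjoint from the partner/room pair it realizes in $\mu'$, and with \emph{no other} forced matching giving $x$ utility $\le 1$. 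The obstruction is this: once $x$ has a liked partner $p'$ and liked room $r'$ (needed to reach utility $2$) plus a third ``surplus'' like (needed to reach utility $1$ in $\mu$ without touching $\{p',r'\}$), the instance almost inevitably also admits welfare-$3$ matchings in which $x$ is seated with $p'$ in a disliked room, or in $r'$ with a disliked partner, realizing \emph{exactly one} like from $\{v_x(p'),\widehat{v}_x(r')\}$. Such a matching cannot be killed by dropping likes (dropping $v_x(p')$ or $\widehat{v}_x(r')$ also destroys $\mu'$), nor by adding likes ($x$ already contributes its maximum of $2$ to $\mu'$, so $\mu'$ cannot be boosted above the pack). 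One can suppress the ``$x$ with $p'$ elsewhere'' matchings by making the other contributing agent $y$ need $p'$ and $r'$ too, but then the symmetric argument for $y$ breaks: there is a surviving optimum in which $y$'s one realized like is $v_y(p')$, which sits inside $y$'s own target configuration, and a mechanism that always selects that matching defeats both deviations. So the family of instances your plan calls for does not obviously exist, and the burden of the proof is precisely there.

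The paper escapes this by using a structurally different manipulation. Its instance has optimum $2$, not $3$: $a_2$ likes only $r_1$, while $a_1$ and $a_3$ each like only $a_2$, so the two optima $\mu_1=\{(a_1,a_2,r_1),(a_3,a_4,r_2)\}$ and $\mu_2=\{(a_1,a_4,r_2),(a_2,a_3,r_1)\}$ each give one of $a_1,a_3$ utility $0$. The victim then \emph{adds} a false like for $r_1$; since it truly contributes only $1$ to its preferred optimum, the false like raises that optimum's reported welfare from $2$ to $3$, while the bad optimum (where the victim sits in $r_2$) stays at $2$ and is therefore excluded by the forcing argument. The victim's true utility in the forced outcome is $1>0$. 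Going from $0$ to $1$ by inflating an unsaturated optimum is what makes the gadget work; going from $1$ to $2$ by deflating competitors, as you propose, runs into the saturation problem above. If you want to salvage your route you would need to either prove such an instance exists or switch to the additive-a-false-like mechanism; as written, the proof is incomplete at its central step.
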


    

\begin{proof} 
First consider an instance with binary additive utilities as depicted in \cref{fig:binaddSP}. This, again, is an instance with $4$ agents and $2$ rooms. We first assume that $a_2$ has value only for $r_1$ and no value for any agent. That is, $v_{a_2}(i)=0=\widehat{v}_{a_2}(r_2)$ for all $i\neq 2$ and $\widehat{v}_{a_2}(r_1)=1$. 

Agents $a_1$ and $a_3$ have value only for $a_2$ and no value for any room. That is, $v_{a_1}(a_2)=v_{a_3}(a_2)=1$ and $v_{a_1}(i)=v_{a_3}(i')=\widehat{v}_{a_1}(r)=\widehat{v}_{a_3}(r)=0$ for any $i,i'\neq a_2$ and any room $r\in \R$. Agent $a_4$ has no value for any agent or room, that $v_{a_4}(i)=\widehat{v}(r)=0$ for all agents $i$ and rooms $r$. Observe that these valuations are not symmetric.

\begin{sloppypar}
There are two maximum \USW{} roommate matchings: $\mu_1 = \{(a_1,a_2,r_1)$, $(a_3,a_4,r_2)\}$ and $\mu_2=\{(a_1,a_4,r_2),(a_2,a_3,r_1)\}$. Both give \USW{} equal to $2$. Now if $a_1$ were to misreport her value for $r_1$ as being $1$, while all others were honest the only maximum roommate matching would be $\mu_1$ with \USW{}$=3$. In fact, $\mu_1$ would be the only \USW{$\alpha$} matching for $\alpha > \sfrac{2}{3}$. 

Analogously, if all other agents were honest, then $a_3$ could misreport his value for $r_1$ as $1$, and $\mu_2$ would be the only \USW{$\alpha$} matching for $\alpha > \sfrac{2}{3}$. Consequently, both $a_1$ and $a_3$ have incentives to misreport their value for $r_1$ as $1$ under any mechanism which is \USW{$\alpha$}. \end{sloppypar}

   \begin{figure}[t]
    \centering
    \tikzset{every picture/.style={line width=0.75pt}} 

\begin{tikzpicture}[x=0.85pt,y=0.75pt,yscale=-1,xscale=1]

\draw   (48,2.75) -- (53,11.75) -- (43,11.75) -- cycle ;
\draw   (3.25,57.71) .. controls (3.25,54.74) and (5.66,52.33) .. (8.63,52.33) .. controls (11.59,52.33) and (14,54.74) .. (14,57.71) .. controls (14,60.68) and (11.59,63.08) .. (8.63,63.08) .. controls (5.66,63.08) and (3.25,60.68) .. (3.25,57.71) -- cycle ;
\draw   (123.08,57.88) .. controls (123.08,54.91) and (125.49,52.5) .. (128.46,52.5) .. controls (131.43,52.5) and (133.83,54.91) .. (133.83,57.88) .. controls (133.83,60.84) and (131.43,63.25) .. (128.46,63.25) .. controls (125.49,63.25) and (123.08,60.84) .. (123.08,57.88) -- cycle ;
\draw   (83.25,57.71) .. controls (83.25,54.74) and (85.66,52.33) .. (88.63,52.33) .. controls (91.59,52.33) and (94,54.74) .. (94,57.71) .. controls (94,60.68) and (91.59,63.08) .. (88.63,63.08) .. controls (85.66,63.08) and (83.25,60.68) .. (83.25,57.71) -- cycle ;
\draw   (43.25,57.71) .. controls (43.25,54.74) and (45.66,52.33) .. (48.63,52.33) .. controls (51.59,52.33) and (54,54.74) .. (54,57.71) .. controls (54,60.68) and (51.59,63.08) .. (48.63,63.08) .. controls (45.66,63.08) and (43.25,60.68) .. (43.25,57.71) -- cycle ;
\draw   (128,3.25) -- (133,12.25) -- (123,12.25) -- cycle ;
\draw    (14,57.71) -- (43.25,57.71) ;
\draw    (48.63,52.33) -- (48.2,11.4) ;
\draw    (54,57.71) -- (83.25,57.71) ;
\draw [color={rgb, 255:red, 208; green, 10; blue, 27 }  ,draw opacity=1 ] [dash pattern={on 2pt off 1pt}]  (43,11.75) -- (8.63,52.33) ;
\draw [color={rgb, 255:red, 208; green, 10; blue, 27 }  ,draw opacity=1 ] [dash pattern={on 2pt off 1pt}]  (52.5,11.75) -- (88.63,52.33) ;

\draw (26,5) node [anchor=north west][inner sep=0.75pt]  [font=\scriptsize]  {$r_{1}$};
\draw (137,5) node [anchor=north west][inner sep=0.75pt]  [font=\scriptsize]  {$r_{2}$};
\draw (3,65) node [anchor=north west][inner sep=0.75pt]  [font=\scriptsize]  {$a_{1}$};
\draw (124,65) node [anchor=north west][inner sep=0.75pt]  [font=\scriptsize]  {$a_{4}$};
\draw (85,65) node [anchor=north west][inner sep=0.75pt]  [font=\scriptsize]  {$a_{3}$};
\draw (43,65) node [anchor=north west][inner sep=0.75pt]  [font=\scriptsize]  {$a_{2}$};

\end{tikzpicture}

    \caption{Agent preferences for instance described in \cref{thm:addsp23}. The optimal matching is shown in solid black edges and the misreports are shown in dashed red edges.}
    \label{fig:binaddSP}
\end{figure}
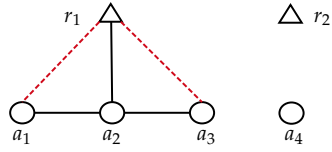 

\paragraph{Symmetric Valuations.} Now consider  a very similar instance, now with binary and symmetric valuations for the agents as depicted in \cref{fig:binaddSP}. That is, agent $a_2$ now has value for both $a_1$ and $a_3$, $v_{a_2}(a_1)=v_{a_2}(a_3)=1$. All other value remain the same as the earlier instance. Again, $\mu_1$ and $\mu_2$ (as defined earlier) are the only maximum \USW{} matchings, now with \USW{}$=3$. 

If $a_1$ were to misreport the value for $r_1$ as being $1$, then any \USW{$\alpha$} mechanism for $\alpha >\sfrac{3}{4}$ would always output $\mu_1$, if all agents other than $a_1$ were honest. Analogous to the earlier proof, no \USW{$\alpha$} mechanism can be strategyproof for $\alpha>\sfrac{3}{4}$.
\end{proof}

\begin{corollary}\label{cor:addspsymm}
There exists no strategyproof mechanism that can guarantee \USW{$\alpha$} for any $\alpha>\sfrac{3}{4}$, when agent valuations are symmetric,  in addition to being binary and additive.
\end{corollary}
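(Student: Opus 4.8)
The plan is to observe that this corollary is already witnessed by the symmetric instance constructed in the proof of \cref{thm:addsp23} (\cref{fig:binaddSP}), so my main task is to verify the welfare bookkeeping carefully enough to pin the threshold at exactly $\sfrac{3}{4}$. Recall the instance: four agents $a_1,a_2,a_3,a_4$ and two rooms $r_1,r_2$, with symmetric binary additive valuations given by $v_{a_1}(a_2)=v_{a_2}(a_1)=v_{a_2}(a_3)=v_{a_3}(a_2)=1$ and $\widehat{v}_{a_2}(r_1)=1$, with all remaining valuations equal to $0$. First I would enumerate the matchings to confirm that the only two welfare-maximizing matchings are $\mu_1=\{(a_1,a_2,r_1),(a_3,a_4,r_2)\}$ and $\mu_2=\{(a_1,a_4,r_2),(a_2,a_3,r_1)\}$, each with \USW{} equal to $3$, while every other matching has \USW{} at most $2$. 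In $\mu_1$ agent $a_1$ obtains true utility $1$ (from $a_2$) and $a_3$ obtains $0$; in $\mu_2$ the roles are reversed.

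Next I would analyse the deviation of $a_1$ who raises $\widehat{v}_{a_1}(r_1)$ from $0$ to $1$ while everyone else reports truthfully. Under this report $\mu_1$ becomes the unique welfare maximizer with \USW{} equal to $4$ (since $a_1$ now collects $v_{a_1}(a_2)+\widehat{v}_{a_1}(r_1)=2$), whereas every other matching---including $\mu_2$---has \USW{} at most $3$. Hence for any $\alpha>\sfrac{3}{4}$ the approximation threshold $\alpha\cdot 4>3$ forces any \USW{$\alpha$} mechanism to output $\mu_1$. The mirror-image deviation for $a_3$ (raising $\widehat{v}_{a_3}(r_1)$ to $1$) forces $\mu_2$ by the same computation.

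To finish I would invoke the built-in symmetry between $a_1$ and $a_3$. On the truthful profile, a \USW{$\alpha$} mechanism with $\alpha>\sfrac{3}{4}$ must output either $\mu_1$ or $\mu_2$, because the optimum is $3$, the threshold $\alpha\cdot 3>\tfrac{9}{4}>2$ excludes every matching of \USW{} at most $2$, and $\mu_1,\mu_2$ are the only matchings attaining $3$. If the mechanism outputs $\mu_2$, then $a_1$ receives true utility $0$ but, via the deviation above, can force $\mu_1$ and obtain true utility $1$; if it outputs $\mu_1$, the symmetric argument hands $a_3$ a profitable deviation. Either way some agent strictly benefits from misreporting, so no strategyproof mechanism can be \USW{$\alpha$} for $\alpha>\sfrac{3}{4}$.

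I do not anticipate a genuine obstacle, since the construction is inherited directly from \cref{thm:addsp23}; the only point requiring care is the welfare arithmetic---confirming that after $a_1$'s deviation $\mu_1$ is \emph{strictly} the unique matching exceeding the $\alpha\cdot 4$ threshold, and that on the honest profile no third matching sneaks above the $\alpha\cdot 3$ threshold---which is precisely what makes the bound exactly $\sfrac{3}{4}$ rather than something weaker.
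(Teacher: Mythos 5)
Your proposal is correct and follows the paper's own argument: it uses exactly the symmetric instance from the proof of \cref{thm:addsp23} (adding $v_{a_2}(a_1)=v_{a_2}(a_3)=1$ to \cref{fig:binaddSP}), the same deviation in which $a_1$ (resp.\ $a_3$) inflates their value for $r_1$ to force $\mu_1$ (resp.\ $\mu_2$) past the $\alpha\cdot 4>3$ threshold, and the same symmetry argument showing one of the two agents must profit. The only difference is that you spell out the welfare enumeration that the paper leaves as ``analogous to the earlier proof,'' and that bookkeeping checks out.
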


A typical approach to finding strategyproof mechanisms is to use serial dictatorship where agents pick a roommate-room pair one by one from the remaining pool according to a fixed priority ordering. In \cref{app:sdessay} we discuss the standard serial dictatorship mechanisms and  various extensions and show how they either fail to give a guarantee on the \USW{} or fail to be strategyproof.

\begin{remark}
    The naive serial dictatorship mechanism is strategyproof but cannot guarantee \USW{$\alpha$} for any $\alpha > 0$ under both additive or Leontief utilities.
\end{remark}


\subsection{Strategyproof Mechanisms for Maximum Welfare}\label{subsec:spleon}


We now consider strategyproofness under binary Leontief utilities. In this setting, agents are constrained in how they can change the apparent social welfare of a matching by misreporting their values. Surprisingly, we show that unlike the additive setting, for binary Leontief utilities, there exist strategyproof mechanisms that guarantee maximum \USW{}. We shall design two  such mechanisms. First, in \cref{alg:doitall} we find all maximum welfare matchings and then select a matching from them in a manner that ensures strategyproofness. We then refine this approach to \cref{alg:doless} which is FPT parameterized by the number of agents. 
Both these mechanism rely on the structure provided by binary Leontief utilities. We begin with a simple observation that constrains the way agents can manipulate under binary Leontief utilities.

 \begin{observation}\label{obs:binleonlie}
    Let  $i\in \N$ denote an agent, and $
    \mu_1$ and $\mu_0$ denote two matchings where utility of agent $i$ is $1$ and $0$, respectively, under binary Leontief utilities. Then, when all other agents $N \setminus \{i\}$ are honest
    \begin{enumerate}
        \item agent $i$ cannot increase the \USW{} of $\mu_1$ by misreporting.
        \item agent $i$ cannot decrease the \USW{} of $\mu_0$ by misreporting.
    \end{enumerate}
    \end{observation}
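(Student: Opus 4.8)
The plan is to prove each of the two claims by reasoning directly about what a single agent $i$ can and cannot affect by misreporting, using the defining feature of binary Leontief utilities: agent $i$'s own utility is $\min\{\widehat{v}_i(\mu), v_i(\mu)\} \in \{0,1\}$, and crucially, since all other agents are honest and their utilities are determined by their \emph{own} true valuations of their assigned roommate and room, agent $i$'s report can influence the total welfare only through $i$'s \emph{own} contribution $u_i$ to the sum. The key structural observation is that misreporting changes the mechanism's \emph{view} of the instance, hence possibly the output matching $\mu'$, but the \textbf{actual} welfare of any fixed matching is computed from true valuations, and the only term in $\USW{}(\mu') = \sum_{k} u_k(\mu')$ that agent $i$'s report can alter (relative to the truthful computation the mechanism would do) is the summand for $i$ itself.

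For claim~(1), I would argue as follows. Suppose $\mu_1$ is the honest output with $u_i(\mu_1)=1$. To show $i$ cannot increase the true welfare, fix any misreport and let $\mu'$ be the resulting matching; I want $\USW{}(\mu') \le \USW{}(\mu_1)$ where both are evaluated under true valuations. Since every other agent is honest, $\sum_{k\neq i} u_k(\mu')$ is bounded above by the maximum over all matchings of the others' true welfare, but more cleanly: the honest mechanism already returns $\mu_1$, and if $\mu_1$ is welfare-maximizing (which the mechanisms in \cref{alg:doitall,alg:doless} enforce), then no matching can have strictly greater true welfare than $\mu_1$. The point specific to binary Leontief is that $i$ already attains its maximum possible utility of $1$ in $\mu_1$, so $i$ has nothing to gain for its own summand; combined with the fact that $i$'s report cannot raise any other honest agent's true utility, the total true welfare of any induced $\mu'$ cannot exceed that of $\mu_1$. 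I would phrase this as: $u_i(\mu')\le 1 = u_i(\mu_1)$ and $u_k(\mu')$ for $k\neq i$ depends only on $k$'s honest valuation of its own partner and room, so no manipulation by $i$ enlarges the honest-welfare ceiling.

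For claim~(2), the symmetric idea applies: if $\mu_0$ is the honest output with $u_i(\mu_0)=0$, then $i$'s own summand is already at its minimum, so $i$ cannot \emph{remove} welfare it was never contributing. Formally, any misreport yields $\mu'$ with $\USW{}(\mu') = u_i(\mu') + \sum_{k\neq i} u_k(\mu') \ge 0 + \sum_{k\neq i}u_k(\mu')$, and I would observe that the others' contribution cannot drop below what the welfare-maximizing honest mechanism guarantees. The cleanest route is to note the mechanism always outputs a \emph{maximum}-welfare matching with respect to its (possibly manipulated) view, and then relate the manipulated-view optimum to the true optimum; since $i$ contributed $0$ truthfully, the true welfare of $\mu_0$ equals the others' welfare alone, which is a lower bound that $i$ cannot undercut without changing an honest agent's own-partner valuation, which is impossible.

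The main obstacle, and the step I would be most careful about, is making precise the claim that agent $i$'s misreport cannot change the true utility $u_k(\mu')$ of any honest agent $k$ \emph{as a function of the matching}, versus the fact that it \emph{can} change which matching $\mu'$ is selected. The subtle point is that welfare here is always evaluated on true valuations, so the statement is really about comparing the true welfare of two different honest-evaluated matchings ($\mu_0$ or $\mu_1$ versus $\mu'$); the observation is about the \emph{difference} between the mechanism's internal (manipulated) welfare and the true welfare, and this difference is exactly $u_i^{\text{reported}} - u_i^{\text{true}}$ plus nothing else, because all other summands are computed from honest reports and are identical in both accountings. I would therefore isolate the single-coordinate nature of the perturbation—only the $i$-th summand differs between the mechanism's bookkeeping and reality—and then conclude each monotonicity claim by noting that $u_i^{\text{true}}(\mu_1)=1$ is already maximal (for claim~1) and $u_i^{\text{true}}(\mu_0)=0$ is already minimal (for claim~2).
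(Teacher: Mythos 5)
The paper states this observation without any proof at all --- it is treated as immediate --- and the intended justification is exactly the one-line decomposition you arrive at only in your \emph{final} paragraph: for a \emph{fixed} matching $\mu$, the reported welfare is $\sum_k u_k(\mu)$, every summand with $k\neq i$ is computed from an honest report and is therefore unaffected by $i$'s report, and the $i$-th summand is forced into $\{0,1\}$ because the mechanism only accepts binary misreports; since it already equals $1$ in $\mu_1$ and $0$ in $\mu_0$ under the truth, it cannot be pushed up in the first case or down in the second. That closing paragraph is correct and is, in substance, the paper's (implicit) argument.

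The problem is everything before it. Your two middle paragraphs misread the statement as a claim about the \emph{true} welfare of the matching $\mu'$ that the mechanism \emph{outputs} after a misreport, and you import assumptions that are neither present nor available: that $\mu_1$ is the honest output of a welfare-maximizing mechanism ("if $\mu_1$ is welfare-maximizing \dots no matching can have strictly greater true welfare"), that "the total true welfare of any induced $\mu'$ cannot exceed that of $\mu_1$," and that "the others' contribution cannot drop below what the welfare-maximizing honest mechanism guarantees." None of this is what the observation asserts --- $\mu_1$ and $\mu_0$ are arbitrary matchings in which $i$ happens to get utility $1$ and $0$, and the quantity being controlled is the \emph{perceived} welfare of that fixed matching as a function of $i$'s report (this is how the observation is invoked later, e.g.\ in Cases 3a and 3b of the proof of \cref{prop:spmech1}). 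Worse, leaning on the strategyproofness or optimality of \cref{alg:doitall} here would be circular, since the observation is a lemma \emph{used to prove} those properties. If you delete the two middle paragraphs and keep only the single-coordinate perturbation argument, the proof is complete and matches the paper; as written, the detour contains claims that are false or unusable in context.
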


While this does not guarantee that there are never any incentives to lie, it proves to be useful in the development of strategyproof mechanisms that maximize social welfare. 

From \cref{thm:SW-hardL} we know that any maximum \USW{} mechanism cannot run in polynomial time even for restricted binary valuations. Consequently, we first consider a brute force approach where we look at all possible matchings and choose those that maximize \USW{}. Tie-breaking rules are crucial to achieve strategyproofness.
We show in Appendix \ref{app:manipulationexamples} that there exist methods of tie-breaking that give some agent an incentive to lie about their preferences. Essentially, agents get incentivized to manipulate when the tie-breaking rule to choose one among the maximum welfare matchings is a function of agent preferences. 
A careful consideration of tie-breaking through a priority ordering over agents recovers strategyproofness.

\begin{algorithm}[t]
    {\small
  \KwIn{ Roommate Matching instance $\langle \N,\R, v,\widehat{v} \rangle$ with binary valuations}
  \KwOut{A matching $\mu$ with maximum \USW{}}
  Initialize $S_0$ to be the set of all roommate matchings with maximum \USW{} \;
  Pick an ordering on agents $\sigma$\;
  \For{$t=1$ to $2n$}{
        Select $j$ to be the $i^{\text{th}}$ agent on $\sigma$\;
        $S_i\gets \argmax_{\mu \in S_{i-1}}u_j(\mu)$ \Comment{\small Reduce based on $j$}\;
    }
  Select $\mu$ arbitrarily from $S_{2n}$\; 
  \caption{Welfare Set Reduction Mechanism} \label{alg:doitall}}
\end{algorithm}

\paragraph{Overview of Mechanism.} Algorithm \ref{alg:doitall} proceeds by first computing the set of all roommate matchings with maximum \USW{}: $S_0$. Then we fix an {\em arbitrary} ordering of agents $\sigma$. We use $i = \sigma(j)$ to denote the position of the agent $j$. We sequentially define the set $S_i=\argmax_{\mu \in S_{i-1}} u_{j}(\mu)$. 
Finally, the mechanism returns a matching from the set $S_{2n}$. We first give an example of its execution.

\begin{example}
    We use the instance in \cref{thm:no-alpha-approx} to show an execution of \cref{alg:doitall}. Recall that here all agents have value $1$ for each room, but are selective over who they are matched to. The agent compatibility values are represented graphically in \cref{fig:noapproxSW}. Agents $a_1$ and $a_3$ would like to be matched with agents $a_2$ and $a_4$ respectively. Whereas, agents $a_2$ and $a_4$ would like to be matched to $a_3$ and $a_1$, respectively. 
    
    Let the ordering $\sigma = a_1\succ a_2\succ a_3\succ a_4$. Recall that the maximum \USW{} matchings were either matching agents as $\mu=\{(a_1,a_2,\cdot),(a_3,a_4,\cdot)\}$ or as $\mu'=\{(a_1,a_4,\cdot),(a_2,a_3,\cdot)\}$. Therefore, in $S_0$ agents are matched as in $\mu$ or as in $\mu'$ with all possible options of room assignments. Agent $a_1$ gets utility $1$ only when matched to $a_2$. Thus, $a_1$ deletes the matchings where agents are matched as in $\mu'$ from $S_0$ to construct $S_1$. 
    
    Now $a_2$ gets utility $0$ when matched to $a_1$ and consequently in both the matchings in $S_1$. Thus, $S_2= S_1$. 
    Finally $S_4$ will contain the two matchings from $S_1$.
    Thus, the algorithm arbitrarily picks between $\{(a_1,a_2,r_1),(a_3,a_4,r_2)\}$ and $\{(a_1,a_2,r_2),(a_3,a_4,r_1)\}$ the two matchings in $S_4$. This is illustrated in \cref{fig:doitallexample}.
    Note that $a_2$ and $a_4$ get utility $0$ in either case. There is no way $a_2$ (or $a_4$) can misreport and change the set $S_2$  as the matchings where $a_2$ (resp. $a_4$) get utility one has been already deleted in $S_1$.

    Also, observe that if the ordering had first selected $a_2$ or $a_4$ the matching selected would arbitrarily be picked from the two matchings of type $\mu'$. 

\begin{figure}[t]
    \centering

\tikzset{every picture/.style={line width=1pt}}     

\begin{tikzpicture}[x=0.75pt,y=0.75pt,yscale=-1,xscale=1]
\draw   (20.8,44.01) .. controls (20.8,21.14) and (68.57,2.6) .. (127.5,2.6) .. controls (186.43,2.6) and (234.2,21.14) .. (234.2,44.01) .. controls (234.2,66.87) and (186.43,85.41) .. (127.5,85.41) .. controls (68.57,85.41) and (20.8,66.87) .. (20.8,44.01) -- cycle ;
\draw  [color={rgb, 255:red, 245; green, 166; blue, 35 }  ,draw opacity=1 ][fill={rgb, 255:red, 245; green, 166; blue, 35 }  ,fill opacity=1 ] (118.13,9.47) .. controls (118.13,8.05) and (119.29,6.89) .. (120.71,6.89) .. controls (122.13,6.89) and (123.29,8.05) .. (123.29,9.47) .. controls (123.29,10.89) and (122.13,12.05) .. (120.71,12.05) .. controls (119.29,12.05) and (118.13,10.89) .. (118.13,9.47) -- cycle ;
\draw   (41.99,51.67) .. controls (41.99,34.51) and (80.3,20.6) .. (127.55,20.6) .. controls (174.81,20.6) and (213.12,34.51) .. (213.12,51.67) .. controls (213.12,68.83) and (174.81,82.74) .. (127.55,82.74) .. controls (80.3,82.74) and (41.99,68.83) .. (41.99,51.67) -- cycle ;
\draw  [color={rgb, 255:red, 128; green, 128; blue, 128 }  ,draw opacity=1 ] (60.03,55.41) .. controls (60.03,41.49) and (90.3,30.2) .. (127.64,30.2) .. controls (164.98,30.2) and (195.25,41.49) .. (195.25,55.41) .. controls (195.25,69.33) and (164.98,80.62) .. (127.64,80.62) .. controls (90.3,80.62) and (60.03,69.33) .. (60.03,55.41) -- cycle ;
\draw  [color={rgb, 255:red, 128; green, 128; blue, 128 }  ,draw opacity=1 ] (73.82,60.23) .. controls (73.82,50.05) and (97.89,41.8) .. (127.57,41.8) .. controls (157.26,41.8) and (181.33,50.05) .. (181.33,60.23) .. controls (181.33,70.4) and (157.26,78.65) .. (127.57,78.65) .. controls (97.89,78.65) and (73.82,70.4) .. (73.82,60.23) -- cycle ;
\draw  [color={rgb, 255:red, 128; green, 128; blue, 128 }  ,draw opacity=1 ] (91.68,65.18) .. controls (91.68,59.11) and (107.75,54.2) .. (127.57,54.2) .. controls (147.38,54.2) and (163.45,59.11) .. (163.45,65.18) .. controls (163.45,71.24) and (147.38,76.15) .. (127.57,76.15) .. controls (107.75,76.15) and (91.68,71.24) .. (91.68,65.18) -- cycle ;
\draw  [color={rgb, 255:red, 245; green, 166; blue, 35 }  ,draw opacity=1 ][fill={rgb, 255:red, 245; green, 166; blue, 35 }  ,fill opacity=1 ] (131.94,9.71) .. controls (131.94,8.28) and (133.09,7.13) .. (134.52,7.13) .. controls (135.94,7.13) and (137.09,8.28) .. (137.09,9.71) .. controls (137.09,11.13) and (135.94,12.28) .. (134.52,12.28) .. controls (133.09,12.28) and (131.94,11.13) .. (131.94,9.71) -- cycle ;
\draw  [color={rgb, 255:red, 74; green, 144; blue, 226 }  ,draw opacity=1 ][fill={rgb, 255:red, 74; green, 144; blue, 226 }  ,fill opacity=1 ] (118.23,62.33) .. controls (118.23,60.91) and (119.39,59.75) .. (120.81,59.75) .. controls (122.23,59.75) and (123.38,60.91) .. (123.38,62.33) .. controls (123.38,63.75) and (122.23,64.91) .. (120.81,64.91) .. controls (119.39,64.91) and (118.23,63.75) .. (118.23,62.33) -- cycle ;
\draw  [color={rgb, 255:red, 74; green, 144; blue, 226 }  ,draw opacity=1 ][fill={rgb, 255:red, 74; green, 144; blue, 226 }  ,fill opacity=1 ] (131.82,62.22) .. controls (131.82,60.79) and (132.97,59.64) .. (134.39,59.64) .. controls (135.82,59.64) and (136.97,60.79) .. (136.97,62.22) .. controls (136.97,63.64) and (135.82,64.79) .. (134.39,64.79) .. controls (132.97,64.79) and (131.82,63.64) .. (131.82,62.22) -- cycle ;

\draw (104.82,57.96) node [anchor=north west][inner sep=0.75pt]  [font=\scriptsize]  {$\mu _{1}$};
\draw (141.19,7.21) node [anchor=north west][inner sep=0.75pt]  [font=\scriptsize]  {$\mu '_{2}$};
\draw (104.25,7.12) node [anchor=north west][inner sep=0.75pt]  [font=\scriptsize]  {$\mu '_{1}$};
\draw (137.14,57.6) node [anchor=north west][inner sep=0.75pt]  [font=\scriptsize]  {$\mu _{2}$};
\draw (4.72,30.55) node [anchor=north west][inner sep=0.75pt]  [font=\footnotesize]  {$S_{0}$};
\draw [draw opacity=1, line width=0.15mm ]   (90.25,17.21) .. controls (77,17.21) .. (97,-1) ;
\draw (36.17,-12.6) node [anchor=north west][inner sep=0.75pt]  [font=\footnotesize]  {matchings discarded by $a_1$};
\draw (210,35.23) node [anchor=north west][inner sep=0.75pt]  [font=\footnotesize]  {$S_{1}$};
\draw (46.17,40.71) node [anchor=north west][inner sep=0.75pt]  [font=\footnotesize]  {$S_{2}$};
\draw [draw opacity=1,line width=0.15mm]   (190.55,40.23) .. controls (200,21.21) .. (200,4.6) ;
 \draw (190,-6.6) node [anchor=north west][inner sep=0.75pt]  [font=\footnotesize]  {no matchings discarded by $a_2$};
\draw (177,45) node [anchor=north west][inner sep=0.75pt]  [font=\footnotesize]  {$S_{3}$};
\draw [draw opacity=1, line width=0.15mm ]  (70.51,51.32) .. controls (20,70.21) .. (20,87.21) ;
 \draw (10,87.21) node [anchor=north west][inner sep=0.75pt]  [font=\footnotesize]  {no matchings discarded by $a_3$};
\draw (78,51.32) node [anchor=north west][inner sep=0.75pt]  [font=\footnotesize]  {$S_{4}$};
\draw [draw opacity=1, line width=0.15mm ]   (172.71,60.25) .. controls (220,70.21) .. (220,75.21) ;
 \draw (195,75.21) node [anchor=north west][inner sep=0.75pt]  [font=\footnotesize]  {no matchings discarded by $a_4$};

\end{tikzpicture}

    {\small
    \caption{\Cref{alg:doitall} on the instance depicted in \cref{fig:noapproxSW}. }
    \label{fig:doitallexample}}
\end{figure}
  
\end{example}


\begin{restatable}{theorem}{spmechone}\label{prop:spmech1}
    The Welfare Set Reduction mechanism (\cref{alg:doitall})  returns a matching with maximum \USW{} and is strategyproof under binary Leontief utilities.
\end{restatable}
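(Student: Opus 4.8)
The plan is as follows. Welfare-maximality is immediate: every set produced satisfies $S_{2n}\subseteq\cdots\subseteq S_1\subseteq S_0$, and $S_0$ consists exactly of the matchings attaining the maximum \USW{}, so the returned matching (drawn from $S_{2n}\subseteq S_0$) is welfare-maximizing. The substance is strategyproofness. Since utilities are binary Leontief, every agent's utility in every matching is $0$ or $1$, so it suffices to fix an agent $i$ and truthful reports of everyone else and show: if $i$ obtains utility $0$ under truthful reporting, then $i$ cannot obtain utility $1$ by any misreport. I will write $\pi$ for the truthful output and $\mu'$ for the output under a putative profitable misreport $(v_i',\widehat v_i')$, let $\mathrm{sw}(\cdot)$ and $\widetilde{\mathrm{sw}}(\cdot)$ be social welfare computed with $i$'s true and reported valuations (these agree on every term except $i$'s), and let $S_0,S_0'$ and $W,W'$ be the initial candidate sets and their optimal welfares in the two runs.

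The first observation to record is that the loop body is iterated $\argmax$, so $S_{p-1}$ is precisely the set of matchings in $S_0$ whose utility prefix $(u_{\sigma_1},\dots,u_{\sigma_{p-1}})$ is lexicographically maximal, where $i=\sigma_p$; the analogous statement holds for $S_{p-1}'$ relative to $S_0'$. Crucially, the agents $\sigma_1,\dots,\sigma_{p-1}$ preceding $i$ report truthfully in both runs, so the only difference between the two runs up to step $p-1$ is the starting set ($S_0$ versus $S_0'$). The second ingredient is \cref{obs:binleonlie} in quantitative form: if $i$ truly gets $1$ in a matching $\rho$ then $\widetilde{\mathrm{sw}}(\rho)\le\mathrm{sw}(\rho)$, while if $i$ truly gets $0$ then $\widetilde{\mathrm{sw}}(\rho)\ge\mathrm{sw}(\rho)$, because a lie can only lower the reported contribution of a truly-$1$ agent and only raise that of a truly-$0$ agent.

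The core argument is by contradiction: assume $u_i(\pi)=0$ but $u_i(\mu')=1$. From $u_i(\mu')=1$ and the ingredient above, $W'=\widetilde{\mathrm{sw}}(\mu')\le\mathrm{sw}(\mu')\le W$; from $u_i(\pi)=0$, $\widetilde{\mathrm{sw}}(\pi)=\mathrm{sw}(\pi)+\widetilde u_i(\pi)=W+\widetilde u_i(\pi)\ge W$, and since $\mu'$ maximizes apparent welfare, $W'\ge\widetilde{\mathrm{sw}}(\pi)\ge W$. Hence $W'=W$ and $\widetilde{\mathrm{sw}}(\pi)=W=W'$, so $\pi\in S_0'$; moreover $\mathrm{sw}(\mu')=\widetilde{\mathrm{sw}}(\mu')+\bigl(1-\widetilde u_i(\mu')\bigr)=W+1-\widetilde u_i(\mu')\le W$ forces $\widetilde u_i(\mu')=1$ and $\mathrm{sw}(\mu')=W$, so $\mu'\in S_0$. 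Thus $\pi,\mu'\in S_0\cap S_0'$. Now compare prefixes on coordinates $\sigma_1,\dots,\sigma_{p-1}$: since $\pi\in S_{p-1}$ it realizes the lexicographically maximal prefix in $S_0$, dominating $\mu'$'s; since $\mu'\in S_{p-1}'$ it realizes the maximal prefix in $S_0'$, dominating $\pi$'s. Hence the two prefixes coincide, so $\mu'$ also realizes the maximal prefix in $S_0$, giving $\mu'\in S_{p-1}$. But every matching in $S_{p-1}$ gives $i$ utility $0$ (as $\pi\in S_p=\argmax_{\nu\in S_{p-1}}u_i(\nu)$ with $u_i(\pi)=0$), contradicting $u_i(\mu')=1$.

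I expect the main obstacle to be exactly the point the prefix argument resolves: a deviation by $i$ replaces $S_0$ by $S_0'$, which can alter the sets produced by the honest agents preceding $i$, so one cannot simply assert that $\mu'$ would have survived the truthful refinement. The resolution is the ``sandwich'' above --- establishing $\{\pi,\mu'\}\subseteq S_0\cap S_0'$ and that each is lexicographically prefix-maximal in one of the two sets forces their length-$(p-1)$ prefixes to agree, which transports $\mu'$ into the truthful refined set $S_{p-1}$. Two minor points to check in passing are that welfare is integer-valued under binary Leontief utilities, so the welfare inequalities pin down the right equalities, and that each $S_k$ is nonempty, which holds because a perfect roommate matching always exists.
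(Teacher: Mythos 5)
Your proof is correct, but it is organized quite differently from the paper's. The paper argues operationally: it splits into cases according to whether $S_0$ and $S_{i-1}$ contain a matching giving the deviating agent utility $1$, and then tracks, via \cref{obs:binleonlie}, which matchings a misreport could insert into or delete from the nested sets; the crux is its Case~3b, a somewhat informal argument about the chain of higher-precedence agents $j_1,\dots,j_k$ who eliminate the favourable matchings. You instead run a single contradiction: the integer-valued welfare sandwich $W'\le W\le W'$ forces both the truthful output $\pi$ and the deviant output $\mu'$ into $S_0\cap S_0'$, and the characterization of $S_{p-1}$ as the set of lexicographic prefix-maximizers of $S_0$ (with each matching's prefix unchanged across the two runs, since the agents preceding $i$ are honest) transports $\mu'$ into the truthful $S_{p-1}$, contradicting that every matching there gives $i$ utility $0$. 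What your route buys is that you never have to reason about which matchings enter or leave the intermediate sets under a misreport --- precisely the step the paper handles only loosely --- and the quantitative form of \cref{obs:binleonlie}, namely $\widetilde{\mathrm{sw}}(\rho)-\mathrm{sw}(\rho)=\widetilde{u}_i(\rho)-u_i(\rho)$, makes the welfare bookkeeping airtight; it also cleanly isolates where binarity is used (integrality of welfare plus the sign of the displacement). What the paper's route buys is a more direct picture of the mechanism's dynamics and an argument that transfers almost verbatim to the proof of \cref{thm:spmech2}, which explicitly reuses its case structure. Both proofs rest on the same two pillars --- \cref{obs:binleonlie} and the nesting $S_0\supseteq\cdots\supseteq S_{2n}$ --- so the difference is one of decomposition rather than of underlying ideas, but your decomposition is the tighter of the two.
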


\begin{proof}

Firstly, it is straightforward to see that \cref{alg:doitall} always returns a maximum \USW{} roommate matching. It first selects all such matchings and then chooses among them.  We shall now show that this mechanism is strategyproof. 

Fix an arbitrary instance with binary Leontief utilities $I=\langle \N,\R, v, \widehat{v}\rangle$. Observe that $S_0\supseteq S_1 \supseteq \dots \supseteq S_{2n}$. Let $j\in \N$ be the $i^{th}$ agent in the order $\sigma$. We shall show that irrespective of whether the sets $S_{i-1}$ and $S_0$ contain a matching where $j$ gets utility $1$, $j$ has no incentive to lie.

\paragraph{Case 1.} Let $S_{i-1}$ contain at least one matching where $j$ gets utility 1. Then, we have that $S_i$ will only contain matchings that give $i$ utility $1$. As $S_{2n}\subseteq S_i $ then the mechanism will always output a matching which gives $j$ utility $1$. Thus, in this case, $j$ has no incentive to lie.  

\paragraph{Case 2.} Let $S_{0}$ contain no matchings where $j$ gets utility $1$. By \cref{obs:binleonlie}, $j$ cannot introduce any matchings to $S_0$ where $j$ truly gets utility $1$. Similarly, $j$ cannot remove matchings from $S_0$, as where $j$ truly gets utility $0$ for all matchings in $S_0$.  By misreporting, $j$ can either 1) reduce the \USW{} of matchings in which $j$ truly gets utility $1$ or 2) increase the \USW{} of matchings where $j$ truly gets $0$. Consequently, whether $j$ is honest or misreports, the matching selected will be such that $j$ has utility $0$. Hence, $j$ has no incentive to misreport preferences. 

\paragraph{Case 3.} Let $S_0$ and $S_{i-1}$ be such that  $S_{0}$ contains at least one matchings where $j$ gets utility $1$, but $S_{i-1}$ contains no matchings where $j$ gets utility $1$. In this case, a set of agents with higher precedence to $j$ were prioritized over $j$ in selecting the matchings in $S_{i-1}$. That is, for every matching in $S_0$ where $j$ gets utility $1$, an agent with higher precedence to $j$ gets utility $0$. Thus, the corresponding set does not contain the matching.  Again, by \cref{obs:binleonlie}, 1) $j$ cannot decrease the \USW{} of the matchings in $S_{i-1}$, 2) $j$  cannot introduce a new matching that is in $S_0 \setminus S_{i-1}$ to the set $S_{i-1}$ such that  $j$ truly gets utility $1$ in that matching.    
    
By misreporting, $j$ can either a) reduce the \USW{} of matchings in which $j$ truly gets utility $1$ or b) increase the \USW{} of matchings where $j$ truly gets $0$. To show that $j$ does not have an incentive to misreport, we shall now show that neither of these prevents the mechanism from discarding matchings where $j$ truly gets $1$. 

\paragraph{Case 3a.} Suppose that agent $j$ misreports such that the (perceived) \USW{} of a set $X$ of matchings is increased under the misreported valuation. From \cref{obs:binleonlie}, we have that $j$ truly gets $0$ in each matching in $X$. This can affect the outcome of the mechanism if: 

\begin{enumerate}
    \item A matching in $X$ belongs to $S_0$ under the true valuations, or
    \item the true social welfare of a matching in $X$ is one less than the maximum \USW{}, i.e., the social welfare of the matchings in $S_0$ under true valuation.
\end{enumerate}

Increasing the social welfare of matchings already in $S_0$ will only ensure that $j$ receives utility $0$. On the other hand, effectively adding new matchings to $S_0$ can only introduce more matchings to $S_{i-1}$ where $j$ gets utility $0$. Hence, $j$ has no incentive to increase the apparent welfare of matchings that give $j$ utility $0$.

\paragraph{Case 3b.} The remaining case is to show that  decreasing the \USW{} of a set $X$ of matchings where $j$ truly gets utility $1$ is also not beneficial to $j$. Recall that $S_0$ contains at least one roommate matching that gives $j$ utility $1$, while $S_{i-1}$ only contains matchings that give $j$ utility $0$. 

Let $j_1, \cdots, j_k$ be the agents (in order of precedence) who eliminate the matchings that gives $j$ utility $1$, where $j_1$ is the highest precedence agent that does this, $j_2$ is the second, and so on. It is clear from the algorithm that each agent $j_\ell$, for $\ell \in [k]$, must all receive utility $1$ from the matchings in $S_{i-1}$. Further, $k\geq 1$ since $j$ receives utility $1$ in at least one matching in $S_0$ that is discarded before the algorithm reaches $j$. 

If $k=1$, then $j_1$ eliminates all matchings where $j$ gets utility $1$, and thus, there is no maximum \USW{} matching where both $j$ and $j_1$ get utility $1$. Consequently, by misreporting and decreasing the \USW{} of any matching where $j$ gets utility $1$, it will not effect the contents of $S_{\sigma (j_1)}$. Hence, if $k=1$, misreporting preferences cannot benefit $j$.

If $k>1$, clearly, $j_1$ has multiple matchings which gives $j_1$ utility $1$, at least one of which gives utility $0$ to $j$ and utility $1$ to $j_1,\cdots, j_k$. Let this matching be denoted by $\mu$. Again, decreasing the apparent \USW{} of any  matching which gives $j$ utility  $1$ will not affect $\mu$ being contained in $S_{i-1}$ and neither does it prevent the elimination of any other matching where $j$ truly gets utility $1$. 

Consequently, as long as there is a set of agents $j_1,\cdots, j_k$, each with higher precedence to $j$, such that any maximum \USW{} roommate matching which gives $j$ utility $1$, gives at least one of $j_1,\cdots, j_k$ utility $0$, the set $S_{i-1}$ will continue to only have matchings where $j$ gets utility $0$. 

Hence, $j$ cannot misreport and ensure that they receive utility $1$ in either Case 3a or 3b.\\

\noindent As a result, no agent has an incentive to lie under this mechanism. 
\end{proof}



\begin{remark}
 Observe that to show \Cref{alg:doitall} is strategyproof, we need to use \Cref{obs:binleonlie} which holds only for binary valuations. Moreover, we showed in \Cref{thm:no-alpha-approx,thm:addsp23} that no  algorithm that guarantees maximum welfare is strategyproof when the preferences are not binary. Thus, we can not hope to find a strategyproof algorithm for general valuations.
\end{remark}
Both \cref{obs:binleonlie} and the proof of \cref{prop:spmech1} rely on the fact that agents utilities can only be $0$ or $1$ and are not contingent on the utilities being specifically Leontief. 
Consequently, the following holds. 

\begin{corollary}
    If for each agent $i \in \N$, the utility $u_i(\mu) \in \{0,1\}$, then there exists a welfare maximizing matching $\mu$ that is strategyproof under any utility function.
\end{corollary}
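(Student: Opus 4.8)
The plan is to observe that neither \cref{obs:binleonlie} nor the proof of \cref{prop:spmech1} actually uses the Leontief form of $f$; both arguments go through for any computable utility function $f$ as soon as every agent's utility lies in $\{0,1\}$. Hence I would simply run the Welfare Set Reduction mechanism (\cref{alg:doitall}) with the given $f$ in place of binary Leontief, and reprove strategyproofness by the same case analysis. Note first that \cref{alg:doitall} is well-defined for any computable $f$: it enumerates all roommate matchings, evaluates $\USW{}(\mu)=\sum_{i\in\N} u_i(\mu)$ under $f$, collects the maximizers into $S_0$, and then shrinks this set along a fixed priority order $\sigma$. By construction the returned matching maximizes \USW{}, so only strategyproofness remains to be argued.

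The key step is to re-establish the two claims of \cref{obs:binleonlie} for an arbitrary $f$ with binary-valued utilities. Fix an agent $i$ and a matching $\mu$, and hold the reports of all $j\neq i$ fixed at their true values. Since $\USW{}(\mu)=\sum_{k\in\N} u_k(\mu)$ and each term $u_k(\mu)$ with $k\neq i$ depends only on agent $k$'s own (fixed) report, a misreport by $i$ can alter only the single term $u_i(\mu)$. By hypothesis $u_i(\mu)\in\{0,1\}$ under every report, so the perceived value of this term is always at most $1$ and at least $0$. Consequently, if $i$ truly gets utility $1$ on $\mu$ then its perceived contribution cannot rise above its true value, so $i$ cannot increase $\USW{}(\mu)$; and if $i$ truly gets utility $0$ on $\mu$ then its perceived contribution cannot fall below its true value, so $i$ cannot decrease $\USW{}(\mu)$. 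This is precisely \cref{obs:binleonlie}, now established independently of the Leontief assumption.

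With this generalized observation in hand, the strategyproofness argument is identical to the proof of \cref{prop:spmech1}. Writing $S_0\supseteq S_1\supseteq\dots\supseteq S_{2n}$ and letting $j$ be the $i^{\text{th}}$ agent in $\sigma$, the same three cases apply: if $S_{i-1}$ already contains a matching giving $j$ utility $1$, then $S_i$ and hence the output give $j$ utility $1$; if $S_0$ contains no such matching, $j$ can neither inject nor remove one by the generalized observation, so the output gives $j$ utility $0$ regardless of its report; and if $S_0$ but not $S_{i-1}$ contains such a matching, then a set of higher-priority agents $j_1,\dots,j_k$ has eliminated every maximum-welfare matching in which $j$ gets $1$, and, exactly as in Cases 3a and 3b of the earlier proof, neither inflating the welfare of a zero-utility matching nor deflating the welfare of a one-utility matching lets $j$ restore any of those discarded matchings. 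Thus no agent can profitably misreport, and \cref{alg:doitall} is a strategyproof, welfare-maximizing mechanism for any such $f$.

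The only real obstacle is the second paragraph, namely checking that the two monotonicity claims of \cref{obs:binleonlie} genuinely survive the removal of the Leontief structure; the crux is recognizing that, for a fixed matching, a single agent's misreport perturbs only its own summand of the social welfare, so the boundedness of that summand in $\{0,1\}$ is all that the mechanism exploits. Once this is confirmed, the remainder is a verbatim transcription of the proof of \cref{prop:spmech1} and requires no new ideas.
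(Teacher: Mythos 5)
Your proposal is correct and follows essentially the same route as the paper, which justifies this corollary by the remark that \cref{obs:binleonlie} and the proof of \cref{prop:spmech1} depend only on utilities being $\{0,1\}$-valued and not on the Leontief form. Your second paragraph merely makes explicit the one detail the paper leaves implicit---that for a fixed matching a misreport by agent $i$ perturbs only $i$'s own summand of the social welfare, which is bounded in $\{0,1\}$---and the rest is, as you say, a verbatim transcription.
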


Although the Welfare Set Reduction mechanism (\cref{alg:doitall}) shows the existence of a strategyproof mechanism, it is far from being efficient, even for an exponential time algorithm. In particular, it must parse the set of all max welfare matchings which can have size up to $n^{2n}$. We now build on the ideas in this mechanism to produce a strategyproof mechanism that is parameterized by the number of agents.

\paragraph{Overview of Precedence-Based Search Mechanism.}
In \Cref{alg:doless}, we fix an ordering $\sigma$ of the agents. We iterate over the (possible) value of the maximum \USW{} starting from $2n$ to $1$. For each value $k$ of \USW{}, consider the subsets of agents $\N$ of size $k$ in the precedence ordering of them implied by $\sigma$.\footnote{For two subsets $T$ and $T'$ of the same size, we say that $T$ has higher precedence than $T'$, if there is an agent $i\in T \setminus T'$ which has higher precedence than all agents in $T'\setminus T$.} 
For each subset, we check if there exists a matching that gives utility one to each agent in this subset and gives utility zero to each agent not in the subset. 

In order to do this, we construct an instance of (unweighted) $3$-set packing $(\mathcal{U}, \tau)$ for a subset $T$ as follows. We define $\mathcal{U}$ to be $\N\cup \R$, for each agent $i \in T$. We choose $\tau$ to be the set of triples $(i,j,r)$ s.t. either agent gets utility $1$ from the triple if and only if they are in the subset $T$. 
Based on this universe of elements $\mathcal{U}$ and collection of sets $\tau$, we find a $3$-set packing of size $n$ if and only if a roommate matching exists which gives the agents in $T$ utility $1$ and those not in $T$ utility $0$. If no set packing exists for any choice of $k$ and  $T$, we return an arbitrary matching, inferring that all roommate matching have \USW{} of $0$.

\begin{algorithm}[t]{\small 
  \KwIn{ Roommate Matching instance $\langle \N,\R, v,\widehat{v} \rangle$ with binary valuations}
  \KwOut{A matching $\mu$ with maximum \USW{}}
  Pick an arbitrary ordering on agents $\sigma$\;
  $k \gets 2n$\;
  \While{$k\geq 1$}{
    Initialize $T_1\cdots T_{\binom{2n}{k}}$ s.t. $T_t$ is the $t^{\text{th}}$ highest precedence subset of $\N$ of size $k$ under $\sigma$\;
    
    \For{$t=1$ to $\binom{2n}{k}$}{
        $\tau \gets \{(i,j,r)| i \neq j \in \N, r \in \R$ s.t. $u_i(\{j,r\})=1 \Leftrightarrow \, i\in T_t$ and
        $u_j(\{i,r\})=1 \Leftrightarrow \, j\in T_t\}$ \;
        $\mu \gets$ \textsc{$3$-Set Packing}$(\N \cup \R, \tau)$\;
        \If{$|\mu|=n$}{
            \textbf{Return} $\mu$\;
        }
    
  }
    $k\gets k-1$\;
  }
  \textbf{Return} an arbitrary matching\;
   \caption{Precedence-Based Search Mechanism}\label{alg:doless}}
\end{algorithm}
\begin{theorem}\label{thm:spmech2}
There exists a strategyproof algorithm to find a maximum \USW{} roommate matching that runs in time $O^*(c^n)$ where $c$ is a constant. 
\end{theorem}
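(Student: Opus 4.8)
The plan is to show that the Precedence-Based Search Mechanism (\cref{alg:doless}) produces, on every instance, a matching with exactly the same utility profile as the Welfare Set Reduction mechanism (\cref{alg:doitall}), and then to transfer welfare-optimality and strategyproofness from \cref{prop:spmech1} while bounding the running time separately. Under binary Leontief utilities each agent's payoff is fully determined by whether its utility is $0$ or $1$, so two mechanisms that always return matchings with identical utility profiles induce identical incentives; hence if \cref{alg:doless} always matches the utility profile of \cref{alg:doitall}, it inherits both strategyproofness and welfare maximization.

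The core of the argument is a correspondence between the two selection rules. Call a subset $T \subseteq \N$ \emph{realizable} if some roommate matching gives utility $1$ to exactly the agents in $T$ and $0$ to the rest. Since social welfare counts the agents with utility $1$, the maximum \USW{} equals the largest size of a realizable subset, and every maximum-welfare matching has a utility-$1$ set of that size. I would first observe that the precedence order on equal-sized subsets defined in \cref{alg:doless} is precisely the lexicographic order on utility vectors induced by $\sigma$: for distinct $T,T'$ of the same size, the highest-$\sigma$-precedence agent in $T\setminus T'$ beats all of $T'\setminus T$ (the definition of higher precedence) exactly when that agent is the first coordinate at which the indicator vectors of $T$ and $T'$ differ. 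For \cref{alg:doitall}, one shows by induction on the processing order that all matchings surviving in $S_{2n}$ share a single utility-$1$ set, namely the $\sigma$-lexicographically maximum realizable set of size equal to the maximum welfare: at the first coordinate where two surviving candidates disagree, the $\argmax$ step discards the one assigning that agent utility $0$. On the other side, \cref{alg:doless} scans $k$ from $2n$ downward and, for each $k$, the size-$k$ subsets in decreasing precedence, returning the first realizable one; it therefore returns a matching realizing the largest realizable $k$ (the maximum welfare) and, within that level, the highest-precedence, i.e. lexicographically maximum, realizable subset. The two mechanisms thus select the same utility-$1$ set and induce the same utility profile.

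It remains to justify the set-packing subroutine and the running time. For a fixed target set $T$, the collection $\tau$ contains exactly those triples $(i,j,r)$ in which each of $i,j$ obtains utility $1$ precisely when it belongs to $T$; consequently a perfect $3$-set packing over $\N \cup \R$ using $\tau$ (a packing of size $n$ covering all $3n$ elements) exists if and only if $T$ is realizable, and any such packing is a roommate matching realizing $T$. The total number of subsets examined over all $k$ is $\sum_{k=1}^{2n}\binom{2n}{k} \le 2^{2n} = 4^n$, and for each the instance $\tau$ is built in polynomial time while an exact $3$-set packing of size $n$ is solved in $O^*(d^n)$ for a constant $d$ via a known exact algorithm (e.g. \citep{zehaviforgetfulness}). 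The total running time is therefore $4^n \cdot O^*(d^n) = O^*(c^n)$ with $c = 4d$, single-exponential in $n$ as claimed.

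The main obstacle is the equivalence of the second paragraph: verifying that the subset-precedence tie-breaking of \cref{alg:doless} implements the same outcome as the sequential $\argmax$ reduction of \cref{alg:doitall}. Once this lexicographic correspondence is pinned down, together with the identity ``maximum welfare $=$ largest realizable subset size,'' strategyproofness follows immediately from \cref{prop:spmech1} (whose argument rests on \cref{obs:binleonlie}), and the time bound is a routine counting argument.
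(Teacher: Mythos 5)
Your proposal is correct and follows essentially the same route as the paper: both arguments reduce to showing that the Precedence-Based Search Mechanism returns a matching with the same utility profile as the Welfare Set Reduction mechanism (namely, one realizing the highest-precedence utility-one set among maximum-welfare matchings), then transfer strategyproofness from \cref{prop:spmech1}, and both bound the running time by $\sum_k \binom{2n}{k}$ calls to an exact \textsc{$3$-Set Packing} solver. Your explicit identification of subset precedence with lexicographic maximization of the utility vector under $\sigma$ is a cleaner packaging of the paper's induction over the positions $j_1 < \dots < j_w$, but it is the same underlying argument.
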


\begin{proof}
We show that given a roommate matching instance $I=\langle \N,\R, v,\widehat{v} \rangle$ with binary Leontief utilities, \cref{alg:doless} always returns a roommate matching with maximum \USW{}, is strategyproof, and runs in time $O^*(c^n)$.
    Suppose that \cref{alg:doless} returns a matching $\mu$ for $k =w$, i.e., $\USW{}(\mu) = w$.
    
\textbf{Welfare Guarantee:} 
 We first argue that \cref{alg:doless} produces a matching with maximum \USW{}. Recall that for a subset of agents $T$, $\tau$ is the set of triples for which agents in $T$ get utility $1$ and those not in $T$ get utility $0$.
Clearly, whenever a roommate matching $\mu$ of \USW{} $w$ exists, exactly $w$ agents must receive a utility one, let set this be $T$. All other agents get a utility zero from it. This would imply that it is a set packing of size $n$ for $\tau$ defined for $T$. In particular, the triples of $\mu$ would be the set packing for $\tau$. Analogously, whenever a set packing of size $n$ exists for a set $T$ of size $w$, those triples form a roommate matching with \USW{} equal to $w$. 

Consequently, the algorithm looks for roommate matchings of \USW{} $2n$ to $1$. Since we iterate over all possible values of welfare and all possible sets of agents to achieve the welfare, clearly, we find the maximum \USW{} matching.

\textbf{Strategyproofness:}
We show that no agent has an incentive to misreport their values. To this end, we prove that the outcome of this mechanism would be the same as that of 
the mechanism in \cref{alg:doitall} for the same ordering $\sigma$ of agents. For a given instance $I$, let $\mu$ be the matching returned, having \USW{} $w$. Recall the sets $S_0,\cdots, S_{2n}$ as defined in \cref{alg:doitall}. As $\mu$ has maximum \USW{}, $\mu \in S_0$. We shall show that $\mu\in S_{2n}$ for ordering $\sigma$. We shall later show that this is sufficient to guarantee strategyproofness. 

Let $T$ be the set of $w$ agents who get utility $1$ in $\mu$. Let the positions of the  agents in $T$ be denoted by $j_1, j_2, \dots, j_w$ where $j_1< j_2< \dots < j_w$ in the ordering $\sigma$.
To prove that $\mu \in S_{2n}$,  it suffices to show that $\mu \in S_{j_w}$. As the maximum possible $\USW{}$ is $w$, every agent after $j_w$ in the ordering $\sigma$ receives utility $0$ from every matching in $S_{j_w}$. Consequently, $S_{2n}$ is the same as $S_{j_w}$ and \cref{alg:doitall} would also return $\mu$. 

In order to show that $\mu \in S_{j_w}$, we need to show that no agent who comes before position $j_w$ would be able to remove $\mu$. In particular, we need to show that if an agent $i \notin T$   (i) receives utility $1$ under some maximum \USW{} matching $\mu_i$ and (ii) under $\sigma$, precedes the agent at $j_w$, then $i$ has utility $0$ from all matchings in $S_{\sigma(i)-1}$. Thus, $S_{\sigma(i)}=S_{\sigma(i)-1}$ and $\mu \in S_{\sigma(i)}$. 
We show this by induction on $\sigma(i)$.

\underline{Base Case:}  Firstly, we consider the case where $i$ precedes the agent at $j_1$, i.e., $\sigma(i) < j_1$. Let $T'$ be the set of $w$ agents who receive utility $1$ in a matching $\mu_i$ where $i$ receives utility $1$. Then, $T'$ precedes $T$ as $p$ precedes all  agents in $T$. Then \cref{alg:doless} would return $T'$ instead of $T$. Thus, no such $i$ can exist.
Hence, every agent $i'$ that occurs before $j_1$ must have utility $0$ for all matchings in $S_{\sigma(i')-1}$. Consequently, for each agent  at a position $i'$, which is before $j_1$, $S_{\sigma(i')}=S_{\sigma (i')-1}$ and $\mu \in S_{\sigma(i')}$.  Thus, $\mu\in S_{j_1}$.

\underline{Induction Case:} 
Next, fix a value $\ell$ s.t. $1<\ell\leq w$ and assume that $\mu \in S{j_{\ell -1}}$. 
We now show that 
the matching $\mu \in S_{j_\ell}$. In particular, $\mu \in S_{\sigma (i')}$ for all agents $i$ that lie between $j_{\ell -1}$ and $j_{\ell}$. As in the base case, we shall prove this by contradiction.  

Assume that there is an agent $i$ that lies between $j_{\ell -1}$ and $j_{\ell}$ and $\mu_i$ denotes a {\em maximum} \USW{} matching for which $i$ has utility $1$. That is, $\USW{}(\mu)=w$. Note that if $\mu_i$ is not contained in $S_{\sigma(i)-1}$, we have from the proof of \cref{prop:spmech1} that $i$ cannot misreport and include it. Consequently, it only remains to argue for the case when it is.

Let $\mu_i$ be contained in $S_{\sigma(i)-1}$. By the induction hypothesis, $\mu\in S_{j_{\ell-1}}$  and thus the agents  $j_{\ell-1},j_{\ell-2}, \cdots, j_1$ have utility $1$ in $\mu$. Consequently, it must be that agents $j_{\ell-1},j_{\ell-2}, \cdots, j_1$ have utility $1$ for $\mu_i$ since $j_{\ell-1} < \sigma(i)$ and we assume that $\mu_i\in S_{\sigma (i)}$. 
%
Let $T'$ be the set of agents that have utility $1$ in $\mu_i$. Clearly, $T'$ contains $i$ and $j_{\ell-1},j_{\ell-2}, \cdots, j_1$. Observe that as $i$ has higher precedence than $j_{\ell}$, and $T'$ has $w$ agents including all the agents with higher precedence to $j_\ell$ in $T$. As a result,  
the set $T'$ must have higher precedence than $T$ under $\sigma$. Analogous to the base case, this is not possible, as \cref{alg:doless} checks for $T'$ before $T$. 
Thus, there cannot exist such a matching $\mu_i$, and it must be that $\mu \in S_{j_{\ell}}$. 

Therefore, by induction, $\mu\in S_{j_w}$ and consequently, $\mu \in S_{2n}$.
Now observe that all matchings in $S_{2n}$ would give the agents in $T$ a utility of $1$ and all other agents a utility $0$. Hence, every agent is indifferent across all matchings in $S_{2n}$ and they don't have an incentive to misreport.

Finally, we calculate the time complexity of \cref{alg:doless}. 


\textbf{Running time:} For each value of the welfare $w$, the algorithm runs in time $\binom{2n}{w}\cdot O^*(3.34^n)$~\citep{bjorklund2017narrow} if we use a randomized algorithm for (unweighted) {\sc $3$-set packing}, or a deterministic algorithm in time $\binom{2n}{w}\cdot O^*(6.75^{n+o(n)})$~\citep{feng2014matching}. We iterate over all subsets of $[2n]$. Thus the runtime is $O^*(13.36^n)$ (randomized) or $O^*(27^n)$ (deterministic). Hence we prove the theorem.
\end{proof}



\subsection{Polynomial Time Strategyproof Mechanisms}\label{subsec:approxswsp}
We now demonstrate different polynomial time mechanisms that are strategyproof and give non-trivial guarantees on \USW{}. We first attempt this via a serial dictatorship approach. While serial dictatorship is a well used approach to achieve strategyproofness, in its typical form it proves to be ineffective in providing any meaningful guarantee on \USW{}. We discuss this at length in \cref{app:sdessay}, along with some unsuccessful modifications to the standard approach. However, we can adapt the L/T maximal matching approach we discussed in a serial dictatorship fashion to get a strategyproof mechanism, described in \cref{alg:LT-alg}, that is \USW{$\sfrac{1}{6}$} for binary Leontief utilities. We defer the proof of this to \cref{app:sdleon}. 
\begin{restatable}{theorem}{sdleon}\label{lem:sdleon}
    There exists a strategyproof mechanism (\cref{alg:LT-alg}) that is \USW{$\sfrac{1}{6}$} under binary Leontief utilities.
   
\end{restatable}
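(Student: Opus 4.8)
The plan is to establish the two required properties---the welfare guarantee and strategyproofness---separately, exploiting the structural feature of binary Leontief utilities used throughout \cref{subsec:spleon}: an agent's utility is $1$ if and only if they are matched to a truly-liked roommate \emph{and} a truly-liked room. I read \cref{alg:LT-alg} as a serial-dictatorship realization of the L/T idea: fix the priority order $\sigma$; process agents one by one; when the current agent $i$ is still unmatched, match $i$ to some available roommate $j$ with $v_i(j)=1$ and some available room $r$ with $\widehat v_i(r)=1$ (breaking ties by $\sigma$), thereby securing $u_i=1$; if no such $(j,r)$ exists, skip $i$; finally complete any leftover agents and rooms into triples arbitrarily. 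The crucial design feature I will lean on is that each agent's own turn consults \emph{only that agent's} reported values together with the current availability, and never a prospective partner's report.

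First I would prove the welfare bound by a charging argument giving the same factor as \cref{thm:LT-maximal}. Every triple created during a turn gives its active agent utility $1$, so $\USW{}(\mu)$ is at least the number of such triples. Let $\mu^*$ be welfare-optimal and let $a$ be any agent with $u_a(\mu^*)=1$, matched in $\mu^*$ to roommate $b$ and room $s$, so $v_a(b)=\widehat v_a(s)=1$. I claim one of $a,b,s$ lies in a utility-$1$ triple of $\mu$: otherwise all three are used only in the arbitrary completion and hence were available throughout, but then at $a$'s turn both $b$ and $s$ were available and liked by $a$, so $a$ would have been matched for utility $1$, a contradiction. Thus every optimal utility-$1$ agent is charged to a utility-$1$ triple of $\mu$ containing one of its three $\mu^*$-elements. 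Since each agent-slot of a triple equals at most one agent and is the $\mu^*$-roommate of at most one agent, and each room-slot is the $\mu^*$-room of at most two agents, a single triple is charged at most $2+2+2=6$ times. Hence $\USW{}(\mu^*)\le 6\,\USW{}(\mu)$, giving \USW{\sfrac{1}{6}}.

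The harder part is strategyproofness, and the key lemma I would prove by induction on $\sigma$ is: \emph{the entire execution up to and including the moment agent $i$ is either grabbed or reaches its own turn is independent of $i$'s report.} This holds because every decision made before $i$'s turn is taken by some earlier agent $k$ using only $v_k,\widehat v_k$ and the current availability, where availability is determined by prior grabs that are report-independent by induction; in particular, whether and when $k$ grabs $i$ depends on $v_k(i)$ and the $\sigma$-tie-break, not on $i$'s report. Given this, I finish by cases. If $i$ is grabbed into a triple before its turn, $i$'s true utility there is fixed (it is $1$ iff $i$ truly likes that roommate and room), so no report changes it. If $i$ reaches its turn unmatched facing a fixed available pool, then either (i) the pool contains an agent and a room that $i$ truly likes, in which case honest reporting already secures $u_i=1$, which is maximal; or (ii) it does not, i.e.\ $i$ truly likes no available agent or no available room, and since availability only shrinks afterward, every later event (a future grab or the arbitrary completion) also yields $u_i=0$ regardless of what $i$ reports, while any lie that gets $i$ matched at its turn lands $i$ with a non-liked roommate or room, still giving $0$. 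In all cases honesty is weakly optimal, so the mechanism is strategyproof.

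The main obstacle I anticipate is pinning down exactly this report-independence lemma, and in particular ensuring the mechanism never uses a prospective partner's reported values when forming a triple (for instance, it must not prefer triangles by checking reciprocation). If \cref{alg:LT-alg} consulted partners' reports to upgrade L-shapes to triangles, an agent could influence earlier rounds, the induction would break, and a profitable manipulation could appear. Keeping each turn a pure single-agent choice is precisely what lets both the charging argument and the strategyproofness argument go through simultaneously.
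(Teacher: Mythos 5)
Your proposal is correct, and at its core it follows the same route as the paper: the welfare bound comes from the observation that each triple formed at an agent's turn secures one unit of utility while blocking at most three triangles (six utility-$1$ agents) of $\mu^*$, and strategyproofness comes from the serial-dictatorship structure under binary Leontief utilities. The differences are in how much is made explicit. For welfare, the paper simply argues that \cref{alg:LT-alg} outputs an L/T-maximal matching and then cites \cref{thm:LT-maximal}; your charging argument (each utility-$1$ triple of $\mu$ is charged at most $2+2+2=6$ times by utility-$1$ agents of $\mu^*$) is the same counting, unpacked and self-contained, and it correctly handles the case where the blocked agents end up in the arbitrary completion. For strategyproofness, the paper's proof is a two-line appeal to \cref{obs:binleonlie}; your report-independence lemma, proved by induction along $\sigma$, is a more careful formalization of why a sequential mechanism whose every decision consults only the deciding agent's report is immune to manipulation, including the case where an agent tries to be skipped at its own turn in order to be grabbed later (ruled out because availability only shrinks and the missing liked agent or liked room never reappears). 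Your closing caveat---that the argument collapses if the mechanism consults a prospective partner's report, e.g.\ to prefer triangles over L-shapes---is exactly the right design constraint and is consistent with how \cref{alg:LT-alg} is written. Nothing is missing; your version is simply a more rigorous rendering of the paper's argument.
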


\begin{algorithm}[t]\small
  \KwIn{A roommate matching instance $\langle \N,\R, v,\widehat{v} \rangle$ }
  \KwOut{A matching $\mu$}
  Initialize matching $\mu \gets \emptyset$ \;
  Let $\sigma = (\sigma_1, \ldots, \sigma_{2n})$ be a priority ordering over agents \;
  \For{$i = 1$ to $2n$}{
    \If{agent $\sigma_i$ is not yet matched}{
        \If{agent $\sigma_i$ values any $(j, r)$ pair among remaining}{
            $\mu \gets (\sigma_i, j, r)$ \;
        }
    
    }
  }
  Match the remaining agents to any room arbitrarily\;
   \caption{L/T Serial Dictatorship Mechanism}\label{alg:LT-alg}
\end{algorithm}
\noindent In fact, an analogous mechanism also gives a similar guarantee for additive utilities. We defer the proof to \cref{app:sdessay}. 

\begin{restatable}{theorem}{sdstar}\label{thm:sdadd}
    The Welfare Prioritized Serial Dictatorship mechanism (Algorithm \ref{alg:SDstar}) is strategyproof under binary additive utilities. Further, it is $\USW{\sfrac{1}{7}}$. 
\end{restatable}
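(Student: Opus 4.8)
The plan is to prove the two assertions separately, throughout exploiting that under binary additive utilities every agent's utility from a triple lies in $\{0,1,2\}$ -- the role played by \cref{obs:binleonlie} in the Leontief mechanisms. I would first fix the priority order $\sigma$ and read off the selection rule of Algorithm \ref{alg:SDstar}: when the current dictator $i$ is reached, it is assigned a feasible triple from the remaining pool that maximizes $i$'s own utility, breaking ties in favor of the partner's utility (the ``welfare-prioritized'' tie-break). This is the natural additive analogue of the L/T rule in \cref{alg:LT-alg}.

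For strategyproofness I would follow the standard serial-dictatorship template. The essential structural fact is that when the mechanism reaches $i$ as dictator, the remaining agents and rooms are determined entirely by the choices of higher-priority agents, which depend on their own reports and not on $i$'s. When it is $i$'s turn, truthful reporting lets $i$ attain the maximum utility available among the remaining pool; any misreport can only assign $i$ a weakly worse triple, by either understating a genuinely valued option or overstating an option whose \emph{true} value to $i$ is smaller. Since attainable values lie in $\{0,1,2\}$, any deviation that changes $i$'s assignment strictly lowers $i$'s realized utility. The main obstacle for strategyproofness is the case in which $i$ is chosen as the \emph{partner} of an earlier dictator rather than acting as a dictator: here $i$'s report can in principle influence an earlier round, and hence the whole downstream execution. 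I would argue this cannot help $i$ by splitting into two parts: (i) if $i$ is selected as a partner under truthful reporting, $i$ receives its \emph{true} utility for that triple, which is an upper bound $\le 2$ on anything $i$ could obtain, so fabricating a selection by overreporting never gains; and (ii) underreporting to dodge a selection only forfeits utility that is weakly larger than what $i$ can secure later as a dictator, since any later triple also gives utility at most $2$. Combining the two parts shows truth-telling is dominant regardless of whether $i$ ends up as a dictator or as a partner.

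For the welfare guarantee I would set up a charging argument between the output $\mu$ and an optimal matching $\mu^*$. The unit of analysis is a triple $(i,j,r)$ created when $i$ is the dictator: because $i$ is the highest-priority unmatched agent and the rule maximizes $i$'s own utility over the remaining pool, this triple secures welfare $u_i+u_j$, which is positive whenever $i$ has any valued option still available. I would then charge to this triple the optimal welfare destroyed by consuming the three elements $i$, $j$, and $r$. These elements occur in at most three distinct triples of $\mu^*$, touching at most six agents and three rooms; the binary additive structure caps each agent's optimal utility at $2$ and each room's optimal contribution at $2$, and the serial execution lets me argue that several of these contributions are recoverable later (a displaced compatible pair can keep its mutual value, and a displaced room can be reused elsewhere), so that only a bounded, non-recoverable remainder is charged. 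Carrying out this bookkeeping and comparing the charged optimal welfare against the welfare the dictator's triple secures yields the ratio $7$, giving $\USW{}(\mu)\ge\tfrac{1}{7}\,\USW{}(\mu^*)$.

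The delicate step is this last one: pinning down the exact constant requires careful accounting to guarantee that no optimal welfare is double-counted and that the ``recoverable'' portions -- the mutual compatibility of a displaced pair and the value of a displaced room -- are genuinely recoverable under the serial order. This is precisely where both the binary additive structure and the dictator's own-utility maximization are indispensable, and I expect it, rather than the strategyproofness argument, to absorb most of the technical effort.
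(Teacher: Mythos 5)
There is a genuine gap, and it starts with your reading of the mechanism. You describe the dictator as picking a triple that maximizes its own utility ``breaking ties in favor of the partner's utility.'' That is not what \cref{alg:SDstar} does: ties among equally-valued partners are broken by the fixed ordering $\sigma$, i.e.\ by a rule that is independent of anyone's reported preferences. This matters because a tie-break that consults the prospective partner's values is exactly the kind of rule the paper shows to be manipulable (see the discussion of Approach~1 in \cref{app:sdessay}: an agent who values the dictator but not the dictator's room can lie about the room to win the tie-break, or, symmetrically, lie to lose it and wait for a better turn). Your two-part argument for the ``$i$ is selected as a partner'' case does not close this hole: in (i), the fact that $i$'s realized utility is at most $2$ is not an upper bound on what $i$ could obtain by deviating, and in (ii) the claim that dodging a selection ``only forfeits utility weakly larger than what $i$ can secure later'' is false --- an agent picked as a partner may truly get $0$ from that triple and $2$ later as a dictator. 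Under the actual mechanism none of this case analysis is needed: an agent's report is never consulted before its own turn (earlier dictators choose partners using only their own valuations, with $\sigma$-based tie-breaking), so the only effect of a misreport is on the agent's own pick or on whether it is set aside, and truth-telling is optimal there; agents set aside are indifferent among all remaining outcomes.

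On the welfare bound, your charging scheme is structurally the paper's argument (charge to each dictator triple $(i,j_i,r_i)$ the optimal welfare destroyed by consuming $i$, $j_i$, $r_i$), and you correctly identify the recoverability facts that make it work. But the constant is asserted rather than derived, and deriving it is the entire content of the claim. The accounting the paper carries out is: the dictator $i$ gets at least its $\mu^*$ utility, so loses nothing; the chosen partner $j_i$ loses at most $2$; each of the (at most) four other agents touched --- the $\mu^*$-partners of $i$ and $j_i$ and the two agents assigned to $r_i$ under $\mu^*$ --- retains its remaining pair/room value and loses only the single unit it had for $i$, $j_i$, or $r_i$, for a total of at most $4$; the triple itself secures welfare at least $1$. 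This gives destroyed-plus-secured at most $6+1=7$ against secured at least $1$, hence \USW{$\sfrac{1}{7}$}. Your sketch starts from a cap of $2$ per agent over six agents plus three rooms and defers the reduction to $7$ to unspecified bookkeeping; as written it does not establish the bound.
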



\paragraph{Triangle-then-L Mechanism.} 
We find that the Triangle-then-L mechanism provides us a strategyproof mechanism with a better approximation guarantee for binary Leontief utilities. Recall that this mechanism is \USW{$\sfrac{1}{3}$}.

\begin{restatable}{theorem}{greedymechanism}\label{thm:greedy}
The Triangle-then-L mechanism (\cref{alg:greedy}) is strategyproof under binary Leontief utilities. 
\end{restatable}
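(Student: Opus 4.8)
The plan is to exploit the binary structure directly: since every agent's utility lies in $\{0,1\}$, an agent $i$ can profit from a misreport only by moving from true utility $0$ to true utility $1$. So it suffices to fix an instance, assume $i$ gets utility $0$ when honest, and show that no binary misreport $(v_i',\widehat{v}_i')$ yields a matching in which $i$ truly gets utility $1$. Since $u_i(\mu)=1$ requires $i$ to be matched to a pair $(j,r)$ with $v_i(j)=\widehat{v}_i(r)=1$, the goal reduces to showing that \cref{alg:greedy} never places $i$ in such a truly-liked triple under a misreport if it fails to do so when $i$ is honest.

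The key structural fact (a local version of \cref{obs:binleonlie}) is that the score $u_i(\{j,r\})+u_j(\{i,r\})$ maximized by \cref{alg:greedy} splits into a term $i$ controls and a term $i$ cannot touch: with all others honest, $u_j(\{i,r\})$ depends only on $j$'s valuations, so $i$'s report changes a triple's score only through its own contribution $u_i(\{j,r\})$, and only for triples containing $i$. Because $i$'s honest contribution already equals $1$ on every triple it truly likes and $0$ on every triple it does not, a binary misreport can only \emph{lower} the score of a triple $i$ truly likes and only \emph{raise} the score of a triple $i$ does not like, while triples not containing $i$ are untouched.

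From here I would track the greedy execution. Note that \cref{alg:greedy} removes $i$ as soon as it selects a triple containing $i$, so exactly one selected triple contains $i$, and every selection made strictly before $i$ is matched lies on a triple not containing $i$ — whose score $i$ cannot influence, and whose consumption of agents and rooms $i$ therefore cannot prevent. The only route to true utility $1$ is for the greedy to select a truly-liked triple $(i,j,r)$, and I would argue a misreport cannot engineer this: raising the score of a disliked triple can only pull $i$ into a utility-$0$ match (that selection consumes $i$), never help, while lowering the score of a liked triple only delays or forfeits it. Moreover, whenever a truly-liked triple $(i,j,r)$ fails to be chosen under honesty, the triple that pre-empts it is either a non-$i$ triple (score fixed) or the disliked triple that ends up matching $i$, whose score equals $j$'s fixed contribution; in either case $i$ can neither lower the pre-empting score nor raise the score of $(i,j,r)$ above its honest value, so it cannot win that comparison. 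Formalizing this as an induction over the greedy's selection order shows that the set of truly-liked triples still available when $i$ is matched, together with their scores, is no better under any misreport than under honesty, so $i$ stays at utility $0$. Combined with \cref{thm:greedyalg}, this gives a strategyproof \USW{$\sfrac{1}{3}$} mechanism.

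The main obstacle is the cascading nature of the greedy: a misreport reshuffles the order in which $i$'s own triples compete against the others, which in principle changes when partners and rooms are consumed. The crux — which must be argued carefully and with a fixed, report-independent tie-breaking rule — is that none of these reorderings can make a truly-liked triple of $i$ available-and-selected when it was not under honest reporting, precisely because $i$ can never raise a liked triple's score nor lower the score of any triple competing with it (such competitors either omit $i$ or draw their entire score from $i$'s honest partner).
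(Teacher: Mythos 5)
Your proposal is correct and rests on exactly the same argument as the paper's proof: under binary Leontief utilities an agent's contribution to a triple's score is already maximal ($1$) on truly-liked triples and minimal ($0$) on disliked ones, so a misreport can only lower the former, raise the latter, and leave non-containing triples untouched, which together with deterministic tie-breaking means no truly-liked triple can be won that was not won honestly. The paper states this in three sentences where you additionally spell out the coupling of the two greedy executions; that extra detail is a faithful elaboration rather than a different route.
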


\begin{proof}

\begin{sloppypar}
Fix an arbitrary roommate matching instance $I=\langle \N,\R, v,\widehat{v} \rangle$. Let $\mu$ be the matching returned by Algorithm \ref{alg:greedy} on $I$. Let $\mu^*$ be a social welfare maximizing roommate matching on $I$. We know that this algorithm always returns a \USW{$\sfrac{1}{3}$} roommate matching. We now prove that the mechanism is strategyproof for binary Leontief utilities.
\end{sloppypar}

Assume that the given instance $I=\langle \N,\R, v,\widehat{v} \rangle$ is one where all agents have binary Leontief utilities.
Analogous to \cref{obs:binleonlie}, agents cannot lie and increase the \USW{} of triples in which they truly get utility $1$. Similarly, an agent cannot decrease the \USW{} of a triple in which they truly get utility $0$. 
    Further, the mechanisms uses a deterministic tie-breaking rule across the triples. As a result, the agents cannot misreport their preferences and ensure that a triple in which they truly receive utility $1$ is selected by the mechanism.  
\end{proof}


\section{Concluding Remarks}

This paper initiated the study of strategyproofness and Leontief utilities in the setting of roommate matchings. We showed that maximizing social welfare is intractable and incompatible with strategyproofness when valuations are unrestricted. Surprisingly, there exists a strategyproof mechanism that maximizes welfare given binary Leontief utilities. Subsequently, we devise such an FPT parameterized mechanism.
We complemented these results by designing strategyproof mechanisms that achieve constant factor approximations on the social welfare for both additive and Leontief utilities.


Our work paves the way for further work in characterizing strategyproof roommate-room matching mechanisms. Future directions include finding better approximation algorithms for Leontief utilities and/or polynomial time strategyproof mechanisms with better \USW{} guarantees for binary valuations. Another interesting direction is characterizing the set of strategyproof mechanisms or weakening the strateyproofness requirement (e.g. restricting manipulation) to improve the welfare.

\section*{Acknowledgments}
This research was supported in part by NSF Awards IIS-2144413 and IIS-2107173.
We thank the anonymous reviewers for their helpful comments.



{\small 
\bibliographystyle{named} 
\bibliography{references.bib}}

\clearpage
\appendix
\section*{Appendix}

\section{Additional Related Work}\label{app:relwork}

A large variety of practical settings involve matching pairs or groups of agents. As a result, many different models of matchings have been studied over the years. These models differ in how many agents need to be matched, the type and format of the agents' preferences and the objectives being pursued. 

\paragraph{Matching Roommates to Rooms.} There has been some prior work on the same model of matching roommates to rooms with agents having preferences over both their roommate as well as their assigned room. The model was first studied by \citet{chan2016assignment}. They consider the problem of maximizing social welfare in this model under additive utilities. They show that this problem is NP-hard and give an algorithm that is \USW{$\sfrac{2}{3}$}. They also study two extensions of exchange stability for this setting called two-person stability (2PS) and four-person stability (4PS). Both these notions are different from the standard notion of stability. The paper also considers fairness where different notions of fairness are achieved by levying an appropriately high rent on agents being envied by other agents. 

\citet{gan2020fair} also consider fairness in roommate matchings for rooms of size 2 or more, however they do not study a model with rents. They show that for arbitrary room sizes, finding an envy-free roommate matching is NP-hard. They then propose a relaxation of envy-freeness called {\em Pareto envy-freeness}. They give an algorithm to find a pareto envy-free roommate matching for rooms of size 2 under binary and symmetric agent valuations. Other work on this model are extensions of the work of \citet{chan2016assignment}. \citet{huzhang2017online} study an online variant of the problem and give a linear competitive ratio. Meanwhile, \citet{li2019room} consider an extension with rooms of different sizes and each agent having a maximum budget for rent. None of these papers consider strategyproofness or Leontief utilities. 

Nicol{\'o} et al.~\cite{NICOLO2019104942} studied stategyproofness for core in a similar setting where pairs of agents are assigned to projects.

\paragraph{Land Allocation.}
A somewhat similar model looks at land allocation. Here, agents wish to buy plots of land, and have preferences over the different plots, but also have preferences over their neighbours. \citet{elkind2021keeping} look at the problem of maximizing social welfare, which turns out to NP-hard. They also look for strategy proof mechanisms which give approximations on the social welfare. \citet{aziz2021individually} follows up on this work showing considering strategyproofness along with other efficiency notions like pareto optimality and individual rationality. This model is very different from the standard models on matchings. 

\paragraph{House Allocation.} 
Also often called object allocation or the assignment problem, this is the simplest matching model where each agent receives one object/house.
Here, agents must be matched to houses and often, they come with a initial endowment. 
This model has been widely studied, with  a variety of objectives like fairness and economic efficiency through the lens of pareto optimality  and social welfare \citep{hylland1979efficient,abraham2004pareto,filos2014social,krysta2019size}.

Like our study, a significant amount of work looks at the coexistence of strategyproofness and economic efficiency \citep{svensson1999strategy,abdulkadirouglu2003school}. While their incompatibility is well known \citep{green1979incentives}, a seminal result by \citet{hylland1979efficient} overcomes this using ``payments" in the form of probabilities of assignment. \citet{krysta2019size} adapt the standard serial dictatorship mechanism  to be able to achieve approximate \USW{} and strategyproofness. 

\paragraph{Two-sided and Roommate Matchings.} The large majority of work on matchings looks at matching pairs of agents. A standard assumption is that these matchings are {\em two-sided}, i.e. agents are partitioned into two sets, and each pair in the matching must contain one agent from each set. The seminal paper by \citet{gale1962college} also considers such a setting. Typical work on two-sided matchings assumes ordinal preferences over agents, and as a result, social welfare considerations are only in the papers that consider some form of cardinal valuations \citep{bogomolnaia2004random,caragiannis2019stable,chen2021fractional,narang2020study}.

Strategyproofness and incentives to manipulate matching mechanisms have been well studied, beginning with the result of \citet{roth1982economics}. \citet{roth1982economics} showed that no stable matching mechanism can be strategyproof for all agents, but the deferred acceptance mechanism given by \citet{gale1962college} is strategyproof for the proposing side (but not hte accepting side). Following this work, a body of work has emerged, exploring the manipulation of deferred acceptance mechanism \citep{teo2001gale, hosseini2021accomplice, gokhale2024capacity}. The work on strategyproof matching mechanisms extends well beyond the deferred acceptance mechanism however \citep{narang2020study,gokhale2024capacity}. 

There is a significant amount of work that looks at what has traditionally been called roommate matchings. These are simply matching pairs of agents with no partition over the agents restricting the matches, or any explicit rooms or preferences over them. Quite a bit of work in this area has focused on stable matchings \citep{IRVING1985577,DBLP:conf/esa/0001R22,diamantoudi2004random,BreCheFinNie2020-spscSM-jaamas, gudmundsson2014stable}. Stable roommate matchings need not exist so much of this work has gone finding one when it exists \citep{IRVING1985577,gudmundsson2014stable} and characterizing the set of properties that ensure existence \citep{DBLP:conf/esa/0001R22,BreCheFinNie2020-spscSM-jaamas,gudmundsson2014stable}.

Strategyproof mechanisms have been studied often for roommate matchings \cite{sonmez1995generalized,berczi2022manipulating}.
A related model is of pairwise kidney exchange where strategyproofness has been well studied as well \citep{roth2005pairwise, ashlagi2015mix}. 

\paragraph{Three Dimensional Matchings and Hedonic Games.}

The work on roommate matchings typically looks at rooms of size two, that is, agents must be paired up. However, some work does look at rooms of larger sizes, known as multi-dimensional matching or fixed-sized coalition formation~\citep{knuth1976mariages,DG1980Hedonic}. When the room-size is at least three, the problem of finding a stable matching becomes NP-hard~\citep{iwama2007stable}. \cite{ostrovsky2014s} considered the question of finding an approximate solution for the case when room-size is three. Multi-dimensional matchings have been studied on restricted preference domains as well~\citep{lam2019existence,BreCheFinNie2020-spscSM-jaamas,DBLP:conf/esa/0001R22}. We refer to the textbook by \cite{Manlove2013} for a detailed overview. 

Hedonic games with fixed-size coalitions have been studied for other solution concepts such as strategyproofness~\citep{WV2015teamformation}, Pareto optimality~\citep{aziz2013pareto}, and exchange stability~\citep{BMM2022FixedSizeHedon}. 
The focus of the models in these areas is on preferences over agents and no preferences over rooms are given.

\paragraph{Leontief Utilities.}

Leontief utilities have been well studied in the EconCS community for a variety of models and objectives. They capture one extreme of aggregate preferences which complement a substitutionary approach modelled by additive utilities. 
\citet{nicolo2004efficiency} shows that  leontief utilities enable the existence of efficient and strategyproof mechanisms in exchange economies., which need not exist for arbitrary utility functions. A large body of work has considered Leontief utilities in context of  Nash equilibria and auctions. See \citet{nisan2007algorithmic} for a detailed overview.  

Leontief utilities help capture a Rawlsian view of fairness. To this end, they have been considered in the context of fair division \citep{li2013egalitarian, bogomolnaia2023guarantees} and fair resource allocation \citep{ghodsi2011dominant,parkes2015beyond, branzei2015characterization, dolev2012no}. In the setting of fixed-size coalitions and hedonic games, Leontief utilities have been studied under the name of W-preferences~\cite{CECHLAROVA2004333,aziz2012individual,aziz2013pareto}.





\section{Omitted Material from Section 3}\label{app:maxwelfare}
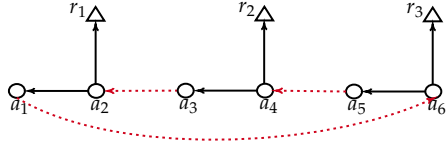
\begin{figure}[t]
    \centering
\tikzset{every picture/.style={line width=0.75pt}} 

\begin{tikzpicture}[x=0.75pt,y=0.75pt,yscale=-1,xscale=1]

\draw   (49.98,4.24) -- (54.41,11.21) -- (45.55,11.21) -- cycle ;
\draw [color={rgb, 255:red, 208; green, 2; blue, 27 }  ,draw opacity=1 ] [dash pattern={on 1pt off 1.5pt}]  (91.4,46.48) -- (56.32,46.63) ;
\draw [shift={(56,46.64)}, rotate = 359.76] [color={rgb, 255:red, 208; green, 2; blue, 27 }  ,draw opacity=1 ][line width=0.75]   (3,-1.5) .. controls (3,-0.7) and (1.7,-0.3) .. (0,0) .. controls (1.7,0.3) and (3,0.7) .. (3,1.5)  ;
\draw   (46.26,46.64) .. controls (46.26,48.56) and (48.07,50.12) .. (50.29,50.12) .. controls (52.51,50.12) and (54.32,48.56) .. (54.32,46.64) .. controls (54.32,44.71) and (52.51,43.15) .. (50.29,43.15) .. controls (48.07,43.15) and (46.26,44.71) .. (46.26,46.64) -- cycle ;
\draw   (6.26,46.64) .. controls (6.26,48.56) and (8.07,50.12) .. (10.29,50.12) .. controls (12.51,50.12) and (14.32,48.56) .. (14.32,46.64) .. controls (14.32,44.71) and (12.51,43.15) .. (10.29,43.15) .. controls (8.07,43.15) and (6.26,44.71) .. (6.26,46.64) -- cycle ;
\draw    (46.26,46.64) -- (16.32,46.64) ;
\draw [shift={(16,46.64)}, rotate = 360] [color={rgb, 255:red, 0; green, 0; blue, 0 }  ][line width=0.75]    (3,-1.5) .. controls (3,-0.7) and (1.7,-0.3) .. (0,0) .. controls (1.7,0.3) and (3,0.7) .. (3,1.5)  ;
\draw    (50,43.15) -- (50,12.67) ;
\draw [shift={(50,13)}, rotate = 89.88] [color={rgb, 255:red, 0; green, 0; blue, 0 }  ][line width=0.75]   (3,-1.5) .. controls (3,-0.7) and (1.7,-0.3) .. (0,0) .. controls (1.7,0.3) and (3,0.7) .. (3,1.5)  ;
\draw   (135.11,4.09) -- (139.54,11.06) -- (130.68,11.06) -- cycle ;
\draw   (131.4,46.48) .. controls (131.4,48.41) and (133.2,49.97) .. (135.42,49.97) .. controls (137.65,49.97) and (139.45,48.41) .. (139.45,46.48) .. controls (139.45,44.56) and (137.65,43) .. (135.42,43) .. controls (133.2,43) and (131.4,44.56) .. (131.4,46.48) -- cycle ;
\draw   (91.4,46.48) .. controls (91.4,48.41) and (93.2,49.97) .. (95.42,49.97) .. controls (97.65,49.97) and (99.45,48.41) .. (99.45,46.48) .. controls (99.45,44.56) and (97.65,43) .. (95.42,43) .. controls (93.2,43) and (91.4,44.56) .. (91.4,46.48) -- cycle ;
\draw    (131.4,46.48) -- (101.45,46.48) ;
\draw [shift={(101.5,46.48)}, rotate = 360] [color={rgb, 255:red, 0; green, 0; blue, 0 }  ][line width=0.75]    (3,-1.5) .. controls (3,-0.7) and (1.7,-0.3) .. (0,0) .. controls (1.7,0.3) and (3,0.7) .. (3,1.5)  ;
\draw    (135,43) -- (135,12.51) ;
\draw [shift={(135,12.5)}, rotate = 89.88] [color={rgb, 255:red, 0; green, 0; blue, 0 }  ][line width=0.75]    (3,-1.5) .. controls (3,-0.7) and (1.7,-0.3) .. (0,0) .. controls (1.7,0.3) and (3,0.7) .. (3,1.5)  ;
\draw   (220,4.53) -- (224.43,11.5) -- (215.57,11.5) -- cycle ;
\draw   (216.29,46.93) .. controls (216.29,48.85) and (218.09,50.41) .. (220.31,50.41) .. controls (222.54,50.41) and (224.34,48.85) .. (224.34,46.93) .. controls (224.34,45) and (222.54,43.44) .. (220.31,43.44) .. controls (218.09,43.44) and (216.29,45) .. (216.29,46.93) -- cycle ;
\draw   (176.29,46.93) .. controls (176.29,48.85) and (178.09,50.41) .. (180.31,50.41) .. controls (182.54,50.41) and (184.34,48.85) .. (184.34,46.93) .. controls (184.34,45) and (182.54,43.44) .. (180.31,43.44) .. controls (178.09,43.44) and (176.29,45) .. (176.29,46.93) -- cycle ;
\draw    (216.29,46.93) -- (186.34,46.93) ;
\draw [shift={(186,46.93)}, rotate = 360] [color={rgb, 255:red, 0; green, 0; blue, 0 }  ][line width=0.75]     (3,-1.5) .. controls (3,-0.7) and (1.7,-0.3) .. (0,0) .. controls (1.7,0.3) and (3,0.7) .. (3,1.5)  ;
\draw    (220,43.44) -- (220,12.96) ;
\draw [shift={(220,13)}, rotate = 89.88] [color={rgb, 255:red, 0; green, 0; blue, 0 }  ][line width=0.75]    (3,-1.5) .. controls (3,-0.7) and (1.7,-0.3) .. (0,0) .. controls (1.7,0.3) and (3,0.7) .. (3,1.5)  ;
\draw [color={rgb, 255:red, 208; green, 2; blue, 27 }  ,draw opacity=1 ] [dash pattern={on 1pt off 1.5pt}]  (176.29,46.93) -- (141.45,46.51) ;
\draw [shift={(141,46.48)}, rotate = 0.69] [color={rgb, 255:red, 208; green, 2; blue, 27 }  ,draw opacity=1 ][line width=0.75]   (3,-1.5) .. controls (3,-0.7) and (1.7,-0.3) .. (0,0) .. controls (1.7,0.3) and (3,0.7) .. (3,1.5)  ;
\draw [color={rgb, 255:red, 208; green, 2; blue, 27 }  ,draw opacity=1 ] [dash pattern={on 1pt off 1.5pt}]  (10.29,50.12) .. controls (59.3,79.9) and (163.69,76.28) .. (218.66,51.18) ;
\draw [shift={(220.31,50.41)}, rotate = 154.68] [color={rgb, 255:red, 208; green, 2; blue, 27 }  ,draw opacity=1 ][line width=0.75]   (3,-1.5) .. controls (3,-0.7) and (1.7,-0.3) .. (0,0) .. controls (1.7,0.3) and (3,0.7) .. (3,1.5)  ;;

\draw (5.53,49.44) node [anchor=north west][inner sep=0.75pt]  [font=\scriptsize]  {$a_{1} \ $};
\draw (45.98,49.4) node [anchor=north west][inner sep=0.75pt]  [font=\scriptsize]  {$a_{2}$};
\draw (35.33,0.53) node [anchor=north west][inner sep=0.75pt]  [font=\scriptsize]  {$r_{1}$};
\draw (90.67,49.29) node [anchor=north west][inner sep=0.75pt]  [font=\scriptsize]  {$a_{3} \ $};
\draw (131.12,49.25) node [anchor=north west][inner sep=0.75pt]  [font=\scriptsize]  {$a_{4}$};
\draw (120.47,0.15) node [anchor=north west][inner sep=0.75pt]  [font=\scriptsize]  {$r_{2}$};
\draw (175.56,49.73) node [anchor=north west][inner sep=0.75pt]  [font=\scriptsize]  {$a_{5} \ $};
\draw (216.01,49.69) node [anchor=north west][inner sep=0.75pt]  [font=\scriptsize]  {$a_{6}$};
\draw (205.36,0.82) node [anchor=north west][inner sep=0.75pt]  [font=\scriptsize]  {$r_{3}$};

\end{tikzpicture}

    \caption{An instance showing an example where the \USW{$\sfrac{2}{3}$ algorithm given by \citet{chan2016assignment} (for additive utilities) is \USW{0} under Leontief utilities.}}
    \label{fig:ChanCounter}
\end{figure}

In this section, we provide the proofs omitted from \cref{sec:swapprox}. %

\paragraph{Counter Example.} We first show an example in \Cref{fig:ChanCounter} where the algorithm by \citet{chan2016assignment} which is \USW{$\sfrac{2}{3}$} for additive utilities is \USW{0} for Leontief. Here an optimal matching under both Leontief and additive utilities is shown by black solid lines, where agents $a_2$, $a_4$ and $a_6$ get utility $1$ each and the other agents get utility $0$. The algorithm by \citet{chan2016assignment} proceeds by finding an agent-agent maximum weight matching and the an agent-room max weight matching and combining them. Here one possible agent-agent max weight matching is shown by the red dashed edges. 

There is only one agent-room max weight matching shown by solid edges. When these two are combined, it produces a matching which is optimal for additive utilities, but gives all agents utility $0$ under Leontief. As a result, the algorithm by \citet{chan2016assignment} cannot be used for the Leontief setting. 

We now show provide the omitted proofs for our intractability results.

\subsection{Intractability Results}\label{app:red}

We now prove the inapproximability of maximizing social welfare under binary symmetric Leontief utilities. This proof is similar to that of $3$DM. In fact, the ideas we present in this reduction can be applied to the reduction by \citet{kann1991maximum} to show that this problem is in fact MaxSNP-hard. 

\swhardl*
\begin{proof}
Given a Max3SAT instance, i.e., a $3$-CNF formula with $\enn$ variables $\{x_1, x_2, \dots, x_{\enn}\}$ and $\emm$ clauses $\{c_1, c_2, \dots, c_{\emm}\}$, we  construct an instance of {\sc Roommate Matching} with Leontief utilities where valuations are binary and symmetric, and each matched triple can give social welfare at most $1$.  For each clause $c_j$, we shall create a simple gadget with one agent and one room. For each variable $x_i$, we shall create a variable gadget with $3d_i$ agents and $d_i$ rooms, where $d_i$ is the number of clauses $x_i$ appears in.\\

\noindent {\bf Clause gadget.} For each clause $c_j$, $j\in [\emm]$, create a clause agent $c_j$ and a clause room $r_j$. Agent $c_j$ has value $1$ for room $r_j$. \\

\noindent {\bf Variable gadget.} For each variable $x_i$, $i\in [\enn]$, we create a total of $3d_i$ agents of two types, namely, $a_i$-agents, $a_{i1}, a_{i2},\dots, a_{id_i}$, and $b_i$-agents, $b_{i1},b_{i2},\dots, b_{i2d_i}$, and we create $d_i$ variable-rooms $r_{i1},\dots, r_{id_i}$. Of these $b_i$ agents, agents $b_{i1}, b_{i3}, \dots, b_{i2d_i-1}$ are called odd agents and agents $b_{i2}, b_{i4}, \dots, b_{i2d_i}$ are called even agents. We set the preferences inside the variable gadget as follows: for each $t \in {d_i}$,  $\widehat{v}_{b_{i2t-1}}(r_{it})=\widehat{v}_{b_{i2t}}(r_{it})=1$. Further, for each $t\in [d_i -1]$ $v_{b_{i2t}}(a_{it})=v_{b_{i2t+1}}(a_{it})=1$, with $v_{b_{i2d_i}}(a_{id_i})=v_{b_{i1}}(a_{id_i})=1$.\\

\noindent {\bf Connection between clause and variable gadgets.}  Agents $b_{i2t-1}$ and $b_{i2t}$ correspond to the occurrence of variable $x_i$ in clause $c_t$. If $x_i$ appears as a positive literal in $c_t$, then $b_{i2t-1}$ will have value for the clause agent $c_t$, if it appears as a negative literal then $b_{i2t}$ will have value $1$ for the clause agent $c_t$. Else, if it does not appear in clause $c_t$, then all agents in the variable gadget have value $0$ for the clause agent and room.

All other values are $0$. In particular, no agent in the variable gadget has value for a clause room.  An example of such a construction is given in Figure \ref{fig:lsred} for the formula $(x_1 \vee \Bar{x_2})(\Bar{x_1}\vee x_2)$.   All valuations are symmetric.  This ends our construction. We show that a given instance of Max$3$SAT has at least $k$ clauses simultaneously satisfiable if and only if there exists a matching with social welfare at least $3\emm + k$ in the constructed {\sc Roommate Matching} instance. 

Suppose there exists a matching $\mu$ with social welfare at least $3\emm + k$. We begin by observing some properties of a social welfare maximizing matching. By the construction, there is no two agents in the instance who like each other and like the same room. Therefore, each triple can give social welfare at most one. Therefore, since there are $\emm +\sum_{i}d_i=4\emm$ rooms, each matched triple adds one to the social welfare. In particular, each variable gadget adds $\emm$ to the social welfare and each clause gadget adds one.
    
We analyse the triples in a variable gadget and show that there are two possible matchings such that each triple give social welfare at least one. Recall that $\USW{}(\mu)\geq (\enn+1)\emm$.
    
    \begin{observation}\label{obs:nphLs-odd-even}
    In the variable gadget of $x_i$, either all odd $b_i$-agents $b_{i1},b_{i3},\dots, b_{i2d_i-1}$ have utility one or the even $b_i$-agents $b_{i2},b_{i4},\dots, b_{i2d_i}$ have utility one in $\mu$.
    \end{observation}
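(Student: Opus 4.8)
The plan is to analyze a single variable gadget in isolation and exploit its cyclic structure. First I would record the two facts that make the gadget rigid. In the gadget of $x_i$ the only agents that can receive positive utility are the $b_i$-agents: each $a_i$-agent has value $0$ for every room, so under Leontief its utility is always $0$. Hence a $b$-agent obtains utility $1$ only when it is simultaneously matched to a room it values and to an agent it values. The rooms it values are exactly the variable-rooms of its own gadget, and room $r_{it}$ is valued by precisely the two agents $b_{i2t-1}$ and $b_{i2t}$. Since each room sits in exactly one triple, at most one agent of the pair $\{b_{i2t-1},b_{i2t}\}$ can be served by $r_{it}$; summing over $t$ this caps the gadget's contribution at $d_i$. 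I would then invoke the welfare lower bound on $\mu$ together with the matching upper bounds (each variable gadget contributes at most $d_i$ and each clause gadget at most $1$) to conclude that the gadget must be \emph{saturated}, i.e.\ for every $t$ exactly one of $b_{i2t-1},b_{i2t}$ receives utility $1$.

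Next I would turn the saturation constraint into a cyclic constraint on the $a_i$-agents. The agent $a_{it}$ is valued by the even agent $b_{i2t}$ of pair $t$ and by the odd agent $b_{i2t+1}$ of pair $t+1$ (indices taken cyclically, with $a_{id_i}$ valued by $b_{i2d_i}$ and $b_{i1}$). I observe that selecting the even agent of pair $t$ \emph{and} the odd agent of pair $t+1$ would both require the single agent $a_{it}$, which is impossible. Thus, as we traverse the pairs $1,2,\dots,d_i$ cyclically, the transition ``even then odd'' is forbidden. From this one forbidden transition the dichotomy follows: if pair $t$ is chosen ``even'', then pair $t+1$ cannot be ``odd'' and so is also ``even'', so ``even'' propagates all the way around the cycle; hence either no pair is chosen even (all odd agents are served) or every pair is chosen even (all even agents are served). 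Equivalently, viewing the $b$-agents as the edges of the even cycle $C_{2d_i}$ whose vertices are the $d_i$ rooms and $d_i$ $a$-agents, a saturating choice is a perfect matching of $C_{2d_i}$, and an even cycle has exactly two perfect matchings---the odd edges and the even edges. This yields precisely the claimed all-odd / all-even conclusion of \cref{obs:nphLs-odd-even}.

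The step I expect to be the main obstacle is justifying that each utility-$1$ $b$-agent is paired with its $a$-agent rather than with a clause agent. A $b$-agent that corresponds to a literal also values the clause agent $c_t$, so it could in principle obtain utility $1$ inside its variable room $r_{it}$ by taking $c_t$ as a roommate; this would relax the $a_i$-agent conflicts used above and could permit ``mixed'' configurations, breaking the dichotomy. I would rule this out with an optimality/exchange argument: because no variable agent values a clause room, such a pairing leaves $c_t$ with utility $0$, so it is never strictly better than rerouting the $b$-agent to its $a$-agent and releasing $c_t$ to its own clause gadget. Making this precise may require tracing a short alternating chain (when the relevant $a$-agent is itself occupied), but in a welfare-maximizing matching it shows we may assume every served $b$-agent uses its $a$-agent, after which the cyclic argument of the previous paragraph applies verbatim and completes the proof.
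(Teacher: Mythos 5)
Your counting step is exactly the paper's: the $a_i$-agents are Leontief-zero, a $b$-agent can score only inside its own variable room, and room $r_{it}$ can serve at most one of $\{b_{i2t-1},b_{i2t}\}$ since that pair do not value each other, capping the gadget's contribution at $d_i$. Where you go beyond the paper is in justifying the dichotomy itself: the paper's proof simply asserts ``these would be either the odd $b_i$-agents or the even $b_i$-agents,'' whereas you supply the actual reason --- each served $b$-agent, \emph{if} it is matched with its $a$-agent, covers one room-vertex and one $a$-vertex of the cycle $C_{2d_i}$, a saturating selection is a perfect matching of that even cycle, and an even cycle has exactly two perfect matchings. That is the correct argument and it is genuinely absent from the paper.

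The obstacle you flag at the end is real, and it is where both your proof and the paper's are incomplete; in fact the difficulty is slightly broader than you state. The threshold $3\emm+k$ sits $\emm-k$ units below the absolute maximum welfare $4\emm$ (one per room), so when $k<\emm$ --- the regime that matters for a Max3SAT reduction --- a matching can meet the threshold while \emph{(i)} leaving some pair $\{b_{i2t-1},b_{i2t}\}$ entirely unserved, or \emph{(ii)} serving a literal agent via the triple $(b,c_t,r_{it})$, wasting $c_t$. Either event already falsifies the all-odd/all-even conclusion for the given $\mu$, so neither your saturation step (``invoke the welfare lower bound\dots to conclude the gadget is saturated'') nor the exclusion of clause-agent roommates actually follows from the hypothesis on $\mu$; they only follow when the welfare equals $4\emm$, i.e.\ in the exactly-satisfiable case. (The paper has the same problem: its claim that ``each matched triple adds one to the social welfare'' needs welfare exactly $4\emm$.) Your proposed repair --- reroute such a $b$-agent to its $a$-agent and release $c_t$, tracing an alternating chain when the $a$-agent is occupied --- is the right kind of fix, but note that it establishes the observation only for a \emph{normalized} matching, not for $\mu$ as given, and \cref{obs:nphLs-odd-even} is later applied to $\mu$ itself when the truth assignment is read off. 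So you should either restate the observation as a without-loss-of-generality normalization and carry out the chain argument in full, or redo the welfare accounting so that each violation of the dichotomy is charged against the $\emm-k$ slack. As it stands you have located a genuine gap in the paper's proof rather than merely in your own write-up, but your sketch does not yet close it.
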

    \begin{proof}
    Note that any agent of the type $a_{i}$ can never get a value of $1$, as it has value zero for each room for each $i \in [\enn]$. Thus, only the clause agents $c_j$ and each agent of the type $b_{is}$ can get a value of $1$, with $i\in [\enn],\, j\in [\emm]$ and $s\in [2d_i]$. 
    If an $b_i$-agent is not matched to a room within the variable gadget of $x_i$, then it cannot get a value of $1$. Thus, as the number of such rooms are half the number of $b_i$-agents, and no two $b_i$-agents like each other, at most half of the agents among $b_{i1},\dots, b_{i2d_i}$ can get value $1$. These would be either the odd $b_i$-agents $b_{i1},b_{i3},\dots, b_{i2d_i-1}$ or the even $b_i$-agents $b_{i2},b_{i4},\dots, b_{i2d_i}$. Either of these two would give a social welfare of $d_i$. 
    \end{proof}
    
    In the example in \cref{fig:lsred}, either the orange edges can be matched or the blue ones can be matched but a blue and an orange edge cannot be matched simultaneously to give maximum social welfare. We demonstrate this in \cref{fig:lsred}. The edges in blue and yellow show these two possible solutions.  Therefore, the total utility of the variable-agents is at most $d_i$ in each variable gadget, giving a total of at most $\sum_i d_i=3\emm$ from the $\enn$ variable gadgets. Hence, each clause agent must have utility one. 

    Next, we analyse the matching of the clause agents. 
    Recall that each clause agent $c_j$ gets value $1$ if and only if it is matched to $r_j$ and an agent of the type $b_{i2j-1}$ or $b_{i2j}$.
    We show the clause agents are matched consistently. 
    \begin{observation}\label{obs:nphLs-clause-odd-even}
        Let $x_i$ be a variable that appears in two clauses $c_t$ and $c_{t'}$. If the two clause agents $c_t$ and $c_{t'}$ are matched to $b_i$-agents, then either both $c_t$ and $c_{t'}$, are matched to odd $b_i$-agents, or both are matched to even $b_i$ agents.
    \end{observation}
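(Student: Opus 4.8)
The plan is to derive this consistency observation directly from the preceding \cref{obs:nphLs-odd-even}, exploiting the fact that any $b_i$-agent matched to a clause agent is forced to take a \emph{clause} room and therefore earns Leontief utility zero. Throughout I would fix a maximum-welfare matching $\mu$, where (as argued immediately before the statement) every clause agent that contributes to the welfare is matched to its own clause room together with a $b_i$-agent that values it.

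First I would pin down the triple containing each clause agent. Since $c_t$ achieves utility one only when paired with its clause room $r_t$ and a $b_i$-agent it values, that triple must be $(c_t, b, r_t)$ with $b \in \{b_{i,2t-1}, b_{i,2t}\}$ (the odd agent if $x_i$ occurs positively in $c_t$, the even agent if it occurs negatively), and analogously $c_{t'}$ must sit in $(c_{t'}, b', r_{t'})$ with $b' \in \{b_{i,2t'-1}, b_{i,2t'}\}$. This is just unpacking the clause--variable connection of the construction.

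The second step is the crux. Because no agent inside a variable gadget values any clause room, we have $\widehat{v}_b(r_t) = 0$, so the Leontief utility of $b$ under $\mu$ is $\min(v_b(c_t), \widehat{v}_b(r_t)) = 0$, and likewise $b'$ has utility zero. Now I would invoke \cref{obs:nphLs-odd-even}: in $\mu$ either all odd or all even $b_i$-agents have utility one. Combining this with a counting argument---a $b_i$-agent can attain utility one only through one of the $d_i$ variable-rooms, while each parity class already contains $d_i$ members---shows that exactly one parity class is entirely at utility one and, consequently, every member of the complementary class is at utility zero. Since $b$ and $b'$ both have utility zero they must both lie in that complementary class, i.e.\ they share the same parity, which is precisely the claim.

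I expect the main obstacle to be the bookkeeping that upgrades \cref{obs:nphLs-odd-even} from its ``at least one parity class is saturated'' form to the sharper ``exactly one class is saturated, the complementary class entirely unsatisfied'' statement that the argument needs; this is where the room-counting bound ($d_i$ variable-rooms against $2d_i$ $b_i$-agents, no two of which value each other) has to be made precise. A secondary point to handle carefully is justifying that, in a maximum-welfare $\mu$, a clause agent matched to a $b_i$-agent genuinely uses its clause room rather than a variable room---otherwise the paired $b_i$-agent could itself collect utility one---so that the ``utility zero'' conclusion for $b$ and $b'$ holds unconditionally.
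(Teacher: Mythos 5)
Your proposal is correct and takes essentially the same route as the paper: both arguments rest on the fact that a $b_i$-agent paired with a clause agent occupies a clause room it does not value and therefore has Leontief utility zero, combined with \cref{obs:nphLs-odd-even}. The only difference is presentational --- the paper argues by contradiction (mixed parities would leave one odd and one even $b_i$-agent at utility zero, dropping the gadget's contribution below $d_i$ and violating the welfare bound), while you read the same conclusion off directly from the saturated parity class; the substance, including the side condition you rightly flag about the clause agent actually sitting in its clause room, is identical.
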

    \begin{proof}
     If agents $b_{i2t-1}$ and $b_{i2t'}$ are matched to clause agents $c_t$ and $c_{t'}$, respectively, then, there is at least one odd $b_i$-agent and one even $b_i$-agent who have utility zero. Then, using \Cref{obs:nphLs-odd-even}, the social welfare generated from the matchings within the variable gadget of $x_i$ is less than $d_i$, contradicting the fact that the social welfare is $3\emm+k$.
    \end{proof}

    We demonstrate this in \cref{fig:lsred} by edges in green show clause gadget edges that are always matched in an optimal solution. 
    
    Given an optimal roommate matching with social welfare $3\emm +k$, we construct an assignment satisfying $k$ clauses for the $3$SAT instance as follows. For each variable $x_i$, if an odd $b_i$-agent is matched to a clause agent, then set $x_i$ to true. If an even $b_i$ agent is matched to a clause agent, set $x_i$ to false. From \Cref{obs:nphLs-clause-odd-even}, it directly follows that the assignment is well defined. Since each clause is matched to some $b_i$-agent, then each clause is satisfied in the above assignment.
    
    
    
    Analogously, when the given formula has $k$ clauses satisfiable, we can construct a matching where each of the clause agents (of the clauses satisfied) gets value $1$ by matching the clause agent $c_j$ to the clause room $r_j$ and to $b_i$-agent that correspond to the true literal that satisfy the clause. If variable $x_i$ is true (or false), then within the variable gadget, the even $b_i$s (resp. odd $b_i$s) are matched to variable rooms and the $a_i$s they like. The matching gives social welfare $\emm$ within each variable gadget. Hence, the social welfare is $(\enn+1)\emm$.  Thus, the given $3$SAT formula has $k$ clauses satisfiable if and only if the constructed instance has a matching which gives social welfare $3\emm+k$.

    \paragraph{Approximation Preservation.} We now show that the optimal welfare of the constructed instance is upper bounded by a constant multiple of the maximum number of clauses satisfiable. For a given $3$SAT formula, at least half the clauses can always be satisfied. We use this to upper bound the maximum possible social welfare. Let $k^*$ be the maximum number of satisfiable clauses for the given $3$SAT formula.As at least half the clauses are always satisfiable, $\emm/2\leq k^*$. The maximum possible social welfare of the constructed instance is:

    \begin{align*}
        \sum_{i\in [\enn]}d_i + k^* &= 3\emm +k^*\\
                                    &\leq 7k^*
    \end{align*}    

    Hence, a PTAS for maximizing social welfare would imply a a PTAS for solving Max$3$SAT.    
\end{proof}

\begin{figure}[t]
    \centering
   
\tikzset{every picture/.style={line width=0.85pt}} 

\begin{tikzpicture}[x=0.75pt,y=0.75pt,yscale=-1,xscale=1]

\draw   (16.94,34) .. controls (16.94,31.4) and (19.1,29.28) .. (21.78,29.28) .. controls (24.45,29.28) and (26.62,31.4) .. (26.62,34) .. controls (26.62,36.61) and (24.45,38.72) .. (21.78,38.72) .. controls (19.1,38.72) and (16.94,36.61) .. (16.94,34) -- cycle ;
\draw   (76.76,93.48) .. controls (76.76,90.88) and (78.93,88.77) .. (81.6,88.77) .. controls (84.27,88.77) and (86.44,90.88) .. (86.44,93.48) .. controls (86.44,96.09) and (84.27,98.2) .. (81.6,98.2) .. controls (78.93,98.2) and (76.76,96.09) .. (76.76,93.48) -- cycle ;
\draw   (76.24,63.99) .. controls (76.24,61.38) and (78.41,59.27) .. (81.08,59.27) .. controls (83.76,59.27) and (85.92,61.38) .. (85.92,63.99) .. controls (85.92,66.6) and (83.76,68.71) .. (81.08,68.71) .. controls (78.41,68.71) and (76.24,66.6) .. (76.24,63.99) -- cycle ;
\draw   (76.24,34) .. controls (76.24,31.4) and (78.41,29.28) .. (81.08,29.28) .. controls (83.76,29.28) and (85.92,31.4) .. (85.92,34) .. controls (85.92,36.61) and (83.76,38.72) .. (81.08,38.72) .. controls (78.41,38.72) and (76.24,36.61) .. (76.24,34) -- cycle ;
\draw   (17.2,93.48) .. controls (17.2,90.88) and (19.36,88.77) .. (22.04,88.77) .. controls (24.71,88.77) and (26.88,90.88) .. (26.88,93.48) .. controls (26.88,96.09) and (24.71,98.2) .. (22.04,98.2) .. controls (19.36,98.2) and (17.2,96.09) .. (17.2,93.48) -- cycle ;
\draw   (16.94,63.99) .. controls (16.94,61.38) and (19.1,59.27) .. (21.78,59.27) .. controls (24.45,59.27) and (26.62,61.38) .. (26.62,63.99) .. controls (26.62,66.6) and (24.45,68.71) .. (21.78,68.71) .. controls (19.1,68.71) and (16.94,66.6) .. (16.94,63.99) -- cycle ;
\draw   (50.84,30.27) -- (55.44,38.36) -- (46.23,38.36) -- cycle ;
\draw   (50.84,89.24) -- (55.44,97.33) -- (46.23,97.33) -- cycle ;
\draw [color={rgb, 255:red, 245; green, 166; blue, 35 }  ,draw opacity=1 ]   (21.78,38.72) -- (21.78,59.27) ;
\draw [color={rgb, 255:red, 245; green, 166; blue, 35 }  ,draw opacity=1 ]   (85.92,34) -- (107.95,34) ;
\draw [color={rgb, 255:red, 245; green, 166; blue, 35 }  ,draw opacity=1 ]   (27,34) -- (48,34) ;
\draw [color={rgb, 255:red, 74; green, 144; blue, 226 }  ,draw opacity=1 ]   (76.24,34) -- (54,34) ;
\draw [color={rgb, 255:red, 74; green, 144; blue, 226 }  ,draw opacity=1 ]   (81,38.72) -- (81,59.27) ;
\draw [color={rgb, 255:red, 245; green, 166; blue, 35 }  ,draw opacity=1 ]   (81,68.71) -- (81,88.77) ;
\draw [color={rgb, 255:red, 245; green, 166; blue, 35 }  ,draw opacity=1 ]   (76.3,93) -- (54,93) ;
\draw [color={rgb, 255:red, 74; green, 144; blue, 226 }  ,draw opacity=1 ]   (26.88,93) -- (48,93) ;
\draw [color={rgb, 255:red, 74; green, 144; blue, 226 }  ,draw opacity=1 ]   (22,68.71) -- (22,88.77) ;
\draw   (107.95,34) .. controls (107.95,31.4) and (110.12,29.28) .. (112.79,29.28) .. controls (115.46,29.28) and (117.63,31.4) .. (117.63,34) .. controls (117.63,36.61) and (115.46,38.72) .. (112.79,38.72) .. controls (110.12,38.72) and (107.95,36.61) .. (107.95,34) -- cycle ;
\draw   (112.57,60) -- (117.17,68.09) -- (107.96,68.09) -- cycle ;
\draw [color={rgb, 255:red, 126; green, 211; blue, 33 }  ,draw opacity=1 ]   (112.57,39) -- (112.57,60) ;
\draw [color={rgb, 255:red, 74; green, 144; blue, 226 }  ,draw opacity=1 ]   (117.63,34) -- (167.52,34) ;
\draw   (167.52,34) .. controls (167.52,31.4) and (169.69,29.28) .. (172.36,29.28) .. controls (175.04,29.28) and (177.2,31.4) .. (177.2,34) .. controls (177.2,36.61) and (175.04,38.72) .. (172.36,38.72) .. controls (169.69,38.72) and (167.52,36.61) .. (167.52,34) -- cycle ;
\draw   (229.35,93.48) .. controls (229.35,90.88) and (231.52,88.77) .. (234.19,88.77) .. controls (236.86,88.77) and (239.03,90.88) .. (239.03,93.48) .. controls (239.03,96.09) and (236.86,98.2) .. (234.19,98.2) .. controls (231.52,98.2) and (229.35,96.09) .. (229.35,93.48) -- cycle ;
\draw   (228.83,63.99) .. controls (228.83,61.38) and (231,59.27) .. (233.67,59.27) .. controls (236.34,59.27) and (238.51,61.38) .. (238.51,63.99) .. controls (238.51,66.6) and (236.34,68.71) .. (233.67,68.71) .. controls (231,68.71) and (228.83,66.6) .. (228.83,63.99) -- cycle ;
\draw   (228.83,34) .. controls (228.83,31.4) and (231,29.28) .. (233.67,29.28) .. controls (236.34,29.28) and (238.51,31.4) .. (238.51,34) .. controls (238.51,36.61) and (236.34,38.72) .. (233.67,38.72) .. controls (231,38.72) and (228.83,36.61) .. (228.83,34) -- cycle ;
\draw   (167.78,93.48) .. controls (167.78,90.88) and (169.95,88.77) .. (172.62,88.77) .. controls (175.3,88.77) and (177.46,90.88) .. (177.46,93.48) .. controls (177.46,96.09) and (175.3,98.2) .. (172.62,98.2) .. controls (169.95,98.2) and (167.78,96.09) .. (167.78,93.48) -- cycle ;
\draw   (167.52,63.99) .. controls (167.52,61.38) and (169.69,59.27) .. (172.36,59.27) .. controls (175.04,59.27) and (177.2,61.38) .. (177.2,63.99) .. controls (177.2,66.6) and (175.04,68.71) .. (172.36,68.71) .. controls (169.69,68.71) and (167.52,66.6) .. (167.52,63.99) -- cycle ;
\draw   (202.42,30.27) -- (207.03,38.36) -- (197.82,38.36) -- cycle ;
\draw   (202.42,89.24) -- (207.03,97.33) -- (197.82,97.33) -- cycle ;
\draw [color={rgb, 255:red, 245; green, 166; blue, 35 }  ,draw opacity=1 ]   (172.36,38.72) -- (172.36,59.27) ;
\draw [color={rgb, 255:red, 245; green, 166; blue, 35 }  ,draw opacity=1 ]   (167.78,93.48) -- (146.95,93.48) ;
\draw [color={rgb, 255:red, 245; green, 166; blue, 35 }  ,draw opacity=1 ]   (177.2,34) -- (199.42,34) ;
\draw [color={rgb, 255:red, 74; green, 144; blue, 226 }  ,draw opacity=1 ]   (228.83,34) -- (205,34) ;
\draw [color={rgb, 255:red, 74; green, 144; blue, 226 }  ,draw opacity=1 ]   (233.67,38.72) -- (233.67,59.27) ;
\draw [color={rgb, 255:red, 245; green, 166; blue, 35 }  ,draw opacity=1 ]   (233.67,68.71) -- (233.67,88.77) ;
\draw [color={rgb, 255:red, 245; green, 166; blue, 35 }  ,draw opacity=1 ]   (229.35,93) -- (205.8,93) ;
\draw [color={rgb, 255:red, 74; green, 144; blue, 226 }  ,draw opacity=1 ]   (177.46,93) -- (199.42,93) ;
\draw [color={rgb, 255:red, 74; green, 144; blue, 226 }  ,draw opacity=1 ]   (172.36,68.71) -- (172.62,88.77) ;
\draw   (137.27,93.48) .. controls (137.27,90.88) and (139.43,88.77) .. (142.11,88.77) .. controls (144.78,88.77) and (146.95,90.88) .. (146.95,93.48) .. controls (146.95,96.09) and (144.78,98.2) .. (142.11,98.2) .. controls (139.43,98.2) and (137.27,96.09) .. (137.27,93.48) -- cycle ;
\draw   (141.98,60) -- (146.58,68.09) -- (137.37,68.09) -- cycle ;
\draw [color={rgb, 255:red, 126; green, 211; blue, 33 }  ,draw opacity=1 ]   (142,68.33) -- (142,88.77) ;
\draw [color={rgb, 255:red, 74; green, 144; blue, 226 }  ,draw opacity=1 ]   (86.44,93.48) -- (137.27,93.48) ;
\draw  [color={rgb, 255:red, 100; green, 100; blue, 100 }  ,draw opacity=1 ][dash pattern={on 1pt off 1.5pt}] (15.5,27) -- (89.67,27) -- (89.67,102.33) -- (15.5,102.33) -- cycle ;
\draw  [color={rgb, 255:red, 100; green, 100; blue, 100 }  ,draw opacity=1 ][dash pattern={on 1pt off 1.5pt}] (104.97,27) -- (119,27) -- (119,71) -- (104.97,71) -- cycle ;
\draw  [color={rgb, 255:red, 100; green, 100; blue, 100 }  ,draw opacity=1 ][dash pattern={on 1pt off 1.5pt}] (165.6,27) -- (239.8,27) -- (239.8,101) -- (165.6,101) -- cycle ;
\draw  [color={rgb, 255:red, 100; green, 100; blue, 100 }  ,draw opacity=1 ][dash pattern={on 1pt off 1.5pt}] (135.2,57.36) -- (149.23,57.36) -- (149.23,101) -- (135.2,101) -- cycle ;

\draw (-1,29) node [anchor=north west][inner sep=0.75pt]  [font=\scriptsize]  {$b_{11}$};
\draw (195,105) node [anchor=north west][inner sep=0.75pt]  [font=\scriptsize]  {$r_{22}$};
\draw (241,88) node [anchor=north west][inner sep=0.75pt]  [font=\scriptsize]  {$b_{23}$};
\draw (241,60) node [anchor=north west][inner sep=0.75pt]  [font=\scriptsize]  {$a_{22}$};
\draw (241,29) node [anchor=north west][inner sep=0.75pt]  [font=\scriptsize]  {$b_{22}$};
\draw (196,16) node [anchor=north west][inner sep=0.75pt]  [font=\scriptsize]  {$r_{21}$};
\draw (178.94,60) node [anchor=north west][inner sep=0.75pt]  [font=\scriptsize]  {$a_{21}$};
\draw (164.42,14) node [anchor=north west][inner sep=0.75pt]  [font=\scriptsize]  {$b_{21}$};
\draw (165.63,103) node [anchor=north west][inner sep=0.75pt]  [font=\scriptsize]  {$b_{24}$};
\draw (136.29,105) node [anchor=north west][inner sep=0.75pt]  [font=\scriptsize]  {$c_{2}$};
\draw (133.73,44.82) node [anchor=north west][inner sep=0.75pt]  [font=\scriptsize]  {$r_{c_{2}}$};
\draw (107,72) node [anchor=north west][inner sep=0.75pt]  [font=\scriptsize]  {$r_{c_{1}}$};
\draw (105.77,16) node [anchor=north west][inner sep=0.75pt]  [font=\scriptsize]  {$c_{1}$};
\draw (73.7,14) node [anchor=north west][inner sep=0.75pt]  [font=\scriptsize]  {$b_{12}$};
\draw (75.51,103) node [anchor=north west][inner sep=0.75pt]  [font=\scriptsize]  {$b_{13}$};
\draw (56,60) node [anchor=north west][inner sep=0.75pt]  [font=\scriptsize]  {$a_{12}$};
\draw (43.16,16) node [anchor=north west][inner sep=0.75pt]  [font=\scriptsize]  {$r_{11}$};
\draw (44.67,105) node [anchor=north west][inner sep=0.75pt]  [font=\scriptsize]  {$r_{12}$};
\draw (-1,88) node [anchor=north west][inner sep=0.75pt]  [font=\scriptsize]  {$b_{14}$};
\draw (-1,60) node [anchor=north west][inner sep=0.75pt]  [font=\scriptsize]  {$a_{11}$};
\draw (14.17,115) node [anchor=north west][inner sep=0.75pt]  [font=\tiny]  {Variable Gadget for  $x_{1}$};
\draw (170.57,2) node [anchor=north west][inner sep=0.75pt]  [font=\tiny]  {Variable Gadget for $x_{2}$};
\draw (73.59,2) node [anchor=north west][inner sep=0.75pt]  [font=\tiny]  {Clause Gadget for  $c_{1}$};
\draw (123.19,115) node [anchor=north west][inner sep=0.75pt]  [font=\tiny]  {Clause Gadget for $ c_{2}$};

\end{tikzpicture}

    \caption{Example of reduction for $(\Bar{x}_1 \vee x_2)(x_1 \vee \Bar{x}_2)$}
    \label{fig:lsred}
\end{figure}

\section{Omitted Material from Section 4}

In this section, we provide an extensive discussion on adapting serial dictatorship to the roommate matching setting and highlight some natural ways to do this that fail to provide any guarantee on the social welfare.

\subsection{Manipulating \USW{} mechanisms}\label{app:manipulationexamples}

Our main result shows that social welfare and strategyproofness can co-exist under binary Leontief utilities. That is, there exist strategyproof mechanisms that always return a maximum welfare roommate matching. However, this is not trivial to achieve. 
We now present a maximum welfare mechanism which is not strategyproof for binary Leontief utilities. 

Consider the following parity based mechanism: given the reported instance $I$, compute the set of all maximum \USW{} roommate matchings: $\USW{}(I)$. Now if the number of matchings in $\USW{}(I)$ is odd, choose a matching from $\USW{}(I)$ where $i$ gets a utility of $1$ (under the reported valuations), else choose  a matching where $i$ gets utility $0$. Now clearly, if the set $\USW{}(I)$ has an even number of matchings, and $i$ has at least two matchings where $I$ gets a utility of $1$, $i$ can misreport and make the set of maximum \USW{} matchings under the reported instance have an odd number of matchings. As a result, the mechanism described is not strategyproof.

\subsection{Serial Dictatorship in Roommate Matchings} \label{app:sdessay}

Serial Dictatorship has been used to provide strategyproof mechanisms for a wide variety of models and applications \citep{elkind2021keeping,chan2016assignment}. In the roommate matching setting, a serial dictatorship mechanism proceeds as follows: An arbitrary ordering is chosen over the agents. In each iteration the next unmatched agent (according to the ordering) is called and they pick their favourite unmatched room and agent.  Agents have no control on when their turn is. When an agent's turn comes, they get to pick their favourite remaining bundle. Consequently, there is no incentive to lie under this mechanism.  

Prior work in the roommate matching setting has studied serial dictatorship only in the context of (exchange) stability \citep{chan2016assignment}. However, there exist bad picking sequences where serial dictatorship can conclude by matchings that are arbitrarily far from the optimal \USW{}. As a result, the typical serial dictatorship approach needs modification in order to guarantee a non-trivial approximation on \USW{}.

\paragraph{Naive Serial Dictatorship.}
The bottleneck to the standard serial dictatorship approach is due to agents who get no utility in any roommate matching under both additive and Leontief. Agents who have no value for any agent or room, can pick other agents who do, causing them to receive value zero in the matching. For example, consider the instance in \cref{fig:binaddSP} with an extra edge $(a_1,r_2)$, that is, $v_1(a_2) = \widehat{v}_1(r_2) =1$, $v_2(a_1)=v_2(a_3)= \widehat{v}_2(r_1) =1$, and all other values are zero. Suppose that the ordering of the agents is $a_4, a_3, a_2, a_1$ in the SD mechanism and $a_4$ chooses agent $a_2$ and room $r_2$. Then the roommate matching obtained is $\mu=\{(a_1,a_3,r_1),(a_2, a_4, r_2)\}$ and $\USW{}(\mu=0)$ under both additive and Leontief utilities. Whereas, the matching  $\{(a_2, a_3, r_1),(a_1,a_4,r_2)\}$ has \USW{} equal to $3$ and $1$ under additive and Leontief utilities, respectively.

\subsubsection{Unsuccessful Modifications} In order to prevent agents who are indifferent across all the available alternatives, we may consider different ways of changing the picking sequence. As it is not possible to get any non-trivial approximation to \USW{} for unrestricted additive utilities, we restrict our discussion to binary additive utilities. Analogous problems arise for binary Leontief utilities as well.

\paragraph{Approach 1.} A simple solution may be to order agents in decreasing order of how many rooms and agents they like. While this would improve the \USW{} bound (to \USW{$\sfrac{1}{6}$}), the mechanism would no longer be strategyproof. Agents prefer to pick first and would have an incentive to lie and say that they like more agents/rooms than they actually do.  In fact many approaches that picks agents based on their preferences are similarly not strategyproof for agents who are not indifferent to all (remaining) alternatives. 

This includes not just the ordering over agents, but also picking partners who like the same room. That is, an approach where the agent picking their partner chooses (from the set of unmatched agents they like) the agent who likes the same room as them, will not be strategyproof. Consider an agent $i$ to be the agent picking and $i$ likes agent $j$ and agent $k$. Now, agent $j$ likes $i$ and  knows that they have no value for any remaining agents apart from $i$, can lie and say they also like the same room as $i$ in order to be picked.
As a result, prioritizing agents for their preferences can often lead to incentives to misreport. 

\paragraph{Approach 2.}A crude modification to the standard serial dictatorship approach could simply pick an ordering over the agents who have {\em some} value, that is, they like at least one agent or room. After these agents have picked, the remaining agents are matched arbitrarily. This approach would not lead to a constant factor bound on the \USW{}.

Consider an instance with $6k+2$ agents and $3k+1$ rooms. Suppose the agents can be partitioned into four sets $A$, $B$, $C$ and $D$, i.e. $\N=A\cup B\cup C \cup D$. Now $A$ contains $2k$ agents labelled $a_{11},a_{12},a_{21},a_{22},\cdots, a_{k1},a_{k2}$. Similarly, $B$ contains  $2k$ agents as well, labelled $b_{11}, b_{12}, b_{21}, b_{22},\cdots, b_{k1},b_{k2}$ and $C$ also contains $2k$ agents labelled $c_{11}, c_{12}, c_{21}, c_{22},\cdots, c_{k1},c_{k2}$.  The set $D$ has the two remaining agents $d_1$ and $d_2$. We shall define the values for the rooms and agents so that a maximum welfare matching will matching $a_{i1}$ and $a_{i2}$ (as well as for the $b_i$s and $c_i$s) to the same room for each $i\in [k]$. In particular, for each $i\in [k]$, agent $b_{i1}$ only likes  $b_{i2}$ out of all the agents and vice versa. Similarly, for each $i\in [k]$, $c_{i1}$ and $c_{i2}$ only like each other. Further, agents $d_1$ and $d_2$ only like each other. Agents in $A$ have no value for any agent. 

We shall label the rooms according to who has value for them. We shall have one room $r^*$ and all the remaining $3k$ rooms will be labelled $r_{11}, r_{12}, r_{13}, r_{21},\cdots, r_{k1}, r_{k2},r_{k3}$. All agents in $A$ and $D$ have value $1$ for the room $r^*$ and value $0$ for all other rooms. For each $i\in [k]$ agent $b_{i1}$ has value $1$ for room $r_{i2}$ and agent $c_{i1}$ has value $1$ for room $r_{i3}$. All other room values for agents in $B$ and $C$ are $0$. 

Hence, it is easy to see that one maximum \USW{} matching will be $\mu=\{(d_1,d_2,r^*)\}$ $\bigcup_{i\in [k]}\{(a_{i1},a_{i2},r_{i1})\}$ $ \bigcup_{i\in [k]} \{(b_{i1},b_{i2},r_{i2})\}$ $ \bigcup_{i\in [k]}\{(c_{i1},c_{i2},r_{i3})\}.$ 
The \USW{} of $\mu$ is $6k+3$.

Now consider the picking sequence where $d_1$ picks first and then the agents in $A$. Agent $d_1$ will naturally choose the room $r^*$ and the agent $d_2$. This leaves all agents in $A$ with no remaining value. Now for each $i$ we can have agent $a_{i1}$ pick room $r_{i2}$ and agent $c_{i1}$ and agent $a_{i2}$ can pick room $r_{i3}$ and agent $b_{i1}$. These choices ensure that only the agents $d_1$ and $d_2$ get utility $2$ and $1$ respectively, and all other agents get utility $0$. For each $x>0$, there is a value of $k$ s.t. $\frac{1}{2k+1}<x$. As a result, this mechanism cannot guarantee any constant factor approximation on the \USW{}. A trivial bound that this does guarantee is \USW{$\frac{1}{n}$}.

This clearly shows that agents who have no value remaining must be discarded, which is exactly what the L/T Maximal Serial Dictatorship (\cref{alg:LT-alg}) and the Welfare-Prioritized Serial Dictatorship  mechanisms (\cref{alg:SDstar}) do. We now present the omitted proofs of its approximation on \USW{}.


\subsubsection{Serial Dictatorship for Binary Leontief Utilities}\label{app:sdleon}

We now demonstrate that the L/T Maximal Serial Dictatorship mechanism is effective in being strategyproof and proves to be \USW{$\sfrac{1}{6}$} for binary Leontief utilities,

\paragraph{Overview of the Mechanism.} The L/T Maximal Serial Dictatorship mechanism  picks an arbitrary ordering of agents and sequentially ``calls" them to pick their favourite room and agent pair out of those still unmatched. The only caveat is that if an agent has value $0$ for all remaining rooms and all remaining agents, they are not allowed to choose and set aside. Once all agents have been called, all the agents that were set aside are now arbitrarily assigned to the remaining rooms. 

We now show that an L/T maximal serial dictatorship mechanism proves to be strategyproof and \USW{$\sfrac{1}{6}$}.

\sdleon*

\begin{proof}
    The proof follows by showing that under binary Leontief utilities \cref{alg:LT-alg} is an L/T-maximal matching and implements a serial dictatorship mechanism.
     \paragraph{Welfare.}
     To prove the approximation bound of $\sfrac{1}{6}$, we show that \cref{alg:LT-alg} results in an L/T-maximal. Thus, by \cref{thm:LT-maximal}, any L/T-maximal matching is $\USW{\alpha}$ for $\alpha \geq \tfrac{1}{6}$ under binary Leontief utilities.
     Suppose for contradiction that at the end of the for loop, the matching is not L/T-maximal. This suggests that there exists at least one agent $i$ who can select a pair $(j, r)$ from remaining items, which contradicts the steps of the priority ordering in the for loop.

     \paragraph{Strategyproofness.}  
    Given a fixed priority ordering, at each step an unmatched agent $i$ picks an arbitrary agent-room pair $(j, r)$ among remaining only if $v_i(j) = 1$ and $\widehat{v}_{i}(r) = 1$.
    By \cref{obs:binleonlie} for binary Leontief utilities, agents cannot lie and increase the \USW{} of triples in which they truly get utility $1$. Similarly, an agent cannot decrease the \USW{} of a triple in which they truly get utility $0$. 
    Thus, no agent can benefit from misreporting its valuations.     
\end{proof}

\subsubsection{Serial Dictatorship for Binary Additive Utilities}\label{app:sdadd}
We now turn to additive utilities. As evident from \cref{thm:no-alpha-approx}, the $\USW{\sfrac{2}{3}}$ 
for additive utilities is not strategyproof. 
The approximation algorithms we have discussed so far and the \USW{$\sfrac{2}{3}$} in \cite{chan2016assignment} for maximizing social welfare fail to be strategyproof for additive utilities, even for the case of binary valuations. 
We now  demonstrate an extension of  the L/T maximal serial dictatorship mechanism (\cref{alg:LT-alg}) which we call Welfare-Prioritized Serial Dictatorship  (\cref{alg:SDstar}) which  is strategyproof and guarantees $\USW{\sfrac{1}{7}}$ under binary additive utilities. This works exactly the same as the L/T maximal mechanism. We call it by a different name only because Ls and Ts are not an exhaustive way of describing the varius possible triples for binary additive utilities.  

  
\begin{algorithm}[t]
  \KwIn{ Roommate Matching instance $ \langle \N,\R, v, \widehat{v} \rangle$ with binary valuations}
  \KwOut{A matching $\mu$}
  $\mu \gets \emptyset$\;
  Pick an arbitrary ordering on agents $\sigma$\;
  $P \gets \emptyset$ \Comment{set of agents with no value remaining}\;
  $Q\gets \N \cup \R$ \Comment{set of unmatched agents and rooms}\;
  \While{ $Q\cap \N \neq \emptyset$}{
    agent $i \in \N$ is the next agent in $\sigma$\;
    \eIf{$i$ has zero value for the remaining (agent,room) pairs }{
         $P\gets P \cup \{i\}$,
         $Q \gets Q \setminus \{i\}$\;
    }{
    
        Select $j_i \in \argmax _{j\in (\N\cap Q)} v_i(j)$\Comment{For tie braking, give preference to next agent on $\sigma$}\;
        Select $r_i \in \argmax _{r\in (\R\cap Q)} \widehat{v}_i(r)$\;
        $\mu \gets \mu \cup (i,j_i,r_i)$\;
        $Q\gets Q \setminus\{i,j_i,r_i\}$\;
    }
    }
    Match remaining unmatched agents in $P$ and unmatched rooms in $Q$ arbitrarily\;
    
   \caption{Welfare-Prioritized Serial Dictatorship} \label{alg:SDstar}
\end{algorithm}


 \sdstar*

\begin{proof}

We begin by proving that the mechanism is strategyproof.
    Observe that in the Welfare-Prioritized Serial Dictatorship mechanism, the only difference from the standard serial dictatorship mechanism is that it does not let agents who have no remaining value to choose. These agents are essentially indifferent about any feasible matching now possible, thus, there is no incentive for them to lie. For any other case, the incentive compatibility is inherited from that of the standard serial dictatorship mechanism.
    
    \begin{sloppypar}
    We now prove the guarantee on the \USW{}. Fix an arbitrary roommate matching instance $I=\langle N,R, \{\widehat{v}_i\}_{i\in N}, \{v_i\}_{i\in N} \rangle$. Let $\mu$ be the matching returned by \cref{alg:SDstar} on $I$. Let $\mu^*$ be a social welfare maximizing roommate matching on $I$.\end{sloppypar}

    Let $i$ be the first agent to choose their room $r_i$ and partner $j_i$. We need to bound the maximum social welfare that $i$'s choices can destroy in $\mu$ in comparison to $\mu^*$. With binary additive utilities, each agent can receive a value of $0$, $1$ or $2$ in any matching. Firstly, observe that $i$ will receive at least as much value in $\mu$ as they will in $\mu^*$. Thus, in the worst case, the values received by $i$ under the two matchings must be equal while $i$ pick an agent and a room that $i$ was not matched to under $\mu^*$.

    As in the proof of Lemma \ref{lem:sdleon}, beyond $i$ and $j_i$, at most four other agents can lose all possible value that they receive in $\mu^*$ by $i$'s choices. However, unlike the case of Leontief utilities, for additive utilities, the agents can still receive some value from one of their matched room or partner, even if not both. Thus, in the worst case, of the available agents and rooms after $i$ picking $j_i$ and $r_i$, these agents have no value left. If they were to, it would only improve the approximation achieved. Thus, they can only get value from $j_i$ or $r_i$, and thus, in total, these four agents can have a combined social welfare of $4$ in $\mu^*$.

    Now $j_i$ can lose a value of up to $2$ by being matched to $i$ and $r_i$. Thus, under $\mu^*$, these five agent can get a total value of $6$. In the worst case, $i$ receives a value of $1$ in both $\mu^*$ and $\mu$. Analogously, we can show that for any agent who picks their room and partner, they may destroy a social welfare of $6$, while getting only a value of $1$ for themselves. Hence, Welfare-Prioritized Serial Dictatorship (\cref{alg:SDstar}) gives a matching that achieves an approximation of $\sfrac{1}{7}$ on the maximum social welfare. 
    
\end{proof}

If the agent's compatibility values are symmetric, then the approximation bound is in fact better. 

\begin{corollary}\label{cor:sdstarsymm}
    When agent preferences are symmetric \cref{alg:SDstar} is \USW{$\sfrac{1}{6}$} for instances with binary additive utilities.
\end{corollary}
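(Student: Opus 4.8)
The plan is to reuse both halves of \cref{thm:sdadd} and to sharpen only the welfare count under symmetry. Strategyproofness comes essentially for free: \cref{alg:SDstar} is the very same algorithm, and since it is strategyproof over all binary additive profiles, restricting each agent's admissible misreports to the symmetric binary functions only shrinks the set of available deviations, so no profitable deviation can appear. I would therefore spend essentially all the effort on improving the approximation ratio from $\sfrac{1}{7}$ to $\sfrac{1}{6}$.

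For the welfare bound I would run the same charging argument as in \cref{thm:sdadd}: process the triples in the order the mechanism commits them, and for each picker triple $(i,j_i,r_i)$ charge to it the $\mu^*$-welfare it consumes, namely $i$'s own (preserved) value, the $\mu^*$-value of $j_i$, the compatibility value that the $\mu^*$-partners of $i$ and $j_i$ lose when deprived of $i$ and $j_i$, and the room value that the two $\mu^*$-occupants of $r_i$ lose when deprived of $r_i$. The baseline accounting gives generated $\ge 1$ (the picker always earns at least $1$) against consumed $\le 7$: one unit preserved by $i$, two units lost by $j_i$, and four units lost by the four disrupted agents.

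The symmetry refinement then splits into two cases. \emph{Case A:} $v_i(j_i)=1$. Then $v_{j_i}(i)=v_i(j_i)=1$, so $j_i$ also earns value $1$ in $\mu$, the triple generates at least $2$ against consumed $\le 7$, and the per-triple ratio is at least $\sfrac{2}{7}>\sfrac{1}{6}$. \emph{Case B} (the binding case): $v_i(j_i)=0$, so $i$ values no remaining agent and earns its value solely from $r_i$. Let $a$ be the $\mu^*$-partner of $i$. If $a$ is still available when $i$ picks, then $v_i(a)=0$, hence $v_a(i)=v_i(a)=0$, so $a$ loses no compatibility value from being separated from $i$; if $a$ has already been matched, its $\mu^*$-welfare was charged to an earlier triple. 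Either way the contribution of $a$ to the charge of $(i,j_i,r_i)$ drops from $1$ to $0$, lowering the consumed welfare from $7$ to $6$ while the triple still generates at least $1$, for a ratio of $\sfrac{1}{6}$. Summing the per-triple inequalities over all committed triples then yields $\USW{}(\mu)\ge \sfrac{1}{6}\,\USW{}(\mu^*)$.

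The main obstacle is making the charging scheme airtight rather than informal. I would need to verify that every unit of $\USW{}(\mu^*)$ is charged to exactly one committed triple, paying particular attention to (i) agents that the mechanism places in $P$, whose entire $\mu^*$-value must derive from resources consumed earlier and is therefore charged before their turn; (ii) later pickers, for whom ``$i$ earns at least its remaining-optimal value'' replaces the global-optimality statement used for the first picker; and (iii) the fallback in Case B when $a$ is unavailable, which is precisely where the ``charged to an earlier triple'' bookkeeping must be invoked. A final routine check is that the four disrupted agents may be assumed distinct from one another and from $i,j_i$, since any coincidence only reduces the consumed welfare and improves the ratio.
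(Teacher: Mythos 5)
Your proposal is correct and follows essentially the same route as the paper: the same two-case split on whether the picker values its chosen partner, using symmetry to get a $\sfrac{2}{7}$ per-triple ratio in the first case and to zero out the charge from the picker's $\mu^*$-partner in the second, yielding $\sfrac{1}{6}$. Your extra bookkeeping (handling the already-matched $\mu^*$-partner and confirming strategyproofness is inherited) only fills in details the paper leaves implicit.
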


If the values are symmetric however, we can get a better bound. Recall that we say that agent values are symmetric if for all $i,j \in N$, $v_i(j)=v_j(i)$. In this case, if agent $i$ has value $1$ for $j_i$, then $j_i$ must also have value $1$ for agent $i$, thus, in this case, the worst approximation bound for these six agents would be $\sfrac{2}{7}$. 
    
Now if $i$ does not have value for $j_i$,  $i$ must not have value for any other agent either. In particular, $i$ must not have value for their partner under $\mu^*$. Let this agent be $j$. Consequently, $j$ also has no value for $i$. Further, recall that in the worst case, $j$ does not have value for any other room or agent that $j$ is not matched to under $\mu^*$. Thus, $j$ must receive no value under $\mu^*$. As a result, $i$'s choice can destroy a value of $2$ for $j_i$ and at most $3$ for the remaining agents. Consequently, \cref{alg:SDstar} gives a matching that achieves an approximation of $\sfrac{1}{6}$ on the optimal social welfare under binary symmetric additive utilities.

\end{document}